\documentclass[11pt]{article}
\newcommand{\compileACM}{0}

\usepackage{ifthen}
\usepackage[margin=1in,letterpaper]{geometry}

\ifthenelse{\equal{\compileACM}{0}}{
\usepackage{amsmath,amssymb, amsthm}
}
{ 
\usepackage{etex}
}

\usepackage{enumitem}
\setlist{nolistsep}

\usepackage{xcolor}
\definecolor{darkblue}{rgb}{0,0,0.5}
\definecolor{darkgreen}{rgb}{0,0.5,0}
\usepackage[colorlinks=true,linkcolor=darkblue,urlcolor=darkblue,citecolor=darkgreen]{hyperref}
\usepackage{graphics,graphicx}
\usepackage{subfigure}
\usepackage{mathrsfs} 
\usepackage{bbm} 
\usepackage{fullpage}
\usepackage{complexity}
\usepackage{float}
\usepackage{nicefrac}
\usepackage{xspace}
\usepackage{microtype}

\usepackage{tikz}
\usetikzlibrary{calc,decorations.pathreplacing}

\usepackage{tabularx}
\usepackage{multirow}

\usepackage{appendix}

\ifthenelse{\equal{\compileACM}{1}}{
\usepackage{environ}
\NewEnviron{killcontents}{}

}
{ 
}

\ifthenelse{\equal{\compileACM}{1}}{
\newcommand{\lightparagraph}[1]{{\vspace{0.25em} \noindent \emph{#1.}}}
}
{
\newcommand{\lightparagraph}[1]{\paragraph{#1.}}
}

 \floatstyle{ruled}
\newfloat{protocol}{tph}{lop}
\floatname{protocol}{Protocol}
\newfloat{scheme}{tph}{lop}
\floatname{scheme}{Scheme}
\newfloat{functionality}{htp}{lop}
\floatname{functionality}{Functionality}
\newfloat{simulator}{tph}{lop}
\floatname{simulator}{Simulator}
\newfloat{experiment}{tph}{lop}
\floatname{experiment}{Experiment}


\usepackage{xy}
\xyoption{matrix}
\xyoption{frame}
\xyoption{arrow}
\xyoption{arc}

\usepackage{ifpdf}
\ifpdf
\else
\PackageWarningNoLine{Qcircuit}{Qcircuit is loading in Postscript mode.  The Xy-pic options ps and dvips will be loaded.  If you wish to use other Postscript drivers for Xy-pic, you must modify the code in Qcircuit.tex}
\xyoption{ps}
\xyoption{dvips}
\fi

\entrymodifiers={!C\entrybox}

\newcommand{\qw}[1][-1]{\ar @{-} [0,#1]}
\newcommand{\qwx}[1][-1]{\ar @{-} [#1,0]}
\newcommand{\cw}[1][-1]{\ar @{=} [0,#1]}
\newcommand{\cwx}[1][-1]{\ar @{=} [#1,0]}
\newcommand{\gate}[1]{*+<.6em>{#1} \POS ="i","i"+UR;"i"+UL **\dir{-};"i"+DL **\dir{-};"i"+DR **\dir{-};"i"+UR **\dir{-},"i" \qw}
\newcommand{\meter}{*=<1.8em,1.4em>{\xy ="j","j"-<.778em,.322em>;{"j"+<.778em,-.322em> \ellipse ur,_{}},"j"-<0em,.4em>;p+<.5em,.9em> **\dir{-},"j"+<2.2em,2.2em>*{},"j"-<2.2em,2.2em>*{} \endxy} \POS ="i","i"+UR;"i"+UL **\dir{-};"i"+DL **\dir{-};"i"+DR **\dir{-};"i"+UR **\dir{-},"i" \qw}

\newcommand{\control}{*!<0em,.025em>-=-<.2em>{\bullet}}

\newcommand{\ctrl}[1]{\control \qwx[#1] \qw}

\newcommand{\targ}{*+<.02em,.02em>{\xy ="i","i"-<.39em,0em>;"i"+<.39em,0em> **\dir{-}, "i"-<0em,.39em>;"i"+<0em,.39em> **\dir{-},"i"*\xycircle<.4em>{} \endxy} \qw}

\newcommand{\multigate}[2]{*+<1em,.9em>{\hphantom{#2}} \POS [0,0]="i",[0,0].[#1,0]="e",!C *{#2},"e"+UR;"e"+UL **\dir{-};"e"+DL **\dir{-};"e"+DR **\dir{-};"e"+UR **\dir{-},"i" \qw}

\newcommand{\rstick}[1]{*!L!<-.5em,0em>=<0em>{#1}}
\newcommand{\lstick}[1]{*!R!<.5em,0em>=<0em>{#1}}
\newcommand{\ustick}[1]{*!D!<0em,-.5em>=<0em>{#1}}

\newcommand{\Qcircuit}{\xymatrix @*=<0em>}

\newcommand{\pureghost}[1]{*+<1em,.9em>{\hphantom{#1}}}

\newcommand{\cgate}[1]{*+<.6em>{#1} \POS ="i","i"+UR;"i"+UL **\dir{-};"i"+DL **\dir{-};"i"+DR **\dir{-};"i"+UR **\dir{-},"i" \cw}
\newcommand{\multicgate}[2]{*+<1em,.9em>{\hphantom{#2}} \POS [0,0]="i",[0,0].[#1,0]="e",!C *{#2},"e"+UR;"e"+UL **\dir{-};"e"+DL **\dir{-};"e"+DR **\dir{-};"e"+UR **\dir{-},"i" \cw}


\newcommand{\getsr}{{\:\stackrel{{\scriptscriptstyle\hspace{0.2em}\$}}{\leftarrow}\:}}

\providecommand{\Pr}[1]{\text{Pr}\[#1\]}

\newcommand{\qmeter}{\mathrel{\ooalign{\hss/\hss\cr$\frown$}}}

\newcommand{\cA}{\mathcal{A}}

\newcommand{\cF}{\mathcal{F}}
\newcommand{\cG}{\mathcal{G}}

\newcommand{\cZ}{\mathcal{Z}}
\newcommand{\sX}{\mathsf{X}}

\newcommand{\sZ}{\mathsf{Z}}
\newcommand{\sA}{\mathsf{A}}
\newcommand{\sB}{\mathsf{B}}

\newcommand{\sP}{\mathsf{P}}
\newcommand{\sR}{\mathsf{R}}

\newcommand{\cFunc}[2]{{\cF_{#1}^{\text{\upshape #2}}}}

\newcommand{\etal}{\emph{et~al.}\xspace}

\newcommand{\defeq}{\stackrel{\smash{\textnormal{\tiny def}}}{=}}
\newcommand{\trans}{{\scriptstyle\mathsf{T}}}

\newcommand{\enc}{{\operatorname{enc}}}
\newcommand{\simu}{{\operatorname{sim}}}
\newcommand{\dec}{{\operatorname{dec}}}

\newtheorem{theorem}{Theorem}

\newtheorem{lemma}[theorem]{Lemma}
\newtheorem{corollary}{Corollary}[theorem]
\newtheorem{proposition}[theorem]{Proposition}

\newtheorem{defn}[theorem]{Definition}
\newenvironment{definition}{\begin{defn}}{\qed\end{defn}}

\newtheorem{conjecture}[theorem]{Conjecture}

\newcommand{\pa}[1]{(#1)}
\newcommand{\Pa}[1]{\left(#1\right)}

\newcommand{\set}[1]{\{#1\}}

\newcommand{\bra}[1]{\langle#1|}
\newcommand{\ket}[1]{|#1\rangle}

\DeclareMathOperator{\trace}{Tr}
\newcommand{\ptr}[2]{\trace_{#1}\pa{#2}}
\newcommand{\Ptr}[2]{\trace_{#1}\Pa{#2}}

\newcommand{\tinyspace}{\mspace{1mu}}
\newcommand{\abs}[1]{|\tinyspace#1\tinyspace|}

\newcommand{\norm}[1]{\lVert\tinyspace#1\tinyspace\rVert}

\newcommand{\tnorm}[1]{\norm{#1}_{\trace}}

\newcommand{\identity}{\mathbbm{1}}
\newcommand{\idsup}[1]{\identity_{#1}}

\newcommand{\ol}[1]{{\overline{#1}}}
\def\ot{\otimes}

\newcommand{\cnot}{\textsc{cnot}}

\newcommand{\iin}{{\mathsf{in}}}
\newcommand{\out}{{\mathsf{out}}}
\newcommand{\smm}{{\mathsf{sim}}}

\newcommand{\cont}[1]{c\textrm{-}{#1}}
\newcommand{\tildecont}[1]{c\textrm{-}\tilde{#1}}
\newcommand{\tildeconta}[1]{c\textrm{-}\tilde{#1}_a}

\def\cA{\mathcal{A}}
\def\cB{\mathcal{B}}

\def\cF{\mathcal{F}}
\def\cG{\mathcal{G}}

\def\cZ{\mathcal{Z}}

\def\bQ{\mathbf{Q}}

\def\sA{\mathsf{A}}
\def\sB{\mathsf{B}}
\def\sC{\mathsf{C}}
\def\sD{\mathsf{D}}
\def\sE{\mathsf{E}}

\def\sK{\mathsf{K}}

\def\sM{\mathsf{M}}

\def\sP{\mathsf{P}}

\def\sR{\mathsf{R}}
\def\sS{\mathsf{S}}

\def\sW{\mathsf{W}}
\def\sX{\mathsf{X}}

\def\sZ{\mathsf{Z}}

\title{\bf Quantum one-time programs}
 \author{
  Anne Broadbent$^\dagger$ $\qquad$ Gus Gutoski$^\ddagger$ $\qquad$ Douglas Stebila$^*$\\[3mm]
  {\small\it
  \begin{tabular}{c}
    {\large$^\dagger$}Institute for Quantum Computing and Department of Combinatorics and Optimization \\
    University of Waterloo,
    Waterloo, Ontario, Canada \\
    \href{mailto:albroadb@iqc.ca}{\tt albroadb@iqc.ca} \\ [1mm]
    {\large$^\ddagger$}Institute for Quantum Computing and School of Computer Science \\
    University of Waterloo,
    Waterloo, Ontario, Canada \\
    \href{mailto:gus.gutoski@uwaterloo.ca}{\tt gus.gutoski@uwaterloo.ca} \\ [1mm]
    {\large$^*$}School of Electrical Engineering and Computer Science, Science and Engineering Faculty\\
    Queensland University of Technology, Brisbane, Queensland, Australia \\
    \href{mailto:stebila@qut.edu.au}{\tt stebila@qut.edu.au}
  \end{tabular}
  }
}

\date{November 6, 2012}
\begin{document}
\maketitle

\begin{abstract}
One-time programs are modelled after a black box that allows a single evaluation of a function, and then self-destructs. Because software can, in principle, be copied, general one-time programs  exists only in the hardware token model: it has been shown that any function admits a one-time program as long as we assume access to physical devices called \emph{one-time memories}. Quantum information, with its well-known property of \emph{no-cloning}, would, at first glance, prevent the basic copying attack for classical programs. We show that this intuition is false: one-time programs for both classical and quantum maps, based solely on quantum information, do not exist, even with computational assumptions. We complement this strong impossibility proof  by an equally strong possibility result: assuming the same basic one-time memories as used for classical one-time programs, we show that every quantum map has a quantum one-time program that is secure in the universal composability framework. Our construction relies on a new, simpler quantum authentication scheme and corresponding mechanism for computing on authenticated data.
\end{abstract}

\ifthenelse{\equal{\compileACM}{0}}{
\newpage
\setcounter{tocdepth}{2}
\tableofcontents
\newpage
}{}


\section{Introduction}
\label{sec:intro}

A one-time program for a function $f$, as introduced by Goldwasser, Rothblum and Kalai \cite{GKR08}, is a cryptographic primitive by which a receiver may evaluate $f$ on only one input, chosen by the receiver at run time: no efficient adversary, after evaluating the one-time program on $x$, should be able to learn anything about $f(y)$ for any $y\neq x$ beyond what can be inferred from $f(x)$.
Secure one-time programs could have far reaching implications in software protection, digital rights management, and electronic cash or token schemes.  (For example, coins are a program that can only be run once, and thus cannot be double spent.)

One-time programs cannot be achieved by software alone, as any software can be copied and executed multiple times.
Thus, any hope of achieving any one-time property must necessarily rely on an additional assumptions such as secure hardware or interaction; in particular, computational assumptions alone will not suffice.

\subsection{Classical one-time programs from one-time memories}

Goldwasser~\etal showed how to construct a one-time program for any function~$f$ using a very basic hypothetical hardware device called a \emph{one-time memory (OTM)}.
Inspired by the interactive cryptographic primitive \emph{oblivious transfer}, each OTM stores two secret strings $(s_0, s_1)$.
A receiver requests one of these two strings by specifying a single-bit input $c\in\set{0,1}$.
The OTM reveals $s_c$ and then self-destructs: the other string $s_{\overline{c}}$ is lost forever.

One advantage of using OTMs as a building block is their simplicity: an OTM is an extremely basic device that does not perform any computation.
Using the simplest possible hardware device allows for easier scrutiny against potential hardware flaws such as side-channel attacks.
Moreover, the functionality of an OTM is independent of the program itself, and thus OTMs could be mass-produced for a variety of programs.
The use of tamper-proof hardware in cryptography is an old and recurring theme \cite{Smid81}, and OTMs in particular have lead to a recent revival in ascertaining what cryptographic primitives can be constructed using minimalistic hardware assumptions that could not otherwise be achieved.

\textbf{Non-interactive secure two-party computation.}
Goyal, Ishai, Sahai, Venkatesan, and Wadia \cite{GISVW10} improved on the work of Goldwasser~\etal~\cite{GKR08} in several ways.
First, they consider a more general primitive, which they call \emph{non-interactive secure two-party computation}, in which two parties wish to evaluate a publicly known function $f(x,y)$.
One party---the \emph{sender}---is given the input string $x$.
The sender uses $x$ to prepare a ``program'' $p(x)$ and sends this program to the \emph{receiver}.
The receiver wishes to use the program $p(x)$ in order to evaluate $f(x,y)$ for any input string $y$ of her choice.
Like one-time programs, after evaluating $f(x,y)$, no adversary should be able to learn anything about $f(x,y')$ for any $y'\neq y$ beyond what can be inferred from $f(x,y)$.
The one-time programs of Goldwasser~\etal
are recovered as a special case of this primitive by viewing the input~$x$ as a description of a function $g_x(y)$ and the publicly known function $f$ as a ``universal computer'' that produces $f(x,y)=g_x(y)$.
Non-interactive secure two-party computation is impossible in the plain model for the same reason that one-time programs are impossible in the plain model: software can always be copied.

Second, the one-time programs of Goldwasser~\etal are secure against a malicious receiver; the issue of a malicious sender does not arise in their setting.
By contrast, in the more general setting of non-interactive secure two-party computation one could also consider malicious \emph{senders}.
The protocol of Goyal \etal is secure against both a malicious receiver \emph{and} a malicious sender.

Third, Goldwasser \etal use OTMs for large strings, whereas Goyal \etal require OTMs for only \emph{single bits}.
Finally, Goldwasser \etal establish security against computationally bounded adversaries, whereas Goyal \etal establish statistical universally composable security. \looseness=-1

\textbf{Terminology:}
For brevity, we use the term \emph{one-time program (OTP)} synonymously with ``non-interactive secure two-party computation''.

\ifthenelse{\equal{\compileACM}{0}}{
\subsection{Impossibility of quantum one-time programs in the plain model}
}
{ 
\subsection{Impossibility of quantum one-time \\ programs in the plain model}
}

In contrast to ordinary classical information, quantum information cannot in general be copied: measurement is an irreversible, destructive process~\cite{wootters1982single}.
The no-cloning property of quantum information is credited for such classically impossible feats as quantum money \cite{AC12, MS10, wiesner1983conjugate}, quantum key distribution \cite{bennett1984quantum}, and quantum copy-protection \cite{Aaronson09}.
It is thus natural to ask if one-time programs can be added to this list of quantum cryptographic primitives: \emph{does quantum information allow for one-time programs without hardware assumptions?}
(When there are no hardware assumptions, we refer to this as the \emph{plain quantum model}.)

We observe in Section~\ref{sec:necessity} that the answer to this question is a strong \emph{no}: although quantum information cannot be copied, a quantum ``program state'' for $f$ can always be re-constructed by a reversible adversary after each use so as to obtain the evaluation of $f$ on multilple distinct inputs.
In particular, computational assumptions do not help to achieve quantum one-time programs.

\textbf{One-time programs for quantum channels.}
By analogy to classical functions acting on bits, one could also consider a one-time program for a quantum channel $\Phi:(\sA,\sB)\to\sC$ acting on multi-qubit registers $\sA$ (the sender's input), $\sB$ (the receiver's input), and $\sC$ (the receiver's output).
The security goal is similar in spirit to that for classical functions:
for each joint state $\rho$ of the input registers $(\sA,\sB)$ no adversary should be able to learn anything about $\Phi(\rho')$ for any state $\rho' \ne \rho$ beyond what can be inferred from $\Phi(\rho)$.

Here, again, our previous observation on the impossibility of quantum one-time programs in the plain model holds: if a quantum program allows the adversary to evaluate $\Phi(\rho)$ then that same program can be recovered by a reversible adversary in order to subsequently evaluate $\Phi(\rho')$ for any~$\rho'$ that can be obtained from~$\rho$ by a local operation on~$\sB$.

\ifthenelse{\equal{\compileACM}{0}}{
\subsection{Main result: quantum one-time programs from one-time memories}
}
{ 
\subsection{Main result: quantum one-time \\ programs from one-time memories}
}

Given that one-time programs do not exist for arbitrary quantum channels in the plain quantum model, and that one-time programs do exist for arbitrary classical functions in the OTM model, a natural question arises:  \emph{what additional assumptions are required to achieve one-time programs for quantum channels?}

Our main result (Theorem~\ref{thm:main}) is that any channel $\Phi$ can be compiled into a \emph{quantum one-time program (QOTP)}, which is a combination of quantum states and OTMs that allows a receiver to evaluate $\Phi$ exactly once.
In particular, (and perhaps surprisingly) single-bit \emph{classical} one-time memory devices suffice to establish \emph{quantum} one-time programs for arbitrary \emph{quantum} channels.
An informal version of Theorem~\ref{thm:main} is as follows:

\vspace{0.5em}
\noindent\textbf{Main theorem (informal).}
\emph{For each channel $\Phi$ specified by a quantum circuit, there is a non-interactive two-party protocol for the secure evaluation of $\Phi$, assuming classical one-time memory devices and an honest sender.
Moreover, this protocol is universally composable (UC-secure).}\vspace{0.5em}

We provide a fully rigorous proof of statistical universally composable (UC) security of our QOTPs.
The question of QOTPs that are also secure against a malicious sender is left for future work.
For simplicity, we restrict our attention to the case of  \emph{non-reactive} one-time quantum programs.
The more general scenario of being able to query a program a bounded number of times (including the case of an $n$-use program) may be implemented using standard techniques as is done in the classical case (see Section~\ref{sec:COTP}).
Most of the components of our QOTP for $\Phi$ are independent of the sender's input register $\sA$ and so can be compiled (and mass-produced) by the sender before he receives his input.
As a corollary of our main result we obtain the UC security of the protocol for \emph{delegated quantum computations} of Aharonov, Ben-Or and Eban~\cite{AharonovBE10}; see Section~\ref{sec:UC-sec-of-ABE10}.

\textbf{Summary of techniques.}
Our protocol employs a method for \emph{quantum computing on authenticated data (QCAD)}, which allows for the application of quantum gates to authenticated quantum data without knowing the authentication key.
We propose a new authentication scheme, called the \emph{trap scheme}, and show that it allows for QCAD (Section \ref{sec:techniques-Q-auth}).
Prior to our work, the only authentication scheme known to admit QCAD was the \emph{signed polynomial scheme} of Ben-Or, Cr\'epeau, Gottesman, Hassidim, and Smith \cite{Ben-OrC+06} (see also~\cite{AharonovBE10}).
We compare our trap scheme to the signed polynomial scheme in Section~\ref{sec:techniques-computing-on-auth}.
Recently, and independently of our work, it was shown by Dupuis, Nielsen, and Salvail~\cite{DNS12} that the \emph{Clifford authentication scheme} also admits QCAD.

In  methods for QCAD, universal quantum computation can only be performed if the receiver (who holds the authenticated data) is allowed to exchange classical messages with the sender (who knows the authentication key).
To keep our protocol non-interactive, all the classical interaction is encapsulated by a \emph{bounded, reactive classical one-time program (BR-OTP)} prepared by the sender.
(The existence of secure reactive one-time programs follows from the work of \cite{GISVW10} as described in Section~\ref{sec:COTP}.)
This program for the BR-OTP depends upon the authentication key chosen for the sender's input register, but \emph{not} on the contents of that register.
Thus, by selecting this key in advance, the BR-OTP can be prepared (or mass-produced) before the sender gets his input register.

In order to implement QCAD, the receiver's input must be authenticated prior to computation.
This is accomplished non-interactively by having the sender prepare a pair of registers in a special ``teleport-through-encode'' state, which is a maximally entangled state with an authentication operation applied to one of the two registers.
The authentication key is determined by the (classical) result of the Bell measurement used for teleportation.
The sender allows the receiver to non-interactively de-authenticate the output at the end of the computation by preparing another pair of registers in a ``teleport-through-decode'' state.
In order to successfully de-authenticate, the receiver's messages to the BR-OTP must be consistent with the secret authentication key held by the BR-OTP.
Otherwise, the BR-OTP simply declines to reveal the final decryption key for the receiver's output.

\subsection{Formalizing impossibility}

In preparing a formal proof that one-time programs do not exist in the plain model, one immediately encounters a pathological class of functions that do, in fact, admit classical one-time programs
in the plain model.
For example, consider the function $f(a,b)=a+b$.
A potential ``one-time'' program for $f$ has the sender simply reveal $a$ to the receiver.
Curiously, this is indeed a ``one-time'' program, because this behaviour can be simulated with one-shot access to the ideal functionality $f(a, \cdot)$: the query $f(a,0)$ reveals $a$, which is exactly the one-time program prepared by the sender.
Put another way: $f$ is a function for which there exists a one-time program in the plain model, but only because the one-time program reveals enough information that even a simulator with one-shot access to $f(a,\cdot)$ can gain all information required to compute the function.
This phenomenon is somewhat akin to trivially obfuscatable functions \cite{BGIRSVY01}.

An interesting question thus arises: \emph{what is the class of classical functions or quantum channels that admit one-time programs in the plain model?}
We provide a complete characterization of this class for classical functions.
In particular, we define a class of \emph{unlockable} functions consisting of those functions $f$ for which there exists at least one \emph{key input} $b_0$ such that for all $a$ the value $f(a,b)$ can be recovered from $f(a,b_0)$ for any desired $b$.
We prove the following.

\vspace{0.5em}
\noindent\textbf{Impossibility theorem for classical functions (informal).}
\emph{If $f$ is unlockable then $f$ admits a trivial classical one-time program in the plain model.
Conversely, if $f$ is not unlockable then $f$ does not admit a quantum one-time program in the plain quantum model.}\vspace{0.5em}

The situation for quantum channels is very interesting.
We propose two definitions for the unlockability of a channel, which we call \emph{weakly unlockable} and \emph{strongly unlockable}
(Definition \ref{def:unlockable}).
We prove the following.

\vspace{0.5em}
\noindent\textbf{Impossibility theorem for quantum channels (informal).}
\emph{If $\Phi$ is strongly unlockable then $\Phi$ admits a trivial quantum one-time program in the plain quantum model.
Conversely, if $\Phi$ is not weakly unlockable then $\Phi$ does not admit a quantum one-time program in the plain quantum model.}\vspace{0.5em}

By definition, every strongly unlockable channel is also weakly unlockable.
We conjecture that the converse also holds (Conjecture \ref{conj:bug}.)
In lieu of a full proof, we provide a high-level outline of what such a proof might look like in Section \ref{sec:conjecture}.
It appears, in asking a question about unlockable channels, that we have stumbled upon a deep and interesting question relating to the invertible subspaces of an arbitrary channel, akin to the
so-called ``decoherence-free'' subspaces studied in the literature on quantum error correction.

\subsection{Related work}
\label{sec:intro:open}

\lightparagraph{Copy-protection}
In software copy-protection~\cite{Aaronson09}, a program can be evaluated (a possibly unlimited number of times), but it should be impossible for the program to be ``split'' or ``copied'' into parts allowing separate executions. As with OTPs, copy-protection cannot be achieved by software means only. OTPs provide a hardware solution by enforcing that the program  be run only once. However, the more interesting question is if quantum information alone (with computational assumptions) can provide a solution. Aaronson~\cite{Aaronson09} has proposed such schemes based on plausible, but non-standard, cryptographic assumptions.  It is an open problem if quantum copy-protection could be based on standard assumptions. In contrast, the security of quantum OTPs is based on simple OTMs; it could be beneficial to study quantum copy-protection in light of our result.

\lightparagraph{Quantum money}
Our construction  establishes quantum authentication codes (see Section~\ref{sec:techniques-Q-auth}) that seem to provide a concrete and efficient realization of the ``hidden subspaces'' used for public-key quantum money scheme of Aaronson and Christiano~\cite{AC12}.
Our QOTPs can also be used to implement non-interactive verification for quantum coin schemes~\cite{MS10}.

\lightparagraph{Program obfuscation}
A related but different task is program obfuscation, in which the receiver should not be able to efficiently ``learn'' anything from the description of the program that he could not also efficiently learn from the input-output behaviour of the program.
In the case of classical information, it is known that  secure program obfuscation is not possible in the plain model~\cite{BGIRSVY01}.
As with copy-protection,  OTPs provide a hardware solution by enforcing that the obfuscated program can be run only a limited number of times.
Again, the more interesting question is if quantum information alone (with computational assumptions) can provide a solution; the  impossibility proof for obfuscation breaks down in the quantum case due to the no-cloning theorem.
It is an open problem whether there is a way to substitute the assumption of secure hardware in our main possibility result with a computational assumption in order to realize quantum program obfuscation.

\section{Notation and tools}
\label{sec:notation}

A \emph{quantum register} is a specific quantum system (composed, say, out of qubits).
Quantum registers are typically denoted with sans serifs such as $\sA$, $\sB$, \emph{etc.}
A \emph{quantum channel} $\Phi:\sA\to\sB$ refers to any physically-realizable mapping on quantum registers.
We use the terms \emph{quantum map} and \emph{quantum channel} interchangeably.

\subsection{Universal composability}
\label{sec:definitions}
The universal composability (UC) framework provides an extremely high standard for establishing a strong and  rigorous  notion of security. The basic idea is to postulate an ideal world, where the protocol parties interact with an \emph{ideal functionality}, which is secure by definition.
Then, we consider the real world, where the protocol parties execute the actual protocol.
\emph{UC-security} holds if, for every real-world adversary, there exists a \emph{simulator} in the ideal world (taking the role of the real-world adversary) such that no \emph{environment} can distinguish the real and ideal worlds.
The environment is powerful: it provides the parties' inputs, receives outputs and \emph{interacts} with the adversary at arbitrary points in the protocol.
  \emph{Perfect}, \emph{statistical}, and \emph{computational} notions of UC security exist depending on the two worlds being  equal  or statistically or computationally indistinguishable.

The UC  framework was developed in the classical world by Canetti~\cite{Can01} and
independently (under the name of reactive simulatability) by
Pfitzmann and Waidner~\cite{PW01}. The model was extended to the
quantum world by Ben-Or and Mayers~\cite{BOM04} and independently by
Unruh~\cite{U04}. Here, we follow the simplified UC framework as presented by Unruh~\cite{U10}, to which we refer for the description of the model, definitions, and theorems.

Via the composition theorem, this framework allows  to rigorously prove results that are maximally useful for future work: our results can  easily be embedded within a larger construction and at the same time we can use prior constructions without having to re-visit their security proofs.
Another key result is Unruh's \emph{quantum lifting theorem}~\cite{U10} establishing that, in the statistical case, classical-UC-secure protocols are also quantum-UC-secure.

Our main possibility result heavily relies on classical one-time programs, for which a (classical) UC-secure instantiation exists assuming one-time memories~\cite{GISVW10}.
In particular, we require  bounded reactive OTPs, which we construct by extending the results of Goyal~\etal~\cite{GISVW10} (see Section~\ref{sec:COTP}).

Some notable variations of Unruh's terminology and definitions follow.

Out of the two protocol parties  (the \emph{sender} and the
\emph{receiver}), we consider security only in the case of the receiver being a corruption party. This is the meaning of, \emph{e.g.}~``\emph{$\pi$ statistically
quantum-UC-emulates $\rho$ in the case of a corrupted receiver''}.

In our scenario involving a corrupted receiver,  we make use of a  property related to the transitivity of UC-security~\cite{U10} (see Lemma~\ref{lem:R-transitive}); the proof is very similar to the original proof of transitivity.

\begin{lemma}
\label{lem:R-transitive}
Let $\pi$, $\sigma$ and $\rho$ be protocols with parties $\{\verb"sender", \verb"receiver"\}$. Suppose that $\pi$
statisically quantum-UC-emulates~$\rho$ and that $\rho$ statistically quantum-UC-emulates~$\sigma$ in the case of a corrupted receiver. Then~$\pi$ statistically quantum-UC-emulates~$\sigma$ in the case of a corrupted receiver.
\end{lemma}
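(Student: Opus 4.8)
The plan is to mirror the standard composition argument for transitivity of UC-emulation, taking care that the restriction to a corrupted receiver is preserved at each step. Recall that, in Unruh's simplified framework, the statement ``$\pi$ statistically quantum-UC-emulates $\rho$ in the case of a corrupted receiver'' unwinds to: for every adversary $\cA$ that corrupts only the receiver of $\pi$, there exists a simulator $\cS$ that corrupts only the receiver of $\rho$ such that, for every environment $\cZ$, the advantage with which $\cZ$ distinguishes the execution of $\pi$ with $\cA$ from the execution of $\rho$ with $\cS$ is negligible in the security parameter.

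First I would fix an arbitrary receiver-corrupting adversary $\cA$ against $\pi$. Applying the first hypothesis (that $\pi$ emulates $\rho$) yields a receiver-corrupting simulator $\cS_1$ against $\rho$ together with a negligible bound $\varepsilon_1$ on the distinguishing advantage, uniform over all environments. The crucial observation is that $\cS_1$ is itself a legitimate receiver-corrupting adversary against $\rho$, so I may feed it into the second hypothesis (that $\rho$ emulates $\sigma$). This produces a receiver-corrupting simulator $\cS_2$ against $\sigma$ together with a negligible bound $\varepsilon_2$, again uniform over environments.

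I would then conclude by the triangle inequality for statistical distance: for every environment $\cZ$, the advantage in distinguishing the execution of $\pi$ with $\cA$ from the execution of $\sigma$ with $\cS_2$ is at most $\varepsilon_1 + \varepsilon_2$, which is negligible. Since $\cA$ was an arbitrary receiver-corrupting adversary and $\cS_2$ corrupts only the receiver, this is precisely the statement that $\pi$ statistically quantum-UC-emulates $\sigma$ in the case of a corrupted receiver.

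The only point requiring genuine care---and the reason the restricted statement is not wholly immediate from generic transitivity---is the middle step: I must verify that the simulator $\cS_1$ guaranteed by the first emulation again falls inside the admissible class of adversaries for the second emulation, namely those corrupting only the receiver. This holds because emulation ``in the case of a corrupted receiver'' is defined so that the guaranteed simulator inherits the receiver-only corruption pattern; this compatibility of corruption patterns under substitution is exactly what lets the two emulation statements chain together, and it is the sole place where the argument departs from the verbatim proof of unrestricted transitivity.
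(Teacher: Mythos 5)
Your proof is correct and takes essentially the same approach as the paper, which gives no detailed argument but states that the proof is ``very similar to the original proof of transitivity'' in Unruh's framework---precisely the chaining you describe, where the simulator $\cS_1$ obtained from the first emulation is then treated as an adversary for the second emulation, and the two negligible distinguishing advantages are combined by the triangle inequality. Your closing observation---that $\cS_1$ inherits the receiver-only corruption pattern and is therefore an admissible adversary for the second, restricted hypothesis---is exactly the bookkeeping that separates this lemma from verbatim unrestricted transitivity, and it is handled correctly.
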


Combining Lemma~\ref{lem:R-transitive} with the quantum universal composition theorem~\cite{U10}  and the quantum lifting theorem~\cite{U10} we get
Corollary~\ref{cor:composition} below, which is essential in the
proof of our main possibility theorem (Theorem~\ref{thm:main}).

\begin{corollary}\label{cor:composition}
Let $\pi$, and $\rho$ be protocols with parties $\{\verb"sender", \verb"receiver"\}$.
Suppose $\pi$ statistically classical-UC-emulates $\cF$.
Suppose that $\rho^\cF$ statistically quantum-UC-emulates $\cG$ in the case of a corrupted receiver.
Then $\rho^\pi$ statistically quantum-UC-emulates $\cG$ in the case of a corrupted receiver.
\end{corollary}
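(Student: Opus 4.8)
The plan is to chain three ingredients that are already available: the quantum lifting theorem, the quantum universal composition theorem, and the corrupted-receiver transitivity of Lemma~\ref{lem:R-transitive}. The target is the single emulation ``$\rho^\pi$ statistically quantum-UC-emulates $\cG$ in the case of a corrupted receiver,'' and I would reach it by interposing the intermediate protocol $\rho^\cF$, so that the chain reads $\rho^\pi \rightsquigarrow \rho^\cF \rightsquigarrow \cG$.

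First I would upgrade the hypothesis on $\pi$ from the classical to the quantum setting. By assumption $\pi$ statistically classical-UC-emulates $\cF$, and since this is a statistical (unconditional) statement, the quantum lifting theorem of Unruh~\cite{U10} applies directly to give that $\pi$ statistically quantum-UC-emulates $\cF$. I would record that this lifted emulation holds for all corruption patterns---in particular for a corrupted receiver---since the lifting theorem does not restrict the corruption set; this is the fact I need in order to feed the result into the composition step.

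Next I would compose with $\rho$. Because $\pi$ quantum-UC-emulates $\cF$ unconditionally, the quantum universal composition theorem~\cite{U10} lets me replace the ideal functionality $\cF$ by the protocol $\pi$ inside $\rho$, yielding that $\rho^\pi$ statistically quantum-UC-emulates $\rho^\cF$. This conclusion is again stated for general adversaries, so it specializes to the case of a corrupted receiver. I now have the first link of the chain, $\rho^\pi$ quantum-UC-emulates $\rho^\cF$ in the case of a corrupted receiver, while the second hypothesis of the corollary supplies the second link, $\rho^\cF$ quantum-UC-emulates $\cG$ in the case of a corrupted receiver. Applying the corrupted-receiver transitivity of Lemma~\ref{lem:R-transitive} to these two links produces exactly $\rho^\pi$ statistically quantum-UC-emulates $\cG$ in the case of a corrupted receiver, which is the desired statement.

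The individual steps are routine invocations of cited results, so the only place that demands care is the bookkeeping of corruption patterns across the three theorems. The lifting and composition theorems are stated unconditionally, whereas both the given hypothesis and the desired conclusion---as well as Lemma~\ref{lem:R-transitive}---are phrased only for a corrupted receiver. The main obstacle is therefore to verify that the unconditional emulations produced in the first two steps genuinely restrict to the corrupted-receiver case in the precise sense required by Lemma~\ref{lem:R-transitive}: the same party set $\{\text{sender},\text{receiver}\}$ and the same notion of statistical indistinguishability. Once it is confirmed that the hypotheses of the transitivity lemma are met exactly, the chain closes immediately and the proof is complete.
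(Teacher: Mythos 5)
Your proposal is correct and follows exactly the route the paper intends: the paper gives no written proof of Corollary~\ref{cor:composition}, stating only that it follows by combining the quantum lifting theorem, the quantum universal composition theorem, and Lemma~\ref{lem:R-transitive}, which is precisely your chain (lift $\pi$'s classical emulation of $\cF$ to a quantum one, compose to get $\rho^\pi$ emulating $\rho^\cF$, then apply corrupted-receiver transitivity). One bookkeeping remark: Lemma~\ref{lem:R-transitive} asks for the \emph{unconditional} emulation as its first hypothesis, so you should feed in the unrestricted statement ``$\rho^\pi$ quantum-UC-emulates $\rho^\cF$'' directly rather than its corrupted-receiver specialization---since you have the unconditional statement in hand, this changes nothing.
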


\subsection{Ideal functionalities}
\label{sec:definitions:functionalities}

We now describe the relevant ideal functionalities. All
functionalities involve two parties, the
\emph{sender} and the \emph{receiver}. An ideal functionality may
exist in multiple instances and involves various parties. Formally,
instances are denoted by \emph{session identifiers} and each
instance involves labelled parties. For the sake of simplicity, we
have omitted these identifiers as they should be implicitly clear
from the context.

The ideal functionality $\cFunc{}{OTM}$ for a one-time
memory (OTM) is a two-step process modelled after oblivious
transfer. We sometimes refer to this functionality
$\cFunc{}{OTM}$ as an \emph{OTM token}.

\begin{functionality} \caption{Ideal Functionality
$\cFunc{}{OTM}$.} \label{ideal-funct:OTM}
\begin{enumerate}
\item \textbf{Create:} Upon input $(s_0, s_1)$ from the sender with $s_0, s_1 \in
\{0,1\}$, send \verb"create" to the receiver and store~$(s_0,
s_1)$.

\item \textbf{Execute:} Upon input $c \in \{0,1\}$ from the
receiver, send $s:=s_c$ to the receiver. Delete any trace of this
instance.
\end{enumerate}
\end{functionality}

Next we describe the ideal functionalities $\cFunc{f}{OTP}$ and $\cFunc{\Phi}{OTP}$ of a one-time program for a classical function~$f$ and a quantum channel $\Phi$, respectively.
Note that the map that is computed ($f$ or $\Phi$) is a public parameter of the functionality and it takes an input from the sender and an input from the receiver. We thus view these ideal functionalities as having the property of hiding the sender's \emph{input} only.
If the intention is to to hide the map~$m$ itself---as in the intuitive notion of one-time programs---then we can consider a universal map~$U$ that takes as part of its sender's input a program register representing~$m$ (see~\cite{BFGH10, NC97, SR07}).

\begin{functionality}
\caption{Ideal functionality $\cFunc{f}{OTP}$ for a classical function $f: \{0,1\}^{n+m} \rightarrow  \{0,1\}^k$.
\label{ideal-funct:COTP}}
\begin{enumerate}
\item \textbf{Create:} Upon input  $a \in \{0,1\}^{n}$ from the sender,
send \verb"create" to the receiver and store $a$.
\item \textbf{Execute:} Upon input $b \in \{0,1\}^m$ from the
receiver, send $f(a,b)$ to the receiver. Delete any trace of this
instance.
\end{enumerate}
\end{functionality}

\begin{functionality}
\caption{Ideal functionality $\cFunc{\Phi}{OTP}$ for a quantum channel $\Phi : (\sA,\sB) \rightarrow \sC$.
\label{ideal-funct:QOTP}}
\begin{enumerate}
\item \textbf{Create:} Upon input register $\sA$ from the sender, send \verb"create" to the
receiver and store the contents of register $\sA$.
\item \textbf{Execute:} Upon input register $\sB$ from the
receiver, evaluate $\Phi$ on registers $\sA,\sB$ and send the contents of the output register $\sC$ to
the receiver. Delete any trace of this instance.
\end{enumerate}
\end{functionality}

It is clear from the description of these ideal functionalities that they may be called only a single time.
However, we may sometimes emphasize this in expressions such as ``one-shot access to an ideal functionality computing~$f$''.

Functionalities \ref{ideal-funct:OTM}--\ref{ideal-funct:QOTP} are \emph{sender-oblivious} since they take an input from the sender and an input from the receiver and deliver the result of the functionality to the receiver (but not the sender).
Moreover, they are \emph{non-reactive} since they interact with the sender and the receiver in a single round.
\emph{Reactive} functionalities are more general, potentially having several rounds of inputs and outputs and maintaining state between rounds.
In  Section~\ref{sec:COTP} we consider an ideal functionality for \emph{bounded reactive classical one-time programs}; the ideal functionality for bounded-reactive OTPs is specified in%
\ifthenelse{\equal{\compileACM}{1}}{
the full version.
}{
Appendix~\ref{sec:Appendix-BRCOTPs}.
}

\subsection{Classical one-time programs} \label{sec:COTP}

Our construction relies heavily on classical OTPs, the construction of which is given by Goyal~\etal~\cite{GISVW10}:

\begin{theorem}
\label{thm:COTP} Let $f$ be a
non-reactive, sender-oblivious, polynomial-time computable classical
two-party functionality. Then there exists an efficient,
 non-interactive protocol which statistically
classical-UC-emulates $\cFunc{f}{OTP}$ in the
$\cFunc{}{OTM}$-hybrid model.
\end{theorem}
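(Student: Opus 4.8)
The plan is to instantiate the garbled-circuit construction of Goyal~\etal~\cite{GISVW10} and verify that, for the specific functionality $\cFunc{f}{OTP}$, it meets Unruh's definition of statistical classical-UC-emulation in the $\cFunc{}{OTM}$-hybrid model. First I would fix a Boolean circuit $C$ computing $f(a,b)$ and have the sender garble $C$ with his input $a$ hard-wired into the wires he controls. The one essential departure from standard (computationally secure) Yao garbling is that, since $\cFunc{f}{OTP}$ is used exactly once, each garbled gate is decrypted at most once during an honest evaluation; this lets me replace the pseudorandom encryption of gate tables by one-time-pad encryption together with information-theoretic one-time MACs (after expanding $C$ to fan-out one so that no wire label is ever used as a decryption key twice), which is what buys \emph{statistical} rather than merely computational security. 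The receiver's input-wire labels are delivered non-interactively by placing the two labels $(L_i^0,L_i^1)$ of the $i$-th receiver wire into an $\cFunc{}{OTM}$ token; querying bit $b_i$ returns $L_i^{b_i}$ and destroys $L_i^{1-b_i}$, so the OTMs play exactly the role that oblivious transfer plays in the interactive setting while keeping the protocol non-interactive.

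The heart of the argument is constructing a simulator for each corruption case and bounding the statistical distance seen by the environment. For a corrupted receiver the simulator is straight-line, which is precisely what UC demands and what the OTM-hybrid model affords for free: since the simulator plays the role of $\cFunc{}{OTM}$, it observes the receiver's choice bits $b_1,\dots,b_m$ directly, assembles $b$, forwards it to the ideal $\cFunc{f}{OTP}$, and receives $f(a,b)$. It then simulates a single garbled circuit that decodes to $f(a,b)$ on the one set of labels the receiver can obtain, while all unopened gate tables and unqueried labels are one-time-pad ciphertexts independent of everything else. Because the encryption and authentication layers are information-theoretic, the simulated transcript is statistically (indeed, up to the MAC-forgery probability, perfectly) indistinguishable from the real one, which yields the required bound.

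I expect the two genuine obstacles to be the malicious-\emph{sender} case and the selective-failure (input-dependent abort) attack, exactly as in \cite{GISVW10}. A dishonest sender can hand the receiver a malformed garbled circuit, or encode the OTM contents so that whether decoding aborts depends on the receiver's choice bits, thereby leaking the receiver's input and breaking simulatability. The standard remedies are cut-and-choose over many garbled copies so that a malformed majority is caught with overwhelming probability, information-theoretic authentication of every label and output so that tampering is detected, and secret-sharing each wire label across several OTM tokens (a resilient-function/combiner encoding) so that the abort event becomes statistically independent of the queried bits. Verifying that these mechanisms simultaneously yield a statistically good extractor for the sender's effective input \emph{and} an abort probability independent of the receiver's input is the technically delicate part; this is established in \cite{GISVW10}, so the remaining work here is only to check that their guarantees match the syntactic form of $\cFunc{f}{OTP}$ (a single sender input $a$, a single receiver query $b$, output to the receiver only) and to read off statistical classical-UC-emulation directly.
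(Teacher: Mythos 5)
You should first be aware that the paper does not prove this statement at all: Theorem~\ref{thm:COTP} is quoted verbatim as the main result of Goyal \etal~\cite{GISVW10}, and the paper's own contribution begins only with the extension to bounded \emph{reactive} functionalities (Appendix~\ref{sec:Appendix-BRCOTPs}, Corollary~\ref{cor:reactive-COTP}). So any proof written here is necessarily a reconstruction of \cite{GISVW10}, and deferring the malicious-sender and selective-abort machinery to that paper, as you do, is consistent with how the paper itself treats the theorem. The problem is that the one component you work out yourself --- the statistically secure garbling --- is not their construction, and as described it fails.

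Concretely, the claim that statistical security follows by ``replacing the pseudorandom encryption of gate tables by one-time-pad encryption together with information-theoretic one-time MACs (after expanding $C$ to fan-out one)'' does not hold for general polynomial-time $f$. A one-time pad must be at least as long as the plaintext it hides; each garbled row of a gate contains the output-wire label (plus a tag), and each input-wire label must contribute \emph{fresh} pad material to the two rows in which it participates, so wire-label length at least doubles at every level of depth moving from the outputs back toward the inputs. Fan-out-one expansion only prevents reuse of a label across different gates; it does nothing about this nesting of key lengths. The resulting garbled circuit has size exponential in the circuit depth, hence is polynomial only for $\mathrm{NC}^1$-type circuits, whereas the theorem claims all polynomial-time computable $f$; this is precisely the well-known barrier to statistically private garbled circuits and randomized encodings for general circuits. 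A second, related mismatch: you place string labels $(L_i^0,L_i^1)$ inside OTM tokens, but $\cFunc{}{OTM}$ as defined in this paper (Functionality~\ref{ideal-funct:OTM}) stores a pair of \emph{bits}; splitting a label across many bit tokens lets a malicious receiver query a mixture of bits from $L_i^0$ and $L_i^1$, which destroys the input-encoding guarantee Yao evaluation relies on. Both issues are exactly why the statistical construction of \cite{GISVW10} takes a different route --- bit-OTMs emulating the oblivious transfers of an information-theoretically secure OT-hybrid protocol, with the secrecy of unqueried entries supplied by the tokens' self-destruction and consistency enforced by authentication --- rather than ``Yao with one-time pads.'' As written, your protocol is either exponentially inefficient or, if labels are kept short, only computationally secure; either way the stated theorem is not reached.
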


\ifthenelse{\equal{\compileACM}{0}}{
In Appendix~\ref{sec:Appendix-BRCOTPs}, we
}{
In the full version, we
}%
use straightforward techniques to extend this result to sender-oblivious, polynomial-time computable, \emph{bounded reactive} classical two-party functionalities.
The main result on reactive OTPs, as used in our construction in Section~\ref{sec:construction}, is:

\begin{corollary}\label{cor:reactive-COTP}
There exists a non-interactive protocol $\sigma$ that
statistically classical-UC-emulates $\cFunc{g_1,\ldots g_\ell}{BR-OTP}$
in the $\cFunc{}{OTM}$-hybrid model.
\end{corollary}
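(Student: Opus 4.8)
The plan is to reduce the bounded-reactive case to the non-reactive case already handled by Theorem~\ref{thm:COTP}, using the standard technique of chaining single-use programs together by passing an encrypted and authenticated state between rounds. Concretely, I would have the sender prepare $\ell$ separate non-reactive functionalities $f_1,\dots,f_\ell$, one per round of the reactive program, where $f_i$ is designed so that its output contains the receiver's round-$i$ answer together with an encoding of the updated internal state, which then serves as (part of) the input to $f_{i+1}$.

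First I would fix, at preparation time, independent one-time-pad keys $k_i$ and information-theoretic MAC keys $\kappa_i$ for each round, all hardcoded into the appropriate programs. Program $f_i$ takes as receiver input the round-$i$ query $b_i$ together with a state token $c_{i-1}$; using the incoming keys $(k_{i-1},\kappa_{i-1})$ hardcoded in $f_i$ it decrypts and verifies $c_{i-1}$ (aborting if the MAC check fails), computes $(y_i,\mathrm{st}_i)=g_i(\mathrm{st}_{i-1},b_i)$, and outputs $y_i$ along with a freshly encrypted-and-authenticated token $c_i$ for $\mathrm{st}_i$ under the outgoing keys $(k_i,\kappa_i)$ shared with $f_{i+1}$ (with $f_1$ seeded by the sender's input $\mathrm{st}_0$ and $f_\ell$ emitting no token). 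Each $f_i$ is a non-reactive, sender-oblivious, polynomial-time two-party functionality, so Theorem~\ref{thm:COTP} gives a non-interactive $\cFunc{}{OTM}$-hybrid protocol realizing each one; bundling all $\ell$ of them into a single message, with the receiver running them in sequence and forwarding each $c_i$, yields the non-interactive protocol $\sigma$.

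For security I would invoke the universal composition theorem~\cite{U10} to replace each realized program by its ideal functionality $\cFunc{f_i}{OTP}$, reducing to a hybrid consisting of $\ell$ ideal non-reactive OTPs glued by the encode/decode layer. It then remains to exhibit a simulator showing this hybrid statistically classical-UC-emulates $\cFunc{g_1,\ldots g_\ell}{BR-OTP}$. The simulator reads off the receiver's query $b_i$ as it is submitted to the $i$-th ideal OTP, forwards it to the bounded-reactive functionality, and relays the output $y_i$; because the one-time pad perfectly hides $\mathrm{st}_i$, the state tokens $c_i$ are simulated by uniformly random strings, and the one-shot property of each $\cFunc{f_i}{OTP}$ guarantees every round is invoked at most once.

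The main obstacle—and the only source of statistical rather than perfect emulation—is controlling a malicious receiver who attempts to run the programs out of order, replay a stale token, or feed a forged token into $f_i$. The information-theoretic MAC keys $\kappa_i$ bound the probability of any successful forgery by a negligible quantity, so that, except with negligible probability, the only behaviours left to the receiver are to supply the honestly produced token (a legitimate query the simulator forwards) or to trigger an abort. Verifying that these are the only two cases, and that they map cleanly onto either a query to or a refusal of the ideal bounded-reactive functionality, is the crux of the argument; everything else follows from Theorem~\ref{thm:COTP} and composition.
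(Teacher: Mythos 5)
Your proposal is essentially the paper's own proof: Protocol~\ref{prot:R-OTP} chains per-round non-reactive COTPs $f_1,\dots,f_\ell$ exactly as you describe, passing the internal state as a one-time-padded, MAC-authenticated token, and Simulator~\ref{sim:BR-OTP} likewise forwards each query to $\cFunc{g_1,\ldots ,g_\ell}{BR-OTP}$ while returning tokens built from fresh random strings, with the unconditional one-time security of the MAC accounting for the negligible statistical gap, and the final step being the same composition with Theorem~\ref{thm:COTP}. The only point you omit is security against a corrupted \emph{sender}, which the paper also proves (its simulator checks that the keys the sender hardcodes into consecutive COTPs are mutually consistent and inputs an abort to the ideal functionality otherwise); since the corollary asserts unqualified UC emulation rather than emulation only for a corrupted receiver, this routine case should be included for completeness.
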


\section{Impossibility of non-trivial OTP\lowercase{s} in the plain model}
\label{sec:necessity}

We now consider whether classical functions or quantum channels admit one-time programs in the plain quantum model.
We will see that it is precisely maps that are \emph{unlockable}---meaning there is a \emph{key}\footnote{Note we use ``key'' not in the cryptographic sense of a secret key, but in the metaphorical sense of something that unlocks a lock.} input to the map that unlocks enough information to fully simulate the map---that have one-time programs in the plain model, and that these one-time programs are in a sense trivial.
For quantum channels, we will have two versions of unlockable, the difference being whether the key that unlocks the channel is a state (\emph{strongly unlockable}) or a channel that transforms a given input (\emph{weakly unlockable}).

Our \emph{possibility} result shows that every strongly unlockable channel admits a trivial one-time program in the plain quantum model, and in fact that this protocol is UC-secure.
Our \emph{impossibility} result  shows that every channel that is not weakly unlockable does not admit a one-time program in the plain quantum model; our impossibility result holds even if we relax to an approximate case or allow computational assumptions.%
\footnote{Although our impossibility result is stated in the UC framework, the impossibility is not an artifact of the high level of security required of UC, but seems inherent in the notion of OTPs, and the impossibility argument applied for any relaxation we attempted.}
As we will see, it is easy to establish that the weak and strong unlockable notions are equivalent for classical functions, but whether they are equivalent for quantum channels is an open question, which we note appears to be an interesting and deep question related to invertible subspaces of a channel.

\subsection{One-time programs in the plain quantum model}
\label{sec:otp-plain}
Here, we formalize some concepts relating to one-time programs in the plain quantum model.

A protocol for a quantum one-time program in the plain model consists of a single quantum message,  the \emph{program register} $\sP$, from the sender to receiver.
Thus, the actions of the honest sender in such a protocol are completely characterized by an \emph{encoding channel} $\enc:\sA\to\sP$ that the sender applies to her portion of the joint input so as to prepare a program register for the receiver.
In particular, for any joint state~$\rho$ of the input registers $(\sA,\sB)$ the joint state of the registers $(\sP,\sB)$ in the receiver's possession after the program register has been received is given by
$(\enc\ot\idsup{\sB})(\rho)$.
From this state we would like the receiver to be able to recover $\Phi(\rho)$.
That is, there should exist a \emph{decoding channel}
$\dec:(\sP,\sB)\to\sC$
for the honest receiver such that the channels
$\dec\circ\enc$ and~$\Phi$ are indistinguishable.
A comparison of the ideal and real models for one-time programs is given in Figure~\ref{fig:qotp-def}.

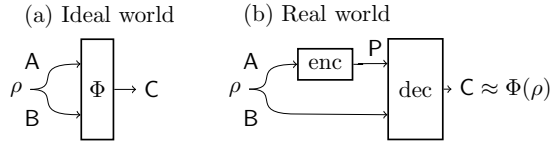
\begin{figure}[hbt]
\begin{center}
\scalebox{0.8}{
\begin{tikzpicture}[yscale=0.7]
\node [right] at (0,1.75) {(a) Ideal world};
\node (rho) at (0,0) {$\rho$};
\node [rectangle,thick,draw=black,inner ysep=20pt] (Phi) at (1.35,0) {$\Phi$};
\draw[->] (rho) .. controls +(0.8,0) and +(-0.8,0) .. node[above left] {$\sA$} (1,0.6) -- ($0.5*(Phi.west)+0.5*(Phi.north west)$);
\draw[->] (rho) .. controls +(0.8,0) and +(-0.8,0) .. node[below left] {$\sB$} (1,-0.6) -- ($0.5*(Phi.west)+0.5*(Phi.south west)$);
\node (Z) at (2.25,0) {$\sC$};
\draw[->] (Phi) -- (Z);
\end{tikzpicture}
~~~
\begin{tikzpicture}[yscale=0.7]
\node [right] at (0,1.75) {(b) Real world};
\node (rho) at (0,0) {$\rho$};
\node [rectangle,thick,draw=black,inner sep=5pt] (enc) at (1.5,0.59) {enc};
\draw[->] (rho) .. controls +(0.8,0) and +(-0.8,0) .. node[above left] {$\sA$} (1,0.59) -- (enc);
\node [rectangle,thick,draw=black,inner xsep=5pt,inner ysep=20pt] (dec) at (3,0) {dec};
\draw[->] (rho) .. controls +(0.8,0) and +(-0.8,0) .. node[below left] {$\sB$} (1,-0.59) -- ($0.5*(dec.west)+0.5*(dec.south west)$);
\draw[->] (enc) .. controls +(1,0) and +(-1,0) .. node[above right]{$\sP$} ($0.5*(dec.west)+0.5*(dec.north west)$);
\node (Z) at (4.5,0) {$\sC \approx \Phi(\rho)$};
\draw[->] (dec) -- (Z);
\end{tikzpicture}
}
\end{center}
\vspace{-1.8em}
\caption{
(a) In the ideal world, the receiver obtains the output of the ideal functionality for $\Phi$ on arbitrary input registers $(\sA,\sB)$.
(b) In the  real world, encoding and decoding maps implement the functionality, namely $\dec \circ \enc \approx \Phi$.
}
\label{fig:qotp-def}
\end{figure}

By the completeness of the dummy-adversary~\cite{U10}, it is sufficient, in order to establish UC-security, to consider only the case of the dummy-adversary who forwards the program register, $\sP$, to the environment.
Thus, UC-security is established by exhibiting a simulator that can re-create a state that is indistinguishable from the joint state \( (\enc\ot\idsup{\sB})(\rho) \) of registers $(\sP,\sB)$, using only the ideal functionality; recall indistinguishability is from the perspective of the environment, and could be perfect, statistical, or computational as appropriate.
The corresponding channels are depicted in Figure \ref{fig:qotp-security}. Here, the \emph{simulator} $(\simu_1,\simu_2)$ consists of channels $\simu_1 : \sB \to (\sB',\sM)$ and $\simu_2: (\sC,\sM) \to (\sP,\sB)$, where $\sM$ is a private memory register for the simulator, security holds if the  channels
$\simu_2\circ\Phi\circ \simu_1$ and $\enc\ot\idsup{\sB}$ are indistinguishable.

\begin{figure}[hbt]
\begin{center}
\scalebox{0.8}{
\begin{tikzpicture}[xscale=0.9,yscale=0.7]
\node [right] at (0,2) {(a) Real world};
\node (rho) at (0,0) {$\rho$};
\node [rectangle,thick,draw=black,inner xsep=3pt,inner ysep=5pt] (enc) at (1.5,1) {enc};
\draw[->] (rho) .. controls +(0.8,0) and +(-0.8,0) .. node[above left] {$\sA$} (1,1) -- (enc);
\node (Y) at (2.75,-1) {$\sB$};
\draw[->] (rho) .. controls +(0.8,0) and +(-0.8,0) .. node[below left] {$\sB$} (1,-1) -- (Y);
\node (P) at (2.75,1) {$\sP$};
\draw[->] (enc) -- (P);
\node at (0,-1.8) {};
\end{tikzpicture}
~
\begin{tikzpicture}[yscale=0.7]
\node [right] at (0,2) {(b) Simulator};
\node (rho) at (0,0) {$\rho$};
\node [rectangle,thick,draw=black,inner xsep=5pt,inner ysep=10pt] (sim1) at (1.6,-1) {$\simu_{1}$};
\draw[->] (rho) .. controls +(0.8,0) and +(-0.8,0) .. node[below left] {$\sB$} (1,-1) -- (sim1);
\node [rectangle,thick,draw=black,inner ysep=10pt] (Phi) at (3.25,1) {$\Phi$};
\draw[->] (rho) .. controls +(0.8,0) and +(-0.8,0) .. node[above left] {$\sA$} (1,1.38) -- ($0.5*(Phi.west)+0.5*(Phi.north west)$);
\draw[->] ($0.5*(sim1.east)+0.5*(sim1.north east)$) .. controls +(0.8,0) and +(-0.8,0) .. node[above left] {$\sB'$} ($0.5*(Phi.west)+0.5*(Phi.south west)$);
\node [rectangle,thick,draw=black,inner xsep=5pt,inner ysep=10pt] (sim2) at (4.9,-1) {$\simu_{2}$};
\draw[->] ($0.5*(sim1.east)+0.5*(sim1.south east)$) -- node[below] {$\sM$} ($0.5*(sim2.west)+0.5*(sim2.south west)$);
\draw[->] (Phi) .. controls +(0.8,0) and +(-0.8,0) .. node[above right] {$\sC$} ($0.5*(sim2.west)+0.5*(sim2.north west)$);
\node (P) at (6,-0.62) {$\sP$};
\draw[->] ($0.5*(sim2.east)+0.5*(sim2.north east)$) -- (P);
\node (Yout) at (6,-1.38) {$\sB$};
\draw[->] ($0.5*(sim2.east)+0.5*(sim2.south east)$) -- (Yout);
\end{tikzpicture}
}
\end{center}
\vspace{-1.8em}
\caption{
(a) The sender prepares the program register $\sP$ by applying $\enc$ to $\sA$.
The sender cannot touch $\sB$.
(b) A simulator $(\simu_{1}, \simu_{2}$) should be able to re-create an indistinguishable state of $(\sP,\sB)$ using only the ideal functionality $\Phi$.
}
\label{fig:qotp-security}
\end{figure}
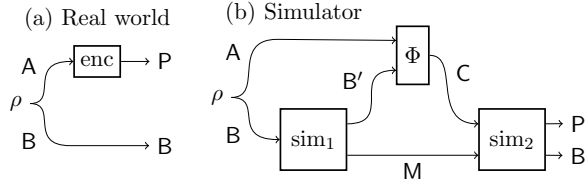

\subsection{Trivial one-time programs for unlockable channels}

Intuitively, a one-time program for a channel means that no receiver can learn more than he could given one-shot access to an ideal functionality.
However, there are certain channels where one-shot access to the ideal functionality is enough to fully simulate the map for a fixed choice of the sender's input.
Such channels---which we call \emph{unlockable}---effectively have trivial one-time programs, as we will see in this section.

\begin{definition}
\label{def:unlockable}
A channel $\Phi:(\sA,\sB)\to\sC$ is \emph{strongly unlockable} if there exists a register $\sK$, a \emph{key state} $\xi_0$ of $(\sB,\sK)$ and a \emph{recovery algorithm} (\emph{i.e.}, channel)
\( \cA:(\sC,\sK,\sB)\to \sC \) with the property that
$\cA\circ\Phi_0 \approx \Phi$,
where the channel $\Phi_0$ is specified by
\[ \Phi_0 : \sA\to(\sC,\sK) : A\mapsto(\Phi\ot\idsup{\sK})(A\ot\xi_0)\,. \]

A channel $\Phi:(\sA,\sB)\to\sC$ is \emph{weakly unlockable} if there exists a register $\sK$ and a \emph{key channel} $\Xi_0:\sB\to(\sB,\sK)$ such that the channel $\Phi\circ\Xi_0$ has the following property:
for every choice of registers~$\sE$ and channels $\Psi:\sB\to(\sB,\sE)$ for the receiver there exists a \emph{recovery algorithm} (\emph{i.e.}, channel)
$\cA_\Psi:(\sC,\sK)\to (\sC,\sE)$
such that:
\[ \cA_\Psi\circ\Phi\circ\Xi_0 \approx \Phi\circ\Psi\,. \]

Here, $\approx$ can denote perfect, statistical, or computational indistinguishability; in all cases, channels $\Phi_0$, $\cA$, $\Xi_{0}$, and $\cA_{\Psi}$ must be efficient.
\end{definition}

See Figure~\ref{fig:qotp-unlockable} for diagrams representing strongly and weakly unlockable channels.

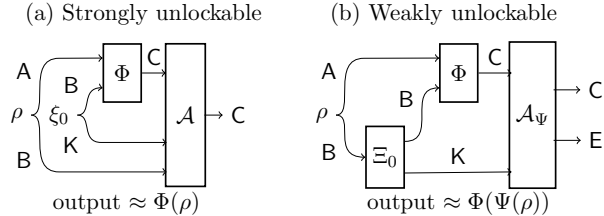
\begin{figure}[hbt]
\begin{center}
\scalebox{0.8}{
\begin{tikzpicture}[xscale=0.7,yscale=0.7]
\node [right] at (0,2.4) {(a) Strongly unlockable};
\node (rho) at (0,0) {$\rho$};
\node (xi0) at (1,0) {$\xi_{0}$};
\node [rectangle,thick,draw=black,inner xsep=5pt,inner ysep=10pt] (Phi) at (2.5,1) {$\Phi$};
\draw[->] (rho) .. controls +(0.8,0) and +(-0.8,0) .. node[above left] {$\sA$} (1,1.35) -- ($0.5*(Phi.west)+0.5*(Phi.north west)$);
\draw[->] (xi0) .. controls +(0.8,0) and +(-0.5,0) .. node[above left] {$\sB$} (2,0.65) -- ($0.5*(Phi.west)+0.5*(Phi.south west)$);
\node [rectangle,thick,draw=black,inner xsep=5pt,inner ysep=30pt] (X) at (4,0) {$\cA$};
\draw[->] (Phi) -- node[above] {$\sC$} (3.53,1);
\draw[->] (xi0) .. controls +(0.8,0) and +(-0.5,0) .. node[below left] {$\sK$} (2,-0.65) -- (3.55,-0.65);
\draw[->] (rho) .. controls +(0.8,0) and +(-0.8,0) .. node[below left] {$\sB$} (1,-1.35) -- (3.55,-1.35);
\node (Z) at (5.25,0) {$\sC$};
\draw[->] (4.45,0) -- (Z);
\node at (2.6,-2.1) {output $\approx \Phi(\rho)$};
\end{tikzpicture}
~~~
\begin{tikzpicture}[xscale=0.7,yscale=0.7]
\node [right] at (0,2.4) {(b) Weakly unlockable};
\node (rho) at (0,0) {$\rho$};
\node[rectangle,thick,draw=black,inner xsep=3pt,inner ysep=10pt] (Xi0) at (1.5,-1) {$\Xi_{0}$};
\draw[->] (rho) .. controls +(0.8,0) and +(-0.8,0) .. node[below left] {$\sB$} (1,-1) -- (Xi0);
\node[rectangle,thick,draw=black,inner xsep=5pt,inner ysep=10pt] (Phi) at (3.25,1) {$\Phi$};
\draw[->] (rho) .. controls +(0.8,0) and +(-0.8,0) .. node[above left] {$\sA$} (1,1.35) -- ($0.5*(Phi.west)+0.5*(Phi.north west)$);
\draw[->] ($0.5*(Xi0.east)+0.5*(Xi0.north east)$) .. controls +(0.8,0) and +(-0.8,0) .. node[above left] {$\sB$} ($0.5*(Phi.west)+0.5*(Phi.south west)$);
\node[rectangle,thick,draw=black,inner xsep=3pt,inner ysep=30pt] (APsi) at (5,0) {$\cA_{\Psi}$};
\draw[->] (Phi) -- node[above] {$\sC$} (4.5,1);
\draw[->] ($0.5*(Xi0.east)+0.5*(Xi0.south east)$) -- node[above] {$\sK$} (4.5,-1.35);
\node (Z) at (6.5,0.6) {$\sC$};
\draw[->] (5.5,0.6) -- (Z);
\node (E) at (6.5,-0.6) {$\sE$};
\draw[->] (5.5,-0.6) -- (E);
\node at (3.25,-2.1) {output $\approx \Phi(\Psi(\rho))$};
\end{tikzpicture}
}
\end{center}
\vspace{-1.8em}
\caption{
(a) For a strongly unlockable channel $\Phi$, there exists a key state $\xi_{0}$ and a recovery algorithm $\cA$ that allows computation of $\Phi(\rho)$ for any $\rho$.
(b) For a weakly unlockable channel $\Phi$, there exists a key channel $\Xi_{0}$ such that for any channel $\Psi$ there exists a recovery algorithm $\cA_{\Psi}$ that allows computation of $\Phi(\Psi(\rho))$ for any $\rho$.
}
\label{fig:qotp-unlockable}
\end{figure}

It is easy to see that every strongly unlockable channel is also weakly unlockable: if~$\xi_{0}$ is the key state for~$\Phi$, then the key channel~$\Xi_{0}$ generates $\xi_{0}$, sends the~$\sB$ register of~$\xi_{0}$ to ideal functionality~$\Phi$ and the~$\sK$ register of~$\xi_{0}$ and the~$\sB$ register of~$\rho$ to $\cA_{\Psi} = \cA \circ \Psi$.

When the channel $\Phi$ is an entirely classical mapping, the definitions of strongly unlockable and weakly unlockable are equivalent.  A simplification for the classical case is as follows (we restrict to the perfect case for clarity).
A classical function $f : A \times B \to C$ is \emph{unlockable} if there exists a \emph{key input} $b_{0} \in B$ and a \emph{recovery algorithm} $\cA : C \times B \to C$ such that, for all $a \in A$ and $b \in B$, we have that $f(a,b)=\cA(f(a,b_{0}), b)$.
Intuitively, for an unlockable function, there exists an algorithm that can compute all values of $f(a, \cdot)$ given a one-time program for $f(a,\cdot)$, but this is okay, because a simulator, given one-shot oracle access to $f(a, \cdot)$, can also compute $f(a, b)$ for all $b$.
This function is ``learnable'' in one shot, and so a simulator can do everything any algorithm can.

Simple examples of strongly unlockable channels include all unitary channels of the form $\Phi:X\mapsto UXU^*$ for some unitary $U$ and all constant channels of the form $\Phi:X\mapsto\ptr{}{X}\sigma$ for some fixed state $\sigma$.
Simple examples of unlockable functions include permutations.

We can now see that strongly unlockable channels have one-time programs; in fact, trivial one-time programs.

\begin{theorem}
\label{thm:q-poss}
Let $\Phi:(\sA,\sB)\to\sC$ be a non-reactive, sender-oblivious polynomial-time quantum computable two-party functionality. Then if $\Phi$ is strongly unlockable, there exists an efficient, quantum non-interactive protocol which  quantum-UC-emulates $\cFunc{\Phi}{OTP}$ in the plain quantum model.

This holds in the perfect, statistical and computational cases.
\end{theorem}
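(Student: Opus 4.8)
The plan is to read off the protocol directly from the strongly-unlockable data $(\sK,\xi_0,\cA)$ supplied by Definition~\ref{def:unlockable} and then discharge correctness and simulatability as two independent checks. For the honest protocol I would take the sender's encoding channel to be $\enc \defeq \Phi_0$, so that the single quantum message (the program register) is $\sP \defeq (\sC,\sK)$, and I would take the honest receiver's decoding channel to be $\dec \defeq \cA$ acting on $(\sP,\sB)=(\sC,\sK,\sB)$. This is manifestly non-interactive. Correctness is then immediate, since $\dec\circ\enc = \cA\circ\Phi_0 \approx \Phi$ is exactly the defining property of strong unlockability, so an honest execution reproduces $\cFunc{\Phi}{OTP}$ up to $\approx$.

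For security against a corrupted receiver I would invoke the dummy-adversary reduction recorded above (cf.\ Figure~\ref{fig:qotp-security}): it suffices to exhibit a simulator $(\simu_1,\simu_2)$ that re-creates the joint state $(\enc\ot\idsup{\sB})(\rho)$ on $(\sP,\sB)$ using a single one-shot query to the ideal functionality. I would define $\simu_1:\sB\to(\sB',\sM)$ to prepare a fresh copy of the key state $\xi_0$ on a pair of registers $(\sB',\sK)$, forward $\sB'$ as the receiver's input to $\cFunc{\Phi}{OTP}$, and privately retain $\sM\defeq(\sK,\sB)$, namely the $\sK$-half of $\xi_0$ together with the receiver's original input. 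I would then take $\simu_2:(\sC,\sM)\to(\sP,\sB)$ to be the identity channel, reinterpreting $(\sC,\sK)$ as $\sP$ and passing the stored $\sB$ through unchanged.

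The key verification is that this simulator is \emph{exact}, not merely approximate. When the functionality applies $\Phi$ to the sender's register $\sA$ together with the forwarded $\sB$-half of $\xi_0$, while the $\sK$-half sits in the simulator's memory, the induced map on $\sA$ is precisely $\Phi_0:A\mapsto(\Phi\ot\idsup{\sK})(A\ot\xi_0)$. Hence $\simu_2\circ\Phi\circ\simu_1 = \Phi_0\ot\idsup{\sB} = \enc\ot\idsup{\sB}$ as channels, so the simulated and real states coincide on the nose, including all correlations with the environment's reference, because $\simu_1$ acts only on $\sB$ and transports the original $\sB$ faithfully into $\sM$. Consequently the only approximation in the whole argument is the correctness relation $\cA\circ\Phi_0\approx\Phi$, which is why the statement holds verbatim in the perfect, statistical, and computational cases. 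Efficiency of both the protocol and the simulator follows since Definition~\ref{def:unlockable} guarantees that $\Phi_0$, $\cA$, and the preparation of $\xi_0$ are all efficient.

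The only real obstacle is careful register bookkeeping: I must confirm that routing the $\sB$-component of $\xi_0$ through the single permitted functionality call (while privately keeping $\sK$) genuinely synthesizes $\Phi_0$, and that the receiver's true input register $\sB$ survives $\simu_1$ and $\simu_2$ intact so that its entanglement with the environment is never disturbed. Once these identifications are pinned down, the equality $\simu_2\circ\Phi\circ\simu_1 = \enc\ot\idsup{\sB}$ is a formality and the theorem follows.
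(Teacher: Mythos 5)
Your proposal is correct and follows essentially the same route as the paper: the same protocol ($\enc=\Phi_0$, $\dec=\cA$), the same dummy-adversary reduction, and the same simulator ($\simu_1$ prepares $\xi_0$, forwards the $\sB'$-half to the ideal functionality while retaining $\sK$ and the true $\sB$, with $\simu_2$ the identity), yielding the exact equality $\simu_2\circ\Phi\circ\simu_1=\Phi_0\ot\idsup{\sB}=\enc\ot\idsup{\sB}$. Your additional observation that the simulation is exact and the only approximation lives in the correctness relation $\cA\circ\Phi_0\approx\Phi$ is a correct and slightly more explicit accounting of where the perfect/statistical/computational cases enter.
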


\begin{proof}

The protocol is simple.
\begin{enumerate}

\item
{\bf Create:} The sender prepares the program register $\sP=(\sC,\sK)$ by preparing registers $(\sB',\sK)$ in the key state $\xi_0$ and applying $\Phi$ to $(\sA,\sB')$.
In other words, the sender's encoding channel $\enc$ is given by \( \enc = \Phi_0 \), where $\Phi_{0}$ is as in the definition of strongly unlockable.
\item
{\bf Execute:} Because $\Phi$ is strongly unlockable, the receiver can recover the action of $\Phi=\cA\circ\Phi_0$ simply by applying the recovery algorithm $\cA$ to $(\sP,\sB)$.
In other words, the receiver's decoding channel $\dec$ is given by \( \dec = \cA. \)

\end{enumerate}
Clearly, because $\Phi$ is strongly unlockable, the output of the honest receiver in the real model is indistinguishable from the output of the ideal model.

According to the discussion in Section~\ref{sec:otp-plain}, in order to show that this protocol is secure, it suffices to exhibit a simulator that can emulate the channel $\enc\ot\idsup{\sB}=\Phi_0\ot\idsup{\sB}$ using only the ideal functionality.
But this is easy: the simulator can emulate $\Phi_0\ot\idsup{\sB}$ simply by preparing registers $(\sB',\sK)$ in the key state $\xi_0$ and using the ideal functionality.
Formally, the simulator $(\simu_1,\simu_2)$ is specified by $\simu_1 :\sB\to(\sB',\sK,\sB):X\mapsto\xi_0\ot Y$ and $\simu_{2} = I$, so that
\[ \simu_2 \circ\Phi\circ\simu_1 : X\ot Y\mapsto (\Phi\ot\idsup{\sK})(X\ot\xi_0)\ot Y = \Phi_0(X)\ot Y \]
as desired.
(See Figure \ref{fig:qotp-security}.)
\end{proof}

Theorem \ref{thm:q-poss} is in the quantum model; it is not hard to see by its proof that if $\Phi$ is in fact a classical channel, then the resulting protocol $\pi$ is a purely classical protocol.

\subsection{Impossibility of one-time programs for arbitrary channels}

Having seen that, in the plain model, one-time programs do exist for strongly unlockable channels, we now see that they do not exist for weakly unlockable channels.

\begin{theorem}
\label{thm:impossibility-q-maps}
Suppose $\Phi:(\sA,\sB)\to\sC$ is a non-reactive, sender-oblivious polynomial-time quantum computable two-party functionality, and suppose that~$\Phi$ admits an efficient,  non-interactive quantum protocol which quantum-UC-emulates $\cFunc{\Phi}{OTP}$ in the plain model. Then $\Phi$ is  weakly unlockable.

This holds in the perfect, statistical and computational cases.
\end{theorem}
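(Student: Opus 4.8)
The plan is to read off the three objects guaranteed by a secure QOTP and assemble them directly into the witnesses $(\sK,\Xi_0,\cA_\Psi)$ demanded by weak unlockability. Since $\Phi$ admits a secure non-interactive plain-model protocol, the discussion in Section~\ref{sec:otp-plain} supplies: (i) the honest sender's encoding channel $\enc:\sA\to\sP$; (ii) an honest-receiver decoding channel $\dec:(\sP,\sB)\to\sC$ with $\dec\circ(\enc\ot\idsup{\sB})\approx\Phi$ (correctness); and, invoking completeness of the dummy adversary, (iii) a simulator $(\simu_1,\simu_2)$ with $\simu_1:\sB\to(\sB',\sM)$ and $\simu_2:(\sC,\sM)\to(\sP,\sB)$ satisfying the security relation $\simu_2\circ(\Phi\ot\idsup{\sM})\circ(\idsup{\sA}\ot\simu_1)\approx\enc\ot\idsup{\sB}$.

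First I would set $\sK:=\sM$ and $\Xi_0:=\simu_1$, relabelling the output register $\sB'$ (which is of the same type as $\sB$, being the receiver's input fed to the ideal $\Phi$) as $\sB$, so that $\Phi\circ\Xi_0=(\Phi\ot\idsup{\sK})\circ(\idsup{\sA}\ot\simu_1)$. The guiding intuition is precisely the ``reconstruct-then-recompute'' idea from the introduction: the simulator certifies that a single query to $\Phi$ on the input produced by $\simu_1$, together with the side register $\sM$, suffices to rebuild the program register $\sP$ via $\simu_2$; once $\sP$ and $\sB$ are in hand the honest decoder $\dec$ evaluates $\Phi$, and crucially the receiver may first let its own channel $\Psi$ act on its register $\sB$. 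Accordingly, for any receiver channel $\Psi:\sB\to(\sB,\sE)$ I would define $\cA_\Psi:=(\dec\ot\idsup{\sE})\circ(\idsup{\sP}\ot\Psi)\circ\simu_2:(\sC,\sK)\to(\sC,\sE)$, which is efficient because $\simu_2$, $\Psi$, and $\dec$ all are.

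Then I would verify $\cA_\Psi\circ\Phi\circ\Xi_0\approx\Phi\circ\Psi$ in three steps. Post-composing the security relation (iii) with the fixed channel $(\dec\ot\idsup{\sE})\circ(\idsup{\sP}\ot\Psi)$ gives $\cA_\Psi\circ\Phi\circ\Xi_0\approx(\dec\ot\idsup{\sE})\circ(\idsup{\sP}\ot\Psi)\circ(\enc\ot\idsup{\sB})$. Because $\Psi$ touches only $\sB$ while $\enc$ touches only $\sA$, the two commute, so $(\idsup{\sP}\ot\Psi)\circ(\enc\ot\idsup{\sB})=(\enc\ot\idsup{\sB,\sE})\circ(\idsup{\sA}\ot\Psi)$, and the right-hand side becomes $\big(\dec\circ(\enc\ot\idsup{\sB})\big)\ot\idsup{\sE}$ precomposed with $\idsup{\sA}\ot\Psi$. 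Applying correctness (ii) tensored with $\idsup{\sE}$ identifies this with $(\Phi\ot\idsup{\sE})\circ(\idsup{\sA}\ot\Psi)=\Phi\circ\Psi$. Chaining the two approximations by the triangle inequality (statistical case) or by closure of the relevant indistinguishability under composition with efficient channels (computational case) yields the claim; the perfect case is identical with equalities throughout. This exhibits $\sK$, $\Xi_0$, and the family $\cA_\Psi$ required by Definition~\ref{def:unlockable}, so $\Phi$ is weakly unlockable.

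The routine but load-bearing point, which I expect to be the thing to get right rather than a genuine difficulty, is the register bookkeeping: confirming that $\sB'$ may legitimately be substituted for $\sB$ as the ideal functionality's receiver input, that $\Psi$ commutes past $\enc$ as claimed, and that each of the three notions of $\approx$ is preserved under tensoring with $\idsup{\sE}$ and under pre- and post-composition with the fixed efficient channels above. For the diamond norm these are standard non-increasing and stability properties; for the computational case one must additionally track that every inserted channel, including $\Psi$, is efficient, so that the distinguishing reduction remains polynomial-time.
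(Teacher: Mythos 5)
Your proposal is correct and follows essentially the same route as the paper's own proof: it extracts the same three objects (correctness of $\dec\circ\enc\approx\Phi$, the dummy-adversary simulator with $\simu_2\circ\Phi\circ\simu_1\approx\enc\ot\idsup{\sB}$), sets $\Xi_0=\simu_1$ and $\cA_\Psi=\dec\circ\Psi\circ\simu_2$, and chains the two indistinguishabilities through the commutation of $\Psi$ and $\enc$. The only cosmetic difference is that you write the identity tensor factors explicitly and run the chain of approximations in the reverse direction.
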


The intuition of the proof is as follows.  If a channel has a one-time program, then for any adversary there exists a simulator that can match the behaviour of the adversary.  In particular, there must be a simulator that matches the behaviour of the dummy adversary that just outputs the program state: thus, there must be an algorithm that can reconstruct the program state given the output of the channel, thus allowing computation for any output, meeting the definition of a weakly unlockable channel.

\begin{proof}
Suppose $\Phi$ is as in the theorem statement.
By the discussion in Section~\ref{sec:otp-plain},  there exists an encoding channel $\enc:\sA\to\sP$ for the sender and a decoding channel $\dec:(\sP,\sB)\to\sC$ for the receiver such that  $\dec\circ\enc$ is indistinguishable from $\Phi$.
Moreover, there exists a simulator $(\simu_1,\simu_2)$ with
$\simu_2\circ\Phi\circ\simu_1$ being indistinguishable from $\enc\ot\idsup{\sB}.$

We claim that $\Phi$ is weakly unlockable with key channel $\Xi_0=\simu_1$ and recovery algorithm given by \( \cA_\Psi = \dec\circ\Psi\circ\simu_2 \) for any choice of $\Psi$.
To this end fix a choice of $\Psi:\sB\to(\sB,\sE)$.
The channels $\Psi$ and $\enc$ commute, as they act on different input registers.
Thus  using $\approx$ to indicate computational, statistical, or perfect indistinguishability, depending on the scenario,
\[
  \Phi\circ\Psi
  \approx \dec\circ\enc\circ\Psi =
  \dec\circ\Psi\circ\enc \approx
  \underbrace{ \dec\circ\Psi\circ\simu_2 }_{\cA_\Psi} {}\circ\Phi\circ{}\underbrace{\simu_1}_{\Xi_0} \enspace . \qedhere
\]
\end{proof}

An alternate intuition for the impossibility result for classical functions can be found by considering rewinding.
Any correct one-time program state $\rho_{x}$ for a classical function $f(x, \cdot)$ must result in the receiver obtaining an output state $\rho_{x,y}$ that is (almost) orthogonal in the basis in which the receiver measures it, because the measurement of $\rho_{x,y}$ results in $f(x,y)$ with (almost) certainty.
As a result, measurement does not disturb the state (much), so the receiver can reverse the computation to obtain (almost) the program state again, and then rerun the computation to obtain $f(x,y')$ for a different $y'$.
It is possible to give a proof for impossibility of OTPs for classical functions in the plain quantum model using this rewinding argument. Impossibility for classical functions also follows as a special case of the impossibility shown in~\cite{BCS12}.

\subsection{A conjecture on unlockable channels}
\label{sec:conjecture}

As noted earlier, every strongly unlockable channel is also weakly unlockable.
We conjecture that the converse also holds: every weakly unlockable channel is also strongly unlockable.
Though we do not yet have a formal proof of this conjecture for arbitrary $\Phi$, we can nonetheless provide a high-level outline of what such a proof might look like.

\begin{conjecture}
\label{conj:bug}
Every channel $\Phi:(\sA,\sB)\to\sC$ that is weakly unlockable is also strongly unlockable.
\end{conjecture}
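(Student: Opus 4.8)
The plan is to prove Conjecture~\ref{conj:bug} by a gate-teleportation argument that converts the key \emph{channel} $\Xi_{0}$ of a weakly unlockable $\Phi$ into a fixed key \emph{state} $\xi_{0}$, using the universal recovery family $\{\cA_{\Psi}\}$ guaranteed by weak unlockability only to undo the Pauli errors introduced by teleportation. First I would fix the candidate key state. Let $\Omega$ denote the (unnormalized) maximally entangled state on $(\sR,\sB)$, where $\sR$ is a fresh reference copy of $\sB$, and set
\[ \xi_{0} \defeq (\idsup{\sR}\ot\Xi_{0})(\Omega) \]
as a state of $(\sB,\sK')$ with $\sK'\defeq(\sR,\sK)$, the $\sB$-register of $\xi_{0}$ being the $\sB$-output of $\Xi_{0}$. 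With this choice the honest sender's channel $\Phi_{0}:A\mapsto(\Phi\ot\idsup{\sK'})(A\ot\xi_{0})$ prepares on $(\sR,\sC,\sK)$ exactly the Choi state of the map $B\mapsto(\Phi\circ\Xi_{0})(A\ot B)$. This $\Phi_{0}$ is efficient because $\Omega$, $\Xi_{0}$, and $\Phi$ all are.

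Next I would construct the recovery algorithm $\cA:(\sC,\sK',\sB)\to\sC$. Given the program registers $(\sR,\sC,\sK)$ together with the receiver's actual input $B$ on $\sB$, the map $\cA$ (i) performs a Bell measurement of $(\sB,\sR)$, yielding an outcome $k$ and an associated Pauli $\tau_{k}$ on $\sB$; by the standard gate-teleportation identity the measurement consumes $\sR$ and leaves, on $(\sC,\sK)$, precisely the residual state $(\Phi\circ\Xi_{0})(A\ot\tau_{k}B\tau_{k}^{*})$; (ii) applies the recovery channel $\cA_{\Psi_{k}}$ supplied by weak unlockability for the Pauli-conjugation channel $\Psi_{k}:X\mapsto\tau_{k}^{*}X\tau_{k}$ (with $\sE$ trivial, so that $\cA_{\Psi_{k}}:(\sC,\sK)\to\sC$); and (iii) outputs $\sC$. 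Invoking the defining relation $\cA_{\Psi_{k}}\circ\Phi\circ\Xi_{0}\approx\Phi\circ\Psi_{k}$ on the input $A\ot\tau_{k}B\tau_{k}^{*}$ gives, in each branch,
\[ \cA_{\Psi_{k}}\Br{(\Phi\circ\Xi_{0})(A\ot\tau_{k}B\tau_{k}^{*})} \approx \Phi\Pa{A\ot\tau_{k}^{*}\tau_{k}B\tau_{k}^{*}\tau_{k}} = \Phi(A\ot B)\,. \]
Averaging over the equiprobable outcomes $k$ (convexity keeps the total error bounded by the worst single-branch error) then yields $\cA\circ\Phi_{0}\approx\Phi$, as required for strong unlockability. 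Because $\approx$ here is channel (diamond-type) indistinguishability it is robust to the receiver's input being entangled with the environment, and since the teleportation identity is exact and only one layer of the weakly unlockable approximation is invoked, the same display covers the perfect, statistical, and computational meanings of $\approx$ simultaneously.

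The hard part will be efficiency. Weak unlockability furnishes an efficient $\cA_{\Psi}$ for each \emph{fixed} $\Psi$, but it guarantees nothing about whether the assignment $k\mapsto\cA_{\Psi_{k}}$ admits a single uniform, efficiently computable description; since $\sB$ carries exponentially many Pauli labels $\tau_{k}$, the map $\cA$ as written need not be efficient and so is not yet a legitimate strong-unlockable certificate (whose $\cA$ must be efficient in all three cases). Were the efficiency requirement dropped from both definitions, the construction above would already constitute a complete, unconditional proof. I expect that closing the gap requires exposing the structure of the set of operations $\Psi$ that $\Phi$ can ``correct'' --- equivalently, the invertible-subspace structure of $\Phi$ alluded to above --- so as to replace the outcome-indexed family $\{\cA_{\Psi_{k}}\}$ by one uniform recovery procedure. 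A natural attempt is to defer the Pauli correction into a single coherent $\Psi$ controlled by the teleportation outcome, but the outcome $k$ is produced by $\cA$ only \emph{after} $\Phi\circ\Xi_{0}$, whereas the weakly unlockable $\Psi$ must act on $\sB$ \emph{before} $\Phi$; reconciling this timing mismatch is, I believe, the main obstacle and the crux of the conjecture.
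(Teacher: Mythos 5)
Your proposal takes a genuinely different route from the paper's own (explicitly incomplete) proof outline, and both sketches stall at complementary places. The paper instantiates weak unlockability at a single, cleverly chosen channel --- the swap-out map $\Psi_\mathrm{dummy}:B\mapsto\sigma\ot B$ --- to deduce that $\cA_{\Psi_\mathrm{dummy}}\circ\Phi\circ\Xi_0$ never touches $\sB$, hence that $\Phi\circ\Xi_0$ must be invertible on $\sB$, hence that $\Phi$ itself must be invertible on a subspace $\cB_\mathrm{inv}$ determined by the isometry implementing $\Xi_0$; the key state and recovery map are then built by coherently swapping $\cB_\mathrm{inv}$ out for a fixed state and undoing the inversion afterwards. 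You instead take the Choi state of $\Xi_0$ as the key state and instantiate weak unlockability at the exponentially large family of Pauli-conjugation channels $\Psi_k$ indexed by Bell-measurement outcomes, recovering branch by branch. Each approach buys something the other lacks. Yours is exact and self-contained at the information-theoretic level: with efficiency struck from both definitions it settles the conjecture outright in the perfect case (in the statistical and computational cases one additionally needs the per-branch errors of the maps $\cA_{\Psi_k}$ to be uniformly small over the $\abs{\sB}^2$ values of $k$, a uniformity the definition does not obviously supply --- a second facet of the same gap you flag). The paper's outline never achieves even this much, since its invertibility step is precisely what it cannot yet make rigorous; but by aiming from the start at a single swap-and-invert recovery it would, if completed, automatically deliver the one uniform efficient channel $\cA$ that strong unlockability demands --- exactly what your outcome-indexed family $k\mapsto\cA_{\Psi_k}$ fails to provide, as you correctly concede. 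Fittingly, your closing suggestion that merging the branches requires exposing the invertible-subspace structure of $\Phi$ is exactly where the paper's outline begins, and your identification of uniformity as the crux coincides with the paper's own diagnosis that the difficulty is a deep question about invertible subspaces of a channel; the two sketches are complementary halves of the same unfinished argument.
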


\begin{proof}[Proof outline]

Let $\sigma$ be any mixed state of a register $\sB'\equiv\sB$ and consider the channel
\[ \Psi_\mathrm{dummy}:\sB\to(\sB',\sB):B\mapsto\sigma\ot B \]
so that
\[ \Phi\circ\Psi_\mathrm{dummy}:(\sA,\sB)\to(\sC,\sB):A\ot B \mapsto \Phi(A\ot\sigma)\ot B. \]
That is, the channel $\Psi_\mathrm{dummy}$ swaps out the receiver's input register $\sB$ with a dummy register $\sB'$ in state $\sigma$ before $\Phi$ is applied.
By definition, the channel $\Phi\circ\Psi_\mathrm{dummy}$ does not touch $\sB$.

Let $\Xi_0,\cA_\Psi$ witness the weak unlockability of $\Phi$.
For the choice $\Psi=\Psi_\mathrm{dummy}$ it holds that the channel
\[ \cA_{\Psi_\mathrm{dummy}} \circ\Phi\circ\Xi_0 = \Phi\circ\Psi_\mathrm{dummy}\]
also must not touch $\sB$.
Hence it must be that the channel $\Phi\circ\Xi_0$ is invertible on $\sB$.

By enlarging the key register $\sK$ as needed, we may assume without loss of generality that $\Xi_0$ is implemented by an isometry $S$.
Thus, in order to be invertible on $\sB$ it must be that for each pure state $\ket{\psi}$ of $\sB$ the state $S\ket{\psi}$ of $(\sB,\sK)$ can be recovered after $\Phi$ acts on $\sB$.
In other words, $\Phi$ must be invertible on the subspace $\cB_\mathrm{inv}$ of the state space of register $\sB$ containing the image of $\ptr{\sK}{S\ket{\psi}\bra{\psi}S^*}$ for every choice of $\ket{\psi}$.

Before putting $\sB$ through $\Phi$ an alternate key channel $\Xi_0'$ could coherently swap out the subspace $\cB_\mathrm{inv}$ with some portion of the key register $\sK$ in any fixed state $\xi$.
Then, after receiving $\sC$ from $\Phi$, the alternate recovery algorithm $\cA_{\Psi_\mathrm{dummy}}'$ could perform the inversion operation to recover $\xi$, swap this state back into the key register, and then perform the inverse of the inversion operation on $S\ket{\psi}$ so as to apply $\Phi$ to this state.
These modifications yield the desired simulator.
\end{proof}

\ifthenelse{\equal{\compileACM}{0}}{
For the proof of the conjecture to go through, we need the invertibility of $\Phi$ on the aforementioned substance $\cB_{\mathrm{inv}}$ to hold.  It appears, then, that we have stumbled upon an interesting and deep question relating to the invertible subspaces of a channel, akin to the ``decoherence-free'' subspaces studied in the literature on quantum error correction.
}{}

\ifthenelse{\equal{\compileACM}{0}}{
\section{Constructing quantum OTP\lowercase{s} from OTM\lowercase{s}}
}
{ 
\section{Constructing quantum OTP\lowercase{s} \\ from OTM\lowercase{s}}
}

\label{sec:construction}

We now  state our main possibility theorem which establishes non-interactive unconditionally secure quantum computation
using OTM tokens.

\begin{theorem}
\label{thm:main}Let $\Phi$ be non-reactive,
sender-oblivious polynomial-time quantum computable two-party
functionality. Then there exists an efficient, quantum
non-interactive protocol which  statistically
quantum-UC-emulates  $\cFunc{\Phi}{OTP}$ in the case of a corrupt receiver, in
the $\cFunc{}{OTM}$-hybrid model.
\end{theorem}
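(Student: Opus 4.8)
The plan is to factor the construction into two layers and glue them together with Corollary~\ref{cor:composition}. First I would build a protocol~$\rho$ for $\Phi$ in the $\cFunc{g_1,\ldots g_\ell}{BR-OTP}$-hybrid model---a protocol entitled to call an \emph{ideal} bounded-reactive classical one-time program as a black box---and prove that $\rho^{\cFunc{g_1,\ldots g_\ell}{BR-OTP}}$ statistically quantum-UC-emulates $\cFunc{\Phi}{OTP}$ in the case of a corrupt receiver. With that in hand I would invoke Corollary~\ref{cor:reactive-COTP}, which supplies a non-interactive protocol~$\sigma$ that statistically classical-UC-emulates $\cFunc{g_1,\ldots g_\ell}{BR-OTP}$ in the $\cFunc{}{OTM}$-hybrid model, and feed both facts into Corollary~\ref{cor:composition} (taking $\pi=\sigma$, $\cF=\cFunc{g_1,\ldots g_\ell}{BR-OTP}$, and $\cG=\cFunc{\Phi}{OTP}$). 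The corollary then gives that $\rho^{\sigma}$ statistically quantum-UC-emulates $\cFunc{\Phi}{OTP}$ in the case of a corrupt receiver in the $\cFunc{}{OTM}$-hybrid model, which is exactly the theorem. Efficiency and non-interactivity survive each step because both $\sigma$ and $\rho$ are efficient and non-interactive.

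The heart of the work is constructing~$\rho$. In the \emph{create} phase the sender (i) fixes authentication keys for the trap scheme of Section~\ref{sec:techniques-Q-auth}; (ii) encodes his own input register~$\sA$ under these keys; (iii) prepares a ``teleport-through-encode'' pair so the receiver can authenticate her register~$\sB$ non-interactively, the Bell-measurement outcome selecting the key; (iv) prepares whatever auxiliary states the QCAD evaluation of~$\Phi$ requires; (v) prepares a ``teleport-through-decode'' pair for non-interactive de-authentication of the output register~$\sC$; and (vi) programs the ideal BR-OTP with the secret keys, so that it supplies the classical corrections demanded by QCAD during the computation and finally releases the output decryption key---but only if the receiver's messages are consistent with the keys it holds. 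In the \emph{execute} phase the receiver authenticates~$\sB$, runs the QCAD evaluation of~$\Phi$ gate-by-gate while consulting the BR-OTP, and de-authenticates to recover~$\Phi(\rho)$. Correctness of the honest run is immediate from correctness of QCAD and of the teleportation gadgets.

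For security I would exhibit a UC simulator for the dummy adversary, in the spirit of Section~\ref{sec:otp-plain}, exploiting that in the $\cFunc{g_1,\ldots g_\ell}{BR-OTP}$-hybrid model the simulator controls the BR-OTP interface internally and hence knows all authentication keys. The simulator prepares the create-phase states itself and answers the receiver's BR-OTP queries; from those queries it reads off the effective classical data fixing the receiver's authenticated register, decodes the corresponding input~$\sB$, submits it to the single permitted call of the ideal functionality $\cFunc{\Phi}{OTP}$, re-encodes the returned output~$\sC$ into the authenticated form the receiver expects, and releases the simulated decryption key. The crux---and the step I expect to be the main obstacle---is arguing that this simulation is statistically indistinguishable from the real interaction against an \emph{arbitrary}, possibly deviating receiver. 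This reduces to the security of the trap scheme under QCAD: one must show that any receiver strategy departing from the prescribed gate sequence is caught by the authentication checks except with probability exponentially small in the security parameter, so that with overwhelming probability the BR-OTP either releases the decryption key on a faithfully computed (hence ideal-functionality-consistent) output or aborts and reveals nothing. Establishing this detection guarantee for the \emph{entire} reactive, adaptive sequence of QCAD operations, rather than for a single isolated tamper, is the technically delicate part, and it is precisely where the properties of the trap authentication scheme proved in Section~\ref{sec:techniques-Q-auth} must carry the argument.
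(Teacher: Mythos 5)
Your top-level skeleton coincides exactly with the paper's proof of Theorem~\ref{thm:main}: the paper also factors through Theorem~\ref{thm:main-quantum} (a protocol realizing $\cFunc{\Phi}{OTP}$ in the $\cFunc{}{BR-OTP}$-hybrid model), then plugs in Corollary~\ref{cor:reactive-COTP} and composes via Lemma~\ref{lem:R-transitive} plus quantum lifting, i.e.\ via Corollary~\ref{cor:composition}, and your protocol $\rho$ matches the paper's construction (trap-scheme authentication, QCAD, teleport-through-encode/decode gadgets, BR-OTP). The genuine gap is in the simulator, which you yourself identify as the crux, and it is not merely a matter of a delicate estimate: the mechanism you describe cannot work.

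Two structural facts constrain the simulator. First, in the ideal world it never holds the sender's input $\sA$ (that register must be passed directly to the ideal functionality), so it can only populate $\tilde\sA$ with a dummy. Second, the protocol is non-interactive: after the create phase the only channel to the receiver is the classical BR-OTP answers. Your simulator claims to ``read off'' the receiver's input $\sB$ from its classical BR-OTP queries and later to ``re-encode the returned output $\sC$ into the authenticated form the receiver expects.'' Neither step is possible as stated. The input $\sB$ is a quantum state that, in your setup, gets teleported into an authenticated register \emph{held by the receiver}; the Bell outcome $T^\iin$ and the magic-state measurement strings carry no copy of it, so the simulator never physically possesses the state it must feed to $\cFunc{\Phi}{OTP}$. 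Symmetrically, once the quantum message has been sent, the simulator has no quantum channel over which to deliver a re-encoded $\sC$; and even if it could, the receiver's own gate-by-gate evaluation would then process that state a second time. The paper escapes both problems with entanglement planted in the create phase (Sections~\ref{sec:techniques-sim} and~\ref{sec:simulator}, Protocol~\ref{prot:sim}): the encoding gadget is split into a bare EPR pair, so the receiver's Bell measurement teleports $\sB$ \emph{into the simulator's hands} up to the reported correction $T^\iin$, plus a teleport-through-authentication state through which the ideal functionality's output is inserted back into $\tilde\sB_\iin$; and the circuit is a controlled-$U$ whose control bit the simulator sets to off, so the receiver's own computation acts as the identity on the inserted output. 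Without the split gadget and the control bit, no choice of simulated classical keys makes the receiver's final decrypted register equal $\Phi$ of the true inputs. Finally, the indistinguishability argument is not a ``cheating is caught with overwhelming probability'' statement: deviations are equivalent to Pauli attacks, some of which are accepted, and the paper must show the simulator reproduces the accept \emph{and} reject branches exactly, up to the $O(\varepsilon)$ authentication error.
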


The proof of Theorem~\ref{thm:main} follows directly from
Theorem~\ref{thm:main-quantum} below, together with
Corollary~\ref{cor:reactive-COTP}, the quantum lifting theorem
as well as Lemma~\ref{lem:R-transitive}.
\ifthenelse{\equal{\compileACM}{0}}{}{
The proof of Theorem~\ref{thm:main-quantum} is presented in the full version; here, we present an overview of the proof and related techniques.
}

\begin{theorem}\label{thm:main-quantum}Let $\Phi$ be a
non-reactive, sender-oblivious polynomial-time quantum computable
two-party functionality.  Then there exists an efficient,
statistically quantum-UC-secure non-interactive protocol
which realizes $\cFunc{\Phi}{OTP}$ in the case of a corrupt receiver, in the
$\cFunc{}{BR-OTP}$-hybrid model.
\end{theorem}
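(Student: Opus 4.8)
The plan is to give an explicit protocol and then prove its correctness and its UC-security against a corrupt receiver. For the construction, I would first fix a gate-by-gate decomposition of a circuit computing $\Phi$ and have the sender encode his input register $\sA$ under the trap authentication scheme. To authenticate the receiver's register $\sB$ without any interaction, the sender prepares a ``teleport-through-encode'' gadget---a maximally entangled pair with the authentication map applied to one half---so that when the receiver teleports $\sB$ into it, the result is an authenticated version of $\sB$ whose key is pinned down by the classical Bell-measurement outcome. The receiver then applies the gates of $\Phi$ directly to the authenticated data via quantum computing on authenticated data (QCAD); every classical correction that QCAD requires (the key updates, and the measurement-dependent corrections accompanying the non-Clifford gates) is supplied by $\cFunc{}{BR-OTP}$, which holds the authentication key. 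Crucially, the program run by $\cFunc{}{BR-OTP}$ depends only on the key and the circuit, not on the contents of $\sA$, so essentially all of the program register $\sP$ can be prepared in advance. A dual ``teleport-through-decode'' gadget lets the receiver de-authenticate the output register $\sC$, but only after $\cFunc{}{BR-OTP}$ releases the final decryption key, which it does only if the receiver's reported classical transcript is consistent with the key it secretly holds.

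Correctness is the routine direction. I would check that an honest receiver, running QCAD on correctly authenticated data and reporting genuine measurement outcomes, reconstructs $\Phi(\rho)$ up to the statistically negligible soundness error of the trap scheme. This reduces to verifying that each gate propagates correctly through the authentication layer and that the two teleportation gadgets compose to the identity on the logical data---a direct computation in the authentication formalism, with the small error accumulating only additively over the polynomially many gates.

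The substance of the theorem is the security direction, and here working in the $\cFunc{}{BR-OTP}$-hybrid model is the key simplification: the simulator plays the role of the ideal $\cFunc{}{BR-OTP}$ and therefore knows the authentication key throughout. By the completeness of the dummy adversary~\cite{U10} it suffices to build a simulator that, using a single call to $\cFunc{\Phi}{OTP}$, reproduces the joint state a corrupt receiver would hold in a real execution. I would have the simulator run the honest sender's algorithm but with a fixed dummy state in place of the authenticated $\sA$, which the receiver cannot detect because it never holds the key; track every classical message the receiver sends to the simulated $\cFunc{}{BR-OTP}$ and, using its knowledge of the key, abort exactly when the real $\cFunc{}{BR-OTP}$ would; extract the receiver's effective $\sB$-input and hand it to the ideal functionality; and finally use its control over the teleport-through-decode step and the final key release to deliver the output $\Phi(\rho)$ returned by $\cFunc{\Phi}{OTP}$ to the receiver.

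The hard part, as expected, is proving that this simulation is statistically indistinguishable from the real execution, and this is exactly where the security of the trap scheme under QCAD must be invoked. Concretely, I would need to show that any deviation available to the corrupt receiver---tampering with the authenticated registers, or misreporting outcomes to $\cFunc{}{BR-OTP}$---is, except with negligible probability, either caught (so both worlds abort identically) or equivalent to having honestly submitted some well-defined input to $\Phi$, which $\cFunc{\Phi}{OTP}$ then evaluates. Making ``equivalent to an honest input'' precise requires a Pauli-twirl and trap-counting analysis of the trap scheme: one argues that the receiver's effective operation decomposes into a logical input together with a residual error that is flagged by the traps except with probability exponentially small in the number of traps. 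Propagating this guarantee through the entire gate sequence, and bounding the total accumulated authentication error, is the technically demanding step; once it is established, indistinguishability of the real and simulated views of the environment follows, and the theorem is proved.
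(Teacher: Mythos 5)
Your protocol is essentially the paper's (trap-scheme authentication, teleport-through-encode/decode gadgets, QCAD with all classical interaction routed through $\cFunc{}{BR-OTP}$), and your framing of the security argument---dummy adversary, simulator playing the ideal $\cFunc{}{BR-OTP}$, Pauli-twirl plus trap-counting for indistinguishability---is the right framework. But the simulator you describe has a genuine gap, and it is exactly the step where the paper's two non-obvious ideas do the work. If the simulator ``runs the honest sender's algorithm with a dummy $\sA$,'' then the receiver's Bell measurement teleports the input $\sB$ directly into the receiver's \emph{own} authenticated register $\tilde\sB_\iin$; the state never passes through the simulator's hands, so there is no mechanism by which the simulator can ``extract the receiver's effective $\sB$-input and hand it to the ideal functionality.'' Worse, the receiver then applies QCAD for $U$ to (dummy $\tilde\sA$, authenticated $\sB$), so the register it carries into the decode gadget holds an encryption of $\Phi(\mathrm{dummy},\rho_\sB)$, not of $\Phi(\rho)$. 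Your proposed fix---delivering the correct output ``using its control over the teleport-through-decode step and the final key release''---is not available: the decode gadget was distributed in the single non-interactive message at the start, and the only thing the simulator controls afterwards is the classical key $\hat S$, which can remove a one-time pad but cannot convert the receiver's quantum state into $\Phi(\rho)$. An environment that knows $\rho$ distinguishes the two worlds by simply checking the output.

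The paper closes this gap with two devices you omit. First, the simulator does \emph{not} hand out the honest encoding gadget: it splits it, giving the receiver a bare EPR half $\sB_\iin$ while keeping the partner $\sS_\iin$ and a separate teleport-through-authentication state $(\sS_\out,\tilde\sB_\iin)$. The receiver's Bell measurement then teleports $\sB$ into the simulator's register $\sS_\iin$ (up to the reported Pauli $T^\iin$); the simulator makes its one black-box call to $U$ there, and re-teleports the result into $\tilde\sB_\iin$ by its own Bell measurement, absorbing the resulting correction $T^\smm$ into the key its simulated $\cFunc{}{BR-OTP}$ uses thereafter. Second, the protocol is deliberately specified as $\cont{U}$ with an authenticated control qubit: the real sender sets it to $\ket{\mathrm{on}}$, while the simulator sets it to $\ket{\mathrm{off}}$, so the receiver's QCAD execution on the dummy data is logically the identity and the ideal functionality's output, inserted at the \emph{encode} side, flows unharmed through the receiver's computation to the decode gadget. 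Without the split gadget there is no extraction, and without the control-bit trick there is no way to insert the ideal output so that it survives the receiver's own computation; these are the core of the paper's proof, not refinements of it, so the proposal as written does not yield the theorem.
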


\ifthenelse{\equal{\compileACM}{0}}{

}
{

\subsection{Quantum authentication codes}
\label{sec:techniques-Q-auth}
A quantum authentication scheme consists of procedures for encoding and decoding quantum information with a secret classical key~$k$ such that an adversary with no knowledge of~$k$ who tampers with encoded data will be detected with high probability. Quantum authentication codes were first introduced by Barnum,  Cr{\'e}peau,  Gottesman,  Smith and Tapp~\cite{barnum2002authentication}.

Some of the known quantum authentication schemes have the following general form.
Quantum information is encoded according to some quantum error detecting code~$E$ chosen uniformly at random from a special family~$\mathscr{E}$ of codes.
The encoded quantum data is then encrypted according to the quantum one-time pad, meaning that a uniformly random Pauli operation $P$ is applied to the data encoded under~$E$. The secret classical key for schemes of this form is the pair $(E,P)$ describing the choice of code $E$ and encryption Pauli~$P$.
Authenticated quantum data is later verified by decrypting according to~$P$ and then decoding according to~$E$.
Verification passes only if the error syndrome for~$E$ indicates no errors.
\textbf{Terminology:}
In this paper authentication schemes of this form are called \emph{encode-encrypt schemes}.

This  construction is desirable due to the remarkable property  (known as the \emph{Pauli twirl}~\cite{dankert2009exact}) that the Pauli encryption serves to render \emph{any attack} on the scheme equivalent to a \emph{probabilistic Pauli attack} on data encoded with a random code $E\in\mathscr{E}$.
Thus, to establish a secure authentication scheme one need only construct a family~$\mathscr{E}$ of quantum error detecting codes that detect Pauli attacks with high probability over the choice of~$E\in\mathscr{E}$.

Our new \emph{trap authentication scheme} 
falls in this family of codes. The family of codes  $\mathscr{E}$ is based on any quantum error detecting code~$C$ with distance~$d$ that encodes a single qubit into $n$ qubits. Authentication consists of first encoding a qubit under~$C$ and then appending $n$ qubits set to $\ket{0}$ and $n$ qubits set to $\ket{+}$. A random permutation (indexed by a classical key) is then applied.  The first use of this code was implicit in the Shor--Preskill security proof for quantum key distribution~\cite{SP00} (see also~\cite{broadbent2009universal}).

\subsection{Computing on authenticated data}
\label{sec:techniques-computing-on-auth}

At a high level, our main protocol uses quantum authentication codes in order to protect the data from any tampering by the receiver during the computation. An authentication code is insufficient, however, because we want to implement a channel on this authenticated data, as specified by a quantum circuit. For this, we use techniques for quantum computing on authenticated data, as first established for the \emph{signed polynomial code}~\cite{Ben-OrC+06} (see also~\cite{AharonovBE10}), and recently (and independently of our work), for the Clifford authentication code~\cite{DNS12}. More specifically, computing on authenticated data allows acting on the encoded registers in order to implement a known gate, \emph{but without knowledge of the key}.  Normally, any non-identity operation would invalidate the authenticated state, but our encoded operations, together with a \emph{key update} operation, effectively \emph{forces} the application of the desired gate, as otherwise the state would fail verification under the updated key.

Encoded gates are executed in a manner similar to encoded gates in fault-tolerant quantum computation: some gates (such as Pauli gates or the controlled-not gate) are transversal, while other (such as the $\pi/8$ gate) require both an auxiliary register and classical interaction with an entity who knows the encoding keys. This classical interaction makes our quantum one-time program ``interactive'', but only at a classical level. Thus, by extending classical one-time programs to \emph{reactive} functionalities, 
we manage to encapsulate this interaction into a classical one-time program.

\lightparagraph{Comparison with other methods of computing on authenticated data}
Although methods for computing on authenticated data were developed prior to our work, we believe that the simplicity of our trap scheme is an advantage.
For example, whereas the trap scheme is defined for qubits, the signed polynomial scheme of Ben-Or \etal \cite{Ben-OrC+06} acts on $d$-dimensional qudits with~$d$ dictating the security of the scheme.
By necessity, the universal gate set $U_d$ for the polynomial scheme is different for each $d$.
In order to use the polynomial scheme for computation on qubits, one must first embed the desired qu\emph{bit} computation into a qu\emph{dit} computation.
(For example, a naive approach is to simply embed one qubit into each qudit and use only the first two dimensions.)
However this embedding is chosen, one must demonstrate that gates in the original qubit computation can be implemented efficiently using gates from $U_d$.
Normally, an efficient transition between universal gate sets is implied by the Solovay--Kitaev algorithm, but this techniques scales poorly with $d$ and so cannot be used for this purpose for large $d$.
(See Dawson and Nielsen \cite{DawsonN06} and the references therein.)
Although we fully expect such an embedding to admit an efficient implementation, it appears that the issue has not been addressed in the literature.

Compared with the gate implementations in the recent Clifford scheme of Dupuis \etal \cite{DNS12}, our trap scheme is less complex and requires less communication between the sender and the receiver (the Clifford scheme requires communication for \emph{all} circuit elements).

\subsection{Gate teleportation}

The main outline of our protocol is now becoming clearer: the receiver executes the encoded circuit, using techniques for computing on authenticated data.
But how does the receiver get the authenticated version of her data in the first place? And how does the receiver get the decoded output? We resolve this by using encoding and decoding \emph{gadgets} that are inspired by the gate teleportation technique of Gottesman and Chuang~\cite{GottesmanC99}. In this technique, a quantum register undergoes a transformation by a quantum circuit via its teleportation through a special entangled state. In our case, encoding is performed by teleporting the input qubit through an EPR pair, half of which has itself undergone the encoding operation. By revealing the classical result of the teleportation, the authentication key for the output of this process is determined. Decoding is similar. The encoding and decoding gadgets are prepared by the sender as part of the quantum one-time program.

\subsection{Overview of protocol}
\label{sec:techniques-protocol}
Given the tools and techniques described above, the structure of our protocol is as follows (although we suggest the use of the trap authentication scheme, the protocol and proof is applicable to any encode-encrypt quantum authentication scheme that admits computing on authenticated data).
\begin{enumerate} [leftmargin=*]
\item The sender gives the receiver an authenticated version of the sender's input, together with auxiliary states required for evaluating the target circuit. The sender also prepares encoding and decoding gadgets.
\item The sender gives the receiver a bounded reactive classical one-time program that emulates the classical interaction that would occur when using the encoding and decoding gadgets, as well as for computing on authenticated data.
\item The receiver uses the encoding gadget to encode his input; he then performs the target circuit on the authenticated data by performing encoded gates. Finally, he decodes the output using the decoding gadget. All classical interaction is done via the classical one-time program.
\end{enumerate}
As a proof technique, we specify that the target circuit be given as a controlled-unitary, with the control set to~1, for a reason described in the simulation sketch below.

\subsection{Overview of simulator for proof}
\label{sec:techniques-sim}
In order to prove UC security, we must establish that every real-world adversary has an ideal-world simulator. This is done via a rigorous mathematical analysis of any arbitrary attack%
\ifthenelse{\equal{\compileACM}{0}}{%
; the exposition of our proof is aided by a \emph{table representation} we have developed for pure quantum states (see Section~\ref{sec:tabular-representation})%
}{}%
.
The simulator prepares a quantum one-time program as in Section~\ref{sec:techniques-protocol}, with the following modifications:
\begin{enumerate} [leftmargin=*]
\item The encoding gadget is split into two halves. The first half is a simple EPR pair, used to \emph{extract} the input from the adversary; the second half is an encoding gadget, used to \emph{insert} the output of the ideal functionality into the computation.
\item The control-bit for the controlled-unitary is~$0$.
\item The encoded input is a dummy encoded input.
\end{enumerate}
The simulator then executes the adversary on this ``quantum one-time program''. After the use of the encoding gadget, the receiver's input is determined and this is used as input into the ideal functionality. The output of the ideal functionality is then returned into the computation via the encoding gadget. Because the control bit is set to~0, the computation actually performs the identity. When the adversary is honest, the output will therefore be correct. For any behaviour of the adversary, our analysis shows that the ideal and real worlds are indistinguishable.
The simulator thus indistinguishably emulates the real-world behaviour of any adversary with just a single call to the ideal functionality, which establishes UC security.

}

\ifthenelse{\equal{\compileACM}{1}}{}{
The proof of Theorem~\ref{thm:main-quantum} is presented in the
 following sections, which we briefly highlight here; a detailed outline follows in the next section.

\begin{enumerate}
\item Section~\ref{sec:trap} presents our new \emph{trap authentication scheme}, a type of quantum authentication code.  We show how perform a universal set of quantum gates ($X$, $Y$, $Z$, {\sc cnot}, $i$-shift and $\pi/8$ phases, and $H$) on authenticated data without knowing the authentication key.
\item Section~\ref{sec:security} presents our protocol for quantum one-time programs and the proof its security.  Since computation on authenticated data requires updates to be performed that are dependent on the authentication key, our protocol uses a reactive classical one-time program (based on one-time memories) to allow the receiver to non-interactively implement the required operations to correctly compute on the sender's authenticated data.
\end{enumerate}

The following sections \ref{sec:techniques-Q-auth}--\ref{sec:techniques-sim} provide an overview of the proof and related techniques.


\subsection{Quantum authentication codes}
\label{sec:techniques-Q-auth}
A quantum authentication scheme consists of procedures for encoding and decoding quantum information with a secret classical key~$k$ such that an adversary with no knowledge of~$k$ who tampers with encoded data will be detected with high probability. Quantum authentication codes were first introduced by Barnum,  Cr{\'e}peau,  Gottesman,  Smith and Tapp~\cite{barnum2002authentication}.

Some of the known quantum authentication schemes have the following general form.
Quantum information is encoded according to some quantum error detecting code~$E$ chosen uniformly at random from a special family~$\mathscr{E}$ of codes.
The encoded quantum data is then encrypted according to the quantum one-time pad, meaning that a uniformly random Pauli operation $P$ is applied to the data encoded under~$E$. The secret classical key for schemes of this form is the pair $(E,P)$ describing the choice of code $E$ and encryption Pauli~$P$.
Authenticated quantum data is later verified by decrypting according to~$P$ and then decoding according to~$E$.
Verification passes only if the error syndrome for~$E$ indicates no errors.
\textbf{Terminology:}
In this paper authentication schemes of this form are called \emph{encode-encrypt schemes}.

This  construction is desirable due to the remarkable property  (known as the \emph{Pauli twirl}~\cite{dankert2009exact}) that the Pauli encryption serves to render \emph{any attack} on the scheme equivalent to a \emph{probabilistic Pauli attack} on data encoded with a random code $E\in\mathscr{E}$.
Thus, to establish a secure authentication scheme one need only construct a family~$\mathscr{E}$ of quantum error detecting codes that detect Pauli attacks with high probability over the choice of~$E\in\mathscr{E}$.

Our new \emph{trap authentication scheme} 
falls in this family of codes. The family of codes  $\mathscr{E}$ is based on any quantum error detecting code~$C$ with distance~$d$ that encodes a single qubit into $n$ qubits. Authentication consists of first encoding a qubit under~$C$ and then appending $n$ qubits set to $\ket{0}$ and $n$ qubits set to $\ket{+}$. A random permutation (indexed by a classical key) is then applied.  The first use of this code was implicit in the Shor--Preskill security proof for quantum key distribution~\cite{SP00} (see also~\cite{broadbent2009universal}).

\subsection{Computing on authenticated data}
\label{sec:techniques-computing-on-auth}

At a high level, our main protocol uses quantum authentication codes in order to protect the data from any tampering by the receiver during the computation. An authentication code is insufficient, however, because we want to implement a channel on this authenticated data, as specified by a quantum circuit. For this, we use techniques for quantum computing on authenticated data, as first established for the \emph{signed polynomial code}~\cite{Ben-OrC+06} (see also~\cite{AharonovBE10}), and recently (and independently of our work), for the Clifford authentication code~\cite{DNS12}. More specifically, computing on authenticated data allows acting on the encoded registers in order to implement a known gate, \emph{but without knowledge of the key}.  Normally, any non-identity operation would invalidate the authenticated state, but our encoded operations, together with a \emph{key update} operation, effectively \emph{forces} the application of the desired gate, as otherwise the state would fail verification under the updated key.

Encoded gates are executed in a manner similar to encoded gates in fault-tolerant quantum computation: some gates (such as Pauli gates or the controlled-not gate) are transversal, while other (such as the $\pi/8$ gate) require both an auxiliary register and classical interaction with an entity who knows the encoding keys. This classical interaction makes our quantum one-time program ``interactive'', but only at a classical level. Thus, by extending classical one-time programs to \emph{reactive} functionalities, 
we manage to encapsulate this interaction into a classical one-time program.

\lightparagraph{Comparison with other methods of computing on authenticated data}
Although methods for computing on authenticated data were developed prior to our work, we believe that the simplicity of our trap scheme is an advantage.
For example, whereas the trap scheme is defined for qubits, the signed polynomial scheme of Ben-Or \etal \cite{Ben-OrC+06} acts on $d$-dimensional qudits with~$d$ dictating the security of the scheme.
By necessity, the universal gate set $U_d$ for the polynomial scheme is different for each $d$.
In order to use the polynomial scheme for computation on qubits, one must first embed the desired qu\emph{bit} computation into a qu\emph{dit} computation.
(For example, a naive approach is to simply embed one qubit into each qudit and use only the first two dimensions.)
However this embedding is chosen, one must demonstrate that gates in the original qubit computation can be implemented efficiently using gates from $U_d$.
Normally, an efficient transition between universal gate sets is implied by the Solovay--Kitaev algorithm, but this techniques scales poorly with $d$ and so cannot be used for this purpose for large $d$.
(See Dawson and Nielsen \cite{DawsonN06} and the references therein.)
Although we fully expect such an embedding to admit an efficient implementation, it appears that the issue has not been addressed in the literature.

Compared with the gate implementations in the recent Clifford scheme of Dupuis \etal \cite{DNS12}, our trap scheme is less complex and requires less communication between the sender and the receiver (the Clifford scheme requires communication for \emph{all} circuit elements).

\subsection{Gate teleportation}

The main outline of our protocol is now becoming clearer: the receiver executes the encoded circuit, using techniques for computing on authenticated data.
But how does the receiver get the authenticated version of her data in the first place? And how does the receiver get the decoded output? We resolve this by using encoding and decoding \emph{gadgets} that are inspired by the gate teleportation technique of Gottesman and Chuang~\cite{GottesmanC99}. In this technique, a quantum register undergoes a transformation by a quantum circuit via its teleportation through a special entangled state. In our case, encoding is performed by teleporting the input qubit through an EPR pair, half of which has itself undergone the encoding operation. By revealing the classical result of the teleportation, the authentication key for the output of this process is determined. Decoding is similar. The encoding and decoding gadgets are prepared by the sender as part of the quantum one-time program.

\subsection{Overview of protocol}
\label{sec:techniques-protocol}
Given the tools and techniques described above, the structure of our protocol is as follows (although we suggest the use of the trap authentication scheme, the protocol and proof is applicable to any encode-encrypt quantum authentication scheme that admits computing on authenticated data).
\begin{enumerate} [leftmargin=*]
\item The sender gives the receiver an authenticated version of the sender's input, together with auxiliary states required for evaluating the target circuit. The sender also prepares encoding and decoding gadgets.
\item The sender gives the receiver a bounded reactive classical one-time program that emulates the classical interaction that would occur when using the encoding and decoding gadgets, as well as for computing on authenticated data.
\item The receiver uses the encoding gadget to encode his input; he then performs the target circuit on the authenticated data by performing encoded gates. Finally, he decodes the output using the decoding gadget. All classical interaction is done via the classical one-time program.
\end{enumerate}
As a proof technique, we specify that the target circuit be given as a controlled-unitary, with the control set to~1, for a reason described in the simulation sketch below.

\subsection{Overview of simulator for proof}
\label{sec:techniques-sim}
In order to prove UC security, we must establish that every real-world adversary has an ideal-world simulator. This is done via a rigorous mathematical analysis of any arbitrary attack%
\ifthenelse{\equal{\compileACM}{0}}{%
; the exposition of our proof is aided by a \emph{table representation} we have developed for pure quantum states (see Section~\ref{sec:tabular-representation})%
}{}%
.
The simulator prepares a quantum one-time program as in Section~\ref{sec:techniques-protocol}, with the following modifications:
\begin{enumerate} [leftmargin=*]
\item The encoding gadget is split into two halves. The first half is a simple EPR pair, used to \emph{extract} the input from the adversary; the second half is an encoding gadget, used to \emph{insert} the output of the ideal functionality into the computation.
\item The control-bit for the controlled-unitary is~$0$.
\item The encoded input is a dummy encoded input.
\end{enumerate}
The simulator then executes the adversary on this ``quantum one-time program''. After the use of the encoding gadget, the receiver's input is determined and this is used as input into the ideal functionality. The output of the ideal functionality is then returned into the computation via the encoding gadget. Because the control bit is set to~0, the computation actually performs the identity. When the adversary is honest, the output will therefore be correct. For any behaviour of the adversary, our analysis shows that the ideal and real worlds are indistinguishable.
The simulator thus indistinguishably emulates the real-world behaviour of any adversary with just a single call to the ideal functionality, which establishes UC security.


\section{The trap authentication scheme}
\label{sec:trap}

In this section we introduce the trap authentication scheme, which is an example of an encrypt-encode scheme as described in Section \ref{sec:techniques-Q-auth}.
We show how to implement gates from a universal set on data authenticated under this scheme (including measurement in the standard basis), from which it follows that the trap scheme admits QCAD.

\subsection{Definitions and notation}
\label{sec:trap:defs}

In this paper we assume that a quantum error correcting code is specified by a unitary operation~$E$ that can be implemented by a circuit consisting entirely of Clifford gates.
A data register $\sD$ is encoded under code~$E$ by preparing two syndrome registers $(\sX,\sZ)$ in the $\ket{0}$ state and applying~$E$ to~$(\sD,\sX,\sZ)$.
Data is decoded by applying the inverse circuit~$E^*$ and measuring the syndrome registers $(\sX,\sZ)$ in the computational basis.
Any non-zero syndrome measurement result indicates an error (or that cheating has been detected, depending on the context).
These assumptions are met, for example, by every stabilizer code.
In order to minimize the number of symbols for distinct quantum registers we adopt the convention that the tilde $\tilde{\phantom{x}}$  denotes an \emph{encoded register}.
For example, the encoded register $\tilde\sD$ consists of a data portion $\sD$ plus two syndrome registers $\sX,\sZ$ and can be viewed as a triple $\tilde\sD=(\sD,\sX,\sZ)$.

Let $E$ be a code for some data register $\sD$ and let $Q$ be any Pauli acting on $\tilde\sD$.
As $E$ is a Clifford circuit there must be a Pauli $Q_E$ acting on $\tilde\sD$ with $QE=EQ_E$.

\begin{definition}
\label{def:secure-against-Pauli-attacks}
A family $\mathscr{E}$ is said to be \emph{$\epsilon$-secure against Pauli attacks} if for each fixed choice of Pauli $Q$ acting on $\tilde\sD$ it holds that the probability (taken over a uniformly random choice of code $E\in\mathscr{E}$) that $Q_E$ acts nontrivially on logical data and yet has no error syndrome is at most $\epsilon$.
\end{definition}

Formally defining the security of an authentication scheme is tricky work.
(See Barnum \etal \cite{barnum2002authentication} for a discussion of definitional issues.)
Fortunately, the task becomes much easier if we restrict attention to encode-encrypt schemes described in Section \ref{sec:techniques-Q-auth}.
This happy state of affairs is a consequence of the fact that an arbitrary attack on data authenticated under an encode-encrypt scheme is equivalent to a probabilistic mixture of Pauli attacks.
(See below for futher discussion.)
Thus, an encode-encrypt scheme is $\epsilon$-secure against against arbitrary attacks if and only if the underlying family $\mathscr{E}$ of codes is $\epsilon$-secure against Pauli attacks.

\subsection{Properties of every encode-encrypt authentication scheme}
\label{sec:trap:properties}

Before introducing the trap scheme we review some facts about encrypt-encode authentication schemes as described in Section \ref{sec:techniques-Q-auth}.
These facts were largely known to the community prior to the present work and hence a full discussion is relegated to Appendix \ref{appendix:encode-encrypt}.
Our purpose in this paper is to collect these facts in one convenient place and to present them in a way that is amenable to a discussion of QCAD.
In Appendix \ref{appendix:encode-encrypt} we formalize and prove the following facts:
\begin{enumerate}

\item \label{it:pauli-general-secure}
Any family $\mathscr{E}$ of codes that is $\epsilon$-secure against Pauli attacks immediately induces an $\epsilon$-secure encode-encrypt scheme via the construction described in Section \ref{sec:techniques-Q-auth}.

\item \label{it:CSS-measure}
If the codes in $\mathscr{E}$ are CSS codes then measurement of logical data in the computational basis can be implemented by bitwise measurement of authenticated data followed by a classical decoding process in order to determine the measurement result.

\item
The measure-then-decode procedure of property \ref{it:CSS-measure} is equivalent to a decode-then-measure procedure in which the register $\tilde\sD$ is first de-authenticated and then the logical register $\sD$ is measured in the computational basis.
In this procedure, the $X$-syndrome register is also measured in the computational basis so as to check for errors, but the $Z$-syndrome register is simply discarded without any verification.

\item
For any encode-encrypt scheme thus constructed, the measure-then-decode procedure (or equivalently, the decode-then-measure procedure) of property \ref{it:CSS-measure} is also $\epsilon$-secure against arbitrary attacks.

\item \label{it:key-reuse}
In any encode-encrypt scheme the code key $E$ can be re-used to authenticate multiple distinct data registers, provided that each new register gets its own fresh Pauli key $P$.

\end{enumerate}
The proof of property \ref{it:key-reuse} is easy enough that we can give it immediately.
As noted in Section \ref{sec:techniques-Q-auth} (and established in the proof of property \ref{it:pauli-general-secure} in Appendix \ref{appendix:encode-encrypt}), the Pauli encryption serves to render any attack on an encode-encrypt scheme equivalent to a probabilistic mixture of Pauli attacks.
By definition a Pauli attack is a product attack on each physical qubit in the authenticated registers, so security against attacks on one register implies security against attacks on all registers.
This observation was originally made 
in the analysis of the polynomial authentication scheme \cite{Ben-OrC+06}.
(See Ref.\ \cite{AharonovBE10} for a slightly more detailed discussion.)

\subsection{Trap codes yield a secure authentication scheme}
\label{sec:proof-auth-trap}

In this section we describe a method by which any code $E$ with distance $d$ can be used to construct a family $\mathscr{E}$ of codes that is $(2/3)^{d/2}$-secure against Pauli attacks.
Codes of this form shall be called \emph{trap codes}.
It follows immediately from the discussion of Section \ref{sec:trap:properties} that trap codes yield a secure authentication scheme via the encode-encrypt construction of Section \ref{sec:techniques-Q-auth}.
This authentication scheme shall be called the \emph{trap scheme}.

Furthermore, if the underlying code $E$ is a CSS code then so is every member of the associated family~$\mathscr{E}$ of trap codes.
In this case it follows from the discussion of Section \ref{sec:trap:properties} that measurement of logical data in the computational basis can be implemented securely by bitwise measurement of physical data plus classical decoding.

For convenience we restrict attention to codes that encode one logical qubit into $n$ physical qubits.
(These are called $[[n,1,d]]$-codes.)
Given such a code, each member of the associated family $\mathscr{E}$ of trap codes is a $[[3n,1,d]]$ code that is uniquely specified by a permutation $\pi$ of $3n$ elements.
In particular, for each $\pi$ we construct a Clifford encoding circuit $E_\pi$ as follows.
\begin{enumerate}

\item
Encode the data qubit $\sD$ under $E$, producing an $n$-qubit system $(\sD,\sX,\sZ)$.

\item
Introduce two new $n$-qubit syndrome registers $\sX',\sZ'$ in states $\ket{0}^{\ot n},\ket{+}^{\ot n}$, respectively.

\item
Permute all $3n$ qubits of $(\sD,\sX,\sZ,\sX',\sZ')$ according to $\pi$.

\end{enumerate}
The $X$- and $Z$-syndrome registers for the code $E_\pi$ are $(\sX,\sX')$ and $(\sZ,\sZ')$, respectively so that $\tilde\sD=(\sD,(\sX,\sX'),(\sZ,\sZ'))$.
Security of this family against Pauli attacks is easy to prove.

\begin{proposition}[Security of trap codes against Pauli attacks]
\label{prop:trap-secure-against-Paulis}

  The family $\mathscr{E}$ of trap codes based on a code of distance $d$ is $(2/3)^{d/2}$-secure against Pauli attacks.

\end{proposition}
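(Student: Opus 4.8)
The plan is to analyze the conjugated attack $Q_{E_\pi}=E_\pi^{*}Q E_\pi$ for an arbitrary fixed Pauli $Q$ and to exploit the fact that averaging over the key $E_\pi\in\mathscr{E}$ means averaging over a uniformly random permutation $\pi$, which simply scatters the $X$- and $Z$-supports of $Q$ among the $3n$ physical positions. First I would record the effect of the encoding circuit. Writing $E_\pi=\Pi_\pi\Pa{E\ot\idsup{\sX'}\ot H^{\ot n}}$ — encode $(\sD,\sX,\sZ)$ under $E$, leave the $n$ qubits of $\sX'$ as $\ket{0}^{\ot n}$, send the $n$ qubits of $\sZ'$ to $\ket{+}^{\ot n}$ via $H^{\ot n}$, then permute by $\pi$ — and using that $E$ is Clifford, conjugation yields a Pauli $Q_{E_\pi}=\Pa{E^{*}Q'_{\sD\sX\sZ}E}\ot Q'_{\sX'}\ot\Pa{H^{\ot n}Q'_{\sZ'}H^{\ot n}}$, where $Q'=\Pi_\pi^{*}Q\,\Pi_\pi$ is $Q$ with its tensor factors permuted by $\pi$. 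Since every syndrome and trap register is prepared in $\ket{0}$ and measured in the computational basis, ``no error syndrome'' means $Q_{E_\pi}$ carries no $X$-part on any such register. Decoding this register by register: on the $0$-traps $\sX'$ it forces $Q'_{\sX'}$ to be of $Z$-type, i.e.\ the $X$-support of $Q'$ must miss every $0$-trap; on the $+$-traps $\sZ'$, because $H$ swaps $X\leftrightarrow Z$, it forces $Q'_{\sZ'}$ to be of $X$-type, i.e.\ the $Z$-support of $Q'$ must miss every $+$-trap; and on $(\sX,\sZ)$ it is exactly the statement that $Q'_{\sD\sX\sZ}$ has trivial syndrome under the base code $E$.

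Next I would dispatch the nontriviality requirement. Acting nontrivially on the logical qubit while having trivial base syndrome means $Q'_{\sD\sX\sZ}$ is a nontrivial element of the normalizer of $E$, and hence by definition of distance it has weight at least $d$ on the base-code positions. Writing $w_X,w_Z$ for the (permutation-invariant) sizes of the $X$- and $Z$-supports of $Q$, and $w_X^{B},w_Z^{B}$ for the numbers of base-code positions carrying an $X$- resp.\ a $Z$-part, the weight bound gives $w_X^{B}+w_Z^{B}\ge d$, whence $w_X+w_Z\ge w_X^{B}+w_Z^{B}\ge d$. Consequently, whenever the dangerous event can occur at all we have $\max(w_X,w_Z)\ge d/2$ (and if instead $w_X+w_Z<d$ the event is impossible and contributes probability $0$). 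This pigeonhole split between the two supports is exactly what will produce the exponent $d/2$ rather than $d$.

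Finally I would bound the probability. Assume without loss of generality $w_X\ge d/2$. By Step~1 the dangerous event is contained in the event that the $X$-support of $Q'$ avoids all $n$ of the $0$-traps, and under uniformly random $\pi$ this $X$-support is a uniformly random $w_X$-subset of the $3n$ positions. Hence
\[
  \frac{\binom{2n}{w_X}}{\binom{3n}{w_X}}
  =\prod_{i=0}^{w_X-1}\frac{2n-i}{3n-i}
  \le\Pa{\frac{2}{3}}^{w_X}
  \le\Pa{\frac{2}{3}}^{d/2},
\]
where each factor is at most $2n/3n=2/3$; the symmetric case $w_Z\ge d/2$ uses the $+$-traps in place of the $0$-traps. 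This establishes the claimed $(2/3)^{d/2}$ bound.

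The step I expect to be the main obstacle is getting the detection bookkeeping exactly right — in particular tracking the Hadamard on the $+$-trap register so that it is $Z$-type (not $X$-type) errors that are caught there, and confirming that ``no error syndrome'' really does decouple into the three independent conditions above. Once that is in place, the only genuinely load-bearing idea is the observation that a weight-$\ge d$ logical error must spread its $X$- and $Z$-parts so that one support alone has size $\ge d/2$; the remaining hypergeometric estimate is routine.
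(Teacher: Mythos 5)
Your proof is correct and follows essentially the same route as the paper's: the distance of the underlying code forces weight at least $d$ on the dangerous event, a pigeonhole split between $X$- and $Z$-supports gives one support of size at least $d/2$, and the uniformly random permutation makes that support a random subset that must avoid the corresponding $n$ trap qubits, with probability at most $(2/3)^{d/2}$. Your explicit hypergeometric product $\prod_{i}\frac{2n-i}{3n-i}\le(2/3)^{w_X}$ is just a spelled-out version of the paper's sampling-without-replacement-bounded-by-with-replacement step, and your careful decoupling of the syndrome conditions (including the Hadamard bookkeeping on the $\ket{+}$ traps) makes rigorous what the paper leaves implicit.
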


\textbf{Remark.}
The bound in Proposition \ref{prop:trap-secure-against-Paulis} is quite weak and can probably be strengthened significantly by a tighter analysis.
All that really matters is that the security parameter decreases exponentially in $d$.

\begin{proof}

Let $Q$ be a $3n$-qubit Pauli.
In order for $Q$ to act nontrivially on logical data it must have weight $w\geq d$, owing to the fact that the underlying code $E$ has distance $d$.
In this case $Q$ must distribute $w$ non-identity qubit Pauli operations over the $3n$ qubits without triggering any of the traps.
Let us bound the probability of such an event.

In order to have weight $w$ the Pauli $Q$ must specify either an $X$-Pauli on at least $w/2$ qubits or a $Z$-Pauli on at least $w/2$ qubits.
We analyze only the first case; a similar analysis applies to the second case.
If any of these qubits belong to the register $\sX'$ then $Q$ will be detected as an error.
Thus, to avoid detection all $w/2$ of these qubits must not belong to $\sX'$---a sample-without-replacement event whose probability of success is bounded by the probability of a successful sample-with-replacement event.
The probability of success in any one sample is at most $2/3$ and so the probability of $w/2$ successful samples with replacement is at most $(2/3)^{w/2}$.
\end{proof}

\subsection{Performing Gates on the Trap Code}
\label{appendix:TrapCode-GateGadget}

Authentication schemes that also allow for the implementation of a universal set of quantum gates on authenticated data without knowing the key hold great promise for a host of cryptographic applications.
In this section, we exhibit a universal gate set together with implementations of each gate in that set for the trap scheme. As discussed in Section~\ref{sec:techniques-computing-on-auth}, these techniques are used in our quantum one-time programs.

Let us be more explicit about what it means to apply gates to authenticated quantum data without knowing the key.
It is helpful to think of two parties: a trusted \emph{verifier} who prepares authenticated data with secret classical key $k$ and a malicious \emph{attacker} who is to act upon the authenticated data without knowledge of $k$.
The goal is to construct an authentication scheme with the property that for each gate $G$ belonging to some universal set of gates there exists a \emph{gadget} circuit $\tilde G$ that the attacker can perform on authenticated data so as to implement a logical $G$.
Furthermore, we require that the gadget $\tilde G$ be \emph{independent of the choice of classical key $k$} so that it may be implemented by an attacker without knowledge of $k$.

Normally, any non-identity gadget $\tilde G$ would invalidate the authenticated state.
We therefore require a scheme which allows the verifier to validate the state again simply by updating the classical key $k\mapsto k'$.
Moreover, by updating the key in this way the verifier effectively \emph{forces} the attacker to apply the desired gadget $\tilde G$ as otherwise the state would fail verification under the updated key $k'$.

Some gadgets are quite simple.
For example, we shall soon see that the gadget for a logical controlled-NOT in the trap scheme is simply a bitwise controlled-NOT applied to the physical qubits in the authenticated registers.
Other gadgets, however, are more complicated, owing partly to the fact that there is no quantum error detecting code that admits bitwise implementation of every gate in a universal gate set.

Following the example of the polynomial scheme of Ben-Or \etal \cite{Ben-OrC+06}, we borrow from the study of fault-tolerant quantum computation to complete a universal gate set by means of so-called ``magic states''.
A \emph{magic state gadget} for a logical gate $G$ is a circuit that takes an additional (authenticated) ancillary register as input, performs a measurement, and then performs a correction based on the result of that measurement.
A magic state gadget for $G$ works only when the ancillae are prepared in a special (authenticated) \emph{magic state} tailored specifically for the gate~$G$.
(For example, Ben-Or \etal exhibited a magic state and associated gadget for the generalized Toffoli gate under the polynomial scheme \cite{Ben-OrC+06}.)
Thus, implementation of a universal gate set on authenticated data requires the ability to prepare authenticated magic states and the ability to measure authenticated qubits in the computational basis.

\subsection{A universal gate set for the trap scheme}
\label{sec:trap:universal}

In Section \ref{sec:proof-auth-trap} we stipulated that a trap scheme can be constructed from any underlying code $E$ chosen from a large class of codes.
We also noted that if $E$ is a CSS code then so are its associated trap codes, from which it follows that measurement of logical data can be implemented by a simple bitwise measurement of authenticated data.

In addition to being a CSS code, our implementation of a universal gate set for the trap scheme requires an underlying code $E$ for which a logical Hadamard gate $H$ is implemented by bitwise $H$ on each physical qubit.
CSS codes with this property are sometimes called \emph{self-dual CSS codes}.
The seven-qubit Steane code is one example of a self-dual CSS code that suffices for this purpose.
Specifically, it follows from Proposition \ref{prop:trap-secure-against-Paulis} that if our trap scheme is to have security $(2/3)^{d/2}$ then it suffices to base the trap scheme upon the Steane code nested a sufficient number of levels so as to achieve distance $d$.

Our universal gate set consists of the following gates, listed in the order they are presented in the following subsections.
\begin{enumerate}

\item
The standard single-qubit Pauli gates, denoted $X,Y,Z$.

\item
The standard two-qubit controlled-NOT gate, denoted $\cnot$.

\item
The single-qubit $i$-shift phase gate, denoted $K$ and specified by
\( K : \ket{a}\mapsto i^a\ket{a} \) for $a\in\set{0,1}$.

\item
The single-qubit $\pi/8$-phase gate, denoted $T$ and specified by
\( T : \ket{a}\mapsto e^{ai\pi/4}\ket{a} \) for $a\in\set{0,1}$.

\item
The standard single-qubit Hadamard gate, denoted $H$.

\end{enumerate}
This gate set is redundant in the sense that only $\set{\cnot,H,T}$ are required to achieve universality.
Indeed, $T^2=K$ and $K^2=Z$, so why bother listing these extra gates?
The answer is that the gadgets for the Pauli and controlled-NOT gates are very simple.
By contrast, the gadget for $K$ is a magic state gadget that requires $\cnot$ and $Y$ and the gadget for $T$ is a magic state gadget that requires $\cnot$, $X$, and $K$.
Thus, in order to get a $T$ gate we must ``bootstrap'' our way up from Pauli gates, $\cnot$, and $K$.
(If an application calls for only Clifford circuits on authenticated data then the $T$ gate construction can be ignored, as $\set{\cnot,H,K}$ suffice to generate the Clifford circuits.)

\subsubsection{Pauli gates}
\label{sec:universal:pauli}

The gadgets for each Pauli gate $X,Y,Z$ are empty.
As with the polynomial scheme \cite{Ben-OrC+06}, in order to implement a logical Pauli gate in the trap scheme the attacker does absolutely nothing to the authenticated register and the verifier simply updates the Pauli key according to the desired Pauli gate.

In particular, logical Paulis for the seven-qubit Steane code admit straightforward bitwise implementations.
Thus, the verifier can implement a logical Pauli $Q$ in the trap scheme by modifying the Pauli key according to $P\mapsto PQ^{\ot 7}$ where $Q^{\ot 7}$ is applied to registers $(\sD,\sX,\sZ)$, leaving the Pauli key for the trap registers $\sX',\sZ'$ unchanged.

\subsubsection{The controlled-NOT gate}

The gadget for a controlled-NOT from logical qubit $a$ to logical qubit $b$ is a straightforward bitwise $\cnot$ applied from each physical qubit of $a$ to its corresponding physical qubit in $b$.

To see that this simple bitwise gadget implements logical $\cnot$ recall that every CSS code (including the Steane code) admits a bitwise implementation of logical $\cnot$.
Moreover, the $\cnot$ applied bitwise to the trap registers acts trivially on those registers:
\begin{align*}
  \cnot &: \ket{0}\ket{0} \mapsto \ket{0}\ket{0} \\
  &: \ket{+}\ket{+} \mapsto \ket{+}\ket{+}
\end{align*}
Finally, observe that bitwise $\cnot$ is invariant under permutation of the physical qubits \emph{provided that both data blocks are subjected to the same permutation}.
(See Section \ref{sec:trap:re-use} for further discussion.)

The Pauli key is updated according to the well-known effect of $\cnot$ on Pauli operations.
In particular, if the Pauli keys for the $i$th physical qubit from both data blocks are $X^p Z^q, X^rZ^s$, respectively, then the updated Pauli keys for these physical qubits are
\( X^p Z^{q+s}, X^{p+r} Z^s \).

\subsubsection{The $i$-shift gate}
\label{sec:universal:i-shift}

Readers familiar with the Steane code know that a bitwise $K$ gate applied to each physical qubit implements $K^*$ on logical data.
At first glance one might therefore hope that the $K$ gate, like $\cnot$, admits a simple bitwise gadget under the trap scheme.
Unfortunately, trap codes do not admit bitwise implementation of the $K$ gate even if the underlying code does admit such an implementation.
Bitwise implementation fails for trap codes because the trap qubits prepared in state $\ket{+}$ are mapped by $K$ to
$K\ket{+}=\ket{0} + i\ket{1}$.
A trap qubit in this state is detected as a $Z$-error with probability $1/2$.

Instead we require a more complicated magic state gadget for $K$ that uses only Pauli and $\cnot$ gates together with measurement in the computational basis.
Our gadget is a simple modification of the well-known fault-tolerant construction for the $\pi/8$-gate of Boykin \emph{et al.}\ \cite{BoykinM+99}. 
The logical gadget for the $K$ gate is depicted as follows.

\begin{align}
\Qcircuit @C=1em @R=0.7em {
  \lstick{\ket{0}+i\ket{1}} & \ctrl{1} & \gate{Y} \cwx[1] & \rstick{K\ket{\psi}} \qw \\
  \lstick{\ket{\psi}} & \targ & \meter
}
\end{align}
Here $\ket{\psi}$ denotes the arbitrary state of the data qubit; the magic state for this gadget is $\ket{0}+i\ket{1}$.
The physical gadget to be implemented by the attacker on authenticated data is derived from the above logical gadget by replacing the input qubits with authenticated registers and by replacing the logical $\cnot$, $Y$, and measurement with their respective physical gadgets.

Once the authenticated magic state has been prepared, all the gates in this gadget except the measurement can be implemented by solely by the attacker.
The verifier's knowledge of the secret key is required in order to decode the measurement result, which indicates whether a $Y$ correction is needed.

Since $Y$ is a Pauli gate and since Pauli gates require no action from the attacker, this gadget can be implemented with only one-way classical interaction from attacker to verifier.
In particular, the attacker implements the measurement by bitwise measurement of physical data as described in Section \ref{sec:trap:properties}.
The verifier decodes the measurement result (and checks for tampering in the process) and then implements the $Y$ correction (if it is needed) simply by updating the Pauli key for that qubit as described in Section \ref{sec:universal:pauli}.

\subsubsection{The $\pi/8$-phase gate}
\label{sec:universal:T}

The gadget for the $T$ gate is a magic state gadget that is very similar to the magic state gadget for the $K$ gate described Section \ref{sec:universal:i-shift} and consequently much of the discussion from that section applies here.
The original fault-tolerant construction due to Boykin \emph{et al.}\ \cite{BoykinM+99} can be used verbatim as the logical gadget for the $T$ gate in the trap scheme.
Their construction is reproduced below.

\begin{align}
\Qcircuit @C=1em @R=0.7em {
  \lstick{\ket{0}+e^{i\pi/4}\ket{1}} & \ctrl{1} & \gate{KX} \cwx[1] & \rstick{T\ket{\psi}} \qw \\
  \lstick{\ket{\psi}} & \targ & \meter
}
\end{align}
The magic state for this gadget is $\ket{0}+e^{i\pi/4}\ket{1}$.

Whereas the gadget for $K$ presented in Section \ref{sec:universal:i-shift} specified only a Pauli $Y$ correction, the correction required in the $T$ gadget is the Clifford gate $KX$.
Two-way communication between verifier and attacker is required to implement this gadget because the verifier must inform the attacker as to whether to apply a $K$ gate.
Naturally, this $K$ gate, if it is required, is achieved via the magic state gadget presented Section \ref{sec:universal:i-shift}, which requires a separate magic state of its own.

Since $K$ is not a Pauli gate subsequent computation on authenticated quantum data cannot proceed until the correction is applied (if it is needed).
Thus, the verifier must decode the classical measurement result and reply immediately to the attacker with instructions as to whether a correction is required.

\subsubsection{The Hadamard gate}

Similar to the $K$ gate, a logical $H$ gate can be implemented under the Steane code by applying $H$ bitwise to each physical qubit.
One might therefore hope that the $H$ gate admits a simple bitwise gadget under the trap scheme.
Unfortunately, as with the $K$ gate, trap codes do not admit bitwise implementation of $H$ even if the underlying code does admit such an implementation.
Bitwise implementation fails for trap codes because the $\ket{0}$ and $\ket{+}$ trap qubits are swapped by the action of bitwise $H$.
Each trap qubit is thus in a state that is detected as an error with probability $1/2$.

As with the implementations of the $K$ and $T$ gates described previously we shall implement the Hadamard by a magic state gadget.
Whereas the gadgets for $K,T$ are compact and efficient, the simplest known magic state gadget for the Hadamard gate is the teleport-through-Hadamard circuit, which is a special case of the gate teleportation protocol of Gottesman and Chuang \cite{GottesmanC99}.
This circuit is depicted in Figure \ref{fig:teleport-hadamard}.
The magic state for this gadget is the two-qubit maximally entangled state $\ket{00}+\ket{01}+\ket{10}-\ket{11}$.

\begin{figure}[htb]
\begin{center}
\begin{tikzpicture}
\node[left] (inBell) at (-1.5,0.5) {$\ket{00} + \ket{11}$};
\node[left] (inPsi) at (-1.5,-1) {$\ket{\psi}$};
\node[rectangle,thick,draw=black,inner xsep=3pt,inner ysep=3pt] (H1) at (0,1) {$H$};
\draw (inBell.east) .. controls +(0.8,0) and +(-0.8,0) .. (-0.5,1) -- (H1);
\coordinate (cnotStart) at (1.5,-1);
\fill (cnotStart) circle (2.5pt);
\draw (inPsi) -- (cnotStart);
\node[circle,thick,draw=black,inner sep=-1.5pt] (cnotPlus) at (1.5,0) {\bf +};
\draw (cnotStart) -- (cnotPlus);
\draw (inBell.east) .. controls +(0.8,0) and +(-0.8,0) .. (-0.5,0) -- (cnotPlus);
\node[rectangle,thick,draw=black,inner xsep=3pt,inner ysep=3pt] (H2) at (2.5,-1) {$H$};
\draw (cnotStart) -- (H2);
\node[rectangle,thick,draw=black,inner xsep=3pt,inner ysep=3pt] (M1) at (3.75,0) {$\qmeter$};
\draw (cnotPlus) -- (M1);
\node[rectangle,thick,draw=black,inner xsep=3pt,inner ysep=3pt] (M2) at (3.75,-1) {$\qmeter$};
\draw (H2) -- (M2);
\node[rectangle,thick,draw=black,inner xsep=3pt,inner ysep=3pt] (Z) at (4.75,1) {$Z$};
\draw (H1) -- (Z);
\node[rectangle,thick,draw=black,inner xsep=3pt,inner ysep=3pt] (X) at (5.75,1) {$X$};
\draw (Z) -- (X);
\draw[double] (M2) -| (X);
\draw[double] (M1) -| (Z);
\node[right] (outPsi) at (6.5,1) {$H\ket{\psi}$};
\draw (X) -- (outPsi);
\draw[dashed,draw=darkgreen] (-3.6,1.5)--(-3.6,-0.3)--(0.5,-0.3)--(0.5,1.5)--(-3.6,1.5);
\node[above] at (-1.55,1.5) {\small \color{darkgreen} Magic state};
\draw[dashed,draw=blue] (1.1,0.5)--(1.1,-1.5)--(4.3,-1.5)--(4.3,0.5)--(1.1,0.5);
\node[below] at (2.7,-1.5) {\small \color{blue} Bell measurement};
\end{tikzpicture}
\end{center}
\vspace{-1em}
\caption{Teleport-through-Hadamard circuit}
\label{fig:teleport-hadamard}
\end{figure}
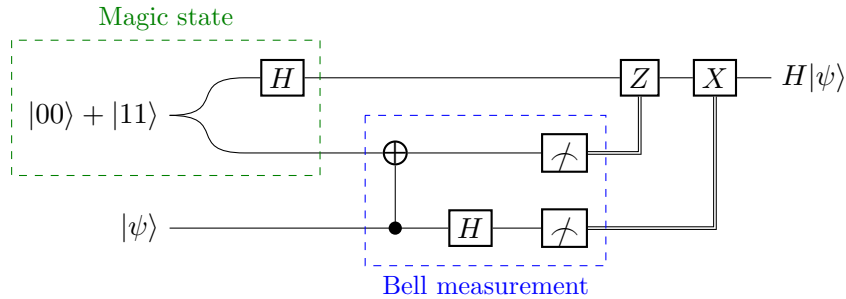

Implementation of the Bell measurement appearing in the above circuit requires that a Hadamard gate be applied to one of the two qubits immediately prior to measurement.
At first glance this requirement might appear circular, as we require a Hadamard gate in order to implement the Hadamard gate.
We claim, however, that it is possible to implement the Hadamard gate bitwise on authenticated data \emph{provided that the qubit is measured immediately afterward} as is the case for a Bell measurement.

This claim is not difficult to justify.
As mentioned above, the effect of the bitwise $H$ gate is to swap the syndrome registers $\sX',\sZ'$ immediately prior to measurement.
But it is trivial to modify any measure-then-decode procedure (such as that mentioned in Section \ref{sec:trap:properties}) so as to take this swap into account.

As with the gadget for $K$ presented in Section \ref{sec:universal:i-shift}, the correction in our gadget for $H$ is a Pauli gate.
Hence, implementation of this gadget requires only one-way classical communication from attacker to verifier.

\subsection{Miscellaneous remarks}

\subsubsection{On the need to re-use code keys in the trap scheme}
\label{sec:trap:re-use}

Classical keys for the trap scheme are specified by a pair $k=(\pi,P)$ indicating the choice of trap code $E_\pi$ and Pauli encryption $P$.
In Section \ref{sec:trap:universal} we saw that each gate $G$ in our universal gate set for the trap scheme has the property that the associated gadget $\tilde G$ is validated by updating \emph{only the Pauli key $P$} so that $(\pi,P)\mapsto(\pi,P')$.
This is fortunate, as our implementation of the controlled-NOT gate necessitates that every authenticated qubit share the same code key $\pi$.
As noted in Section \ref{sec:trap:properties}, the security of of any encode-encrypt scheme (including our trap scheme) is preserved even when the code key is re-used across multiple data registers.

The polynomial scheme of Ben-Or \etal has a similar structure \cite{Ben-OrC+06}.
In particular, a key in the polynomial scheme is a pair $k=(s,P)$ consisting of a ``sign key'' $s$ and a Pauli key $P$.
The generalized controlled-NOT and Toffoli gates for the polynomial scheme necessitate that each authenticated qudit use the same sign key but different Pauli keys.

\subsubsection{Clifford circuits can be implemented offline}

The $\pi/8$-phase ($T$) gate is the only gate in the universal set described in Section \ref{sec:trap:universal} that requires two-way interaction between the attacker and verifier.
Since any Clifford circuit can be implemented without $T$ gates, it follows that any Clifford circuit can be implemented on data authenticated under the trap scheme in an \emph{offline} manner.
In particular, the transmission to the verifier of all measurement results from all gadgets for the gates in a Clifford circuit can be put off until the very end of the computation, at which time the verifier can decode the results and deduce the effect of each correction on the final Pauli key.

\subsubsection{Any circuit can be implemented with only classical interaction}

We can see from the gate constructions of Section \ref{sec:trap:universal} that any quantum circuit whatsoever---be it a Clifford circuit or otherwise---can be implemented on authenticated data with only \emph{classical} interaction between attacker and verifier.

In our application to quantum one-time programs we exploit this fact in order to obtain a non-interactive protocol for two-party quantum computation by encapsulating all classical interaction inside a reactive \emph{classical} one-time program.

The need in our quantum one-time program for a \emph{reactive} classical one-time program is necessitated by the gadget for the $T$ gate.
In the special case that the quantum one-time program is for a Clifford circuit there is no need for a reactive classical one-time program: it suffices to use a non-reactive COTP to compute the final decryption key based on measurement results supplied by the user.

\subsubsection{Rigorous security}

In this section we described how gates from a universal set can be applied to data authenticated under the trap scheme.
Intuitively, we can see that the interactive protocol for implementing an arbitrary circuit is cryptographically secure, yet we did not provide a fully formal proof of security in this section.
If desired, a fully rigorous security proof for computation on authenticated data via the trap scheme can be obtained as a special case of the security proof for our main result on QOTPs in the next section.


\section{Statement and analysis of our QOTP}
\label{sec:security}

In this section we will formally describe and analyze our protocol for quantum one-time programs.  First, in Section~\ref{sec:security:preliminaries}, we describe a few preliminaries, in particular notation related to implementing quantum computation using Clifford gates and magic states.  In Sections~\ref{sec:qotp-spec} and~\ref{sec:qotp-spec-receiver} we specify the protocol actions for honest senders and receivers.  Sections~\ref{sec:receiver:arbitrary} through~\ref{sec:sim:analysis} contain the proof of Theorem~\ref{thm:main-quantum}, that our protocol is statistically quantum-UC-secure realization of $\cFunc{\Phi}{OTP}$ in the case of a corrupt user, in the $\cFunc{}{BR-OTP}$-hybrid model.  The proof makes use of a new tabular representation for operations which is given in Section~\ref{sec:sim:tabular-representation} (see also Appendix \ref{sec:tabular-representation}).  We first describe the general form of the security argument in Section~\ref{sec:receiver:arbitrary}, analyze the environment's interaction with the sender in Section~\ref{sec:user}, and describe and  analyze the simulator for an arbitrary adversary in Sections~\ref{sec:simulator} and~\ref{sec:sim:analysis}.

Our QOTP construction requires an encode-encrypt quantum authentication scheme that admits quantum computation on authenticated data,  such as the trap scheme presented in Section \ref{sec:trap}.
Our construction is completely independent of the specific choice of scheme.
Moreover, if the underlying authentication scheme is $\varepsilon$-secure then  our QOTP is $2\varepsilon$-secure.

Henceforth we assume that such a scheme is fixed---call this scheme $\bQ$.
We let $\mathscr{E}$ denote the family of codes upon which $\bQ$ is based and we let $\cG$ denote the universal gate set that can be implemented on data authenticated under~$\bQ$.

\subsection{Preliminaries}
\label{sec:security:preliminaries}

Suppose that we are given an arbitrary unitary $V$---specified as a circuit using gates from $\cG$ and acting upon register $\sR$---and we wish to construct a circuit that implements $V$ on data authenticated under $\bQ$.
We have already seen how to do this for the trap scheme.
In this section we review this simple process so as to establish basic concepts and notation that will be useful throughout Section~\ref{sec:security}.

\subsubsection{Quantum computation with Clifford gates and magic states}
\label{sec:universal}

Let $r$ denote the number of gates in $V$ that require magic states.
(For the trap scheme, $r$ is the total number of $K$, $H$, and $T$ gates in $V$.)
Alongside the data register $\sR$ we add $r$ registers $\sM_1,\dots,\sM_r$, which are assumed to be initialized to the appropriate magic states $\ket{\mu_1},\dots,\ket{\mu_r}$.
We refer to these~$r$ registers collectively as $\sM$ and to the collective state of $\sM$ as $\ket{\mu}$.

Let $V^{(0)},\dots,V^{(r)}$ be a partition of the gates of $V$ so that $V^{(i)}$ denotes the Clifford circuit consisting of all the gates occurring after measurement of the $i$th magic state and before measurement of the $(i+1)$th magic state.
Note that each $V^{(i)}$ acts only upon registers $(\sR,\sM_{i+1})$ and $V^{(r)}$ acts only upon $\sR$.
The circuit $V$ is implemented as follows:
\begin{enumerate}

\item
Apply $V^{(0)}$.

\item
For $i=1,\dots,r$:
  \begin{enumerate}

  \item
  Measure the magic state register $\sM_i$ and apply the Clifford correction $C^{(i)}$ indicated by that measurement result.

  \item
  Apply $V^{(i)}$.

  \end{enumerate}
\end{enumerate}

In order to see how the action of $V$ is recovered from the above process it is helpful to write this procedure as a channel on $\sR$ in Kraus form.
To this end let $a\in\set{0,1}^r$ be the $r$-dimensional vector of classical measurement results obtained in the above implementation of $V$ and let $V_a$ denote the Clifford circuit resulting from applying $V^{(0)},\dots,V^{(r)}$ interleaved with Clifford corrections $C^{(1)},\dots,C^{(r)}$ according to the measurement outcomes $a$.
Because each $V^{(i)}$ acts upon a unique magic state register, the desired channel on $\sR$ can be written as follows:
\begin{align}
  \label{eq:universal-G}
  \rho \mapsto \sum_{a\in\set{0,1}^r} \bra{a}V_a \Pa{\rho\ot\ket{\mu}\bra{\mu}} V_a^*\ket{a} \enspace .
\end{align}
It is a property of magic state implementations of gates that for each $a$ we have
\begin{align}
  \label{eq:universal-a}
  \bra{a}V_a \ket{\mu}=\frac{1}{2^{r/2}}V \enspace .
\end{align}
So the above channel \eqref{eq:universal-G} is equivalent to
\begin{align}
  \rho \mapsto \sum_{a\in\set{0,1}^r} \frac{1}{2^r} V\rho V^* = V\rho V^*
\end{align}
as desired.

If for some reason the measurement results $a$ are corrupted to some other vector $a'$ then the above procedure will mis-apply the Clifford corrections $C^{(1)},\dots,C^{(r)}$.
Let $V_{a-a'}$ denote the circuit derived from $V$ by inserting extra Clifford gates according to the corruption $a-a'$ so that
\begin{align}
  \label{eq:universal-corrupted}
  \bra{a'}V_a\ket{\mu}=\frac{1}{2^{r/2}}V_{a-a'} \enspace .
\end{align}

\subsubsection{Encoded circuits}
\label{sec:encoded-circuits}

In accordance with the convention introduced in Section \ref{sec:trap:defs}, any register denoted with a tilde $\tilde{\phantom{x}}$ is assumed to be accompanied by its own $X$- and $Z$-syndrome registers.
Given a logical register $\sR$ and a code $E\in\mathscr{E}$, the encoded register $\tilde\sR$ is obtained by applying the operator $E(I_\sR\ot\ket{0})$ to $\sR$.
For brevity we omit the initial state $\ket{0}$ of the syndrome registers and simply view $E$ as an isometry from $\sR$ to $\tilde\sR$ when it is convenient to do so.
When multiple registers $\sR_1,\dots,\sR_k$ are each encoded under the same code $E$ we write $E$ instead of $E^{\ot k}$.

Given a circuit $W$ acting on $\sR$ and composed entirely of states from $\cG$ that can be implemented without magic states (such as the circuits $V_a$ of the previous subsection), it is easy to construct the circuit $\tilde W$ acting on $\tilde\sR$ that implements $W$ on authenticated data: simply replace each logical gate with its equivalent on authenticated data.

Of course, encoding a register under a code $E$ and then applying $\tilde W$ is equivalent to first applying~$W$ and then encoding the result under $E$.
This identity is expressed succinctly under the above notation as
\[ \tilde W E = E W. \]

\subsection{Specification of the sender's message}
\label{sec:qotp-spec}

Let $\Phi:(\sA,\sB)\to\sC$ be a channel specified as a quantum circuit using gates from $\cG$.
In this section we specify the QOTP for $\Phi$.
More specifically, suppose that the input registers $(\sA,\sB)$ are prepared in some state (possibly entangled with other registers) and given to the sender and receiver, respectively.
In this section we show how to construct the sender's message to the receiver given $\sA$.

\subsubsection{Implementing a channel via controlled-unitary}

Without loss of generality we assume that the channel $\Phi$ has the form $\Phi:(\sA,\sB)\to\sB$.
That is, the receiver's output register $\sC=\sB$ has the same size as the input register.
Furthermore, without loss of generality we assume that the channel $\Phi$ is specified by a unitary circuit $U$ acting on registers $(\sA,\sB,\sE)$.
The extra register $\sE$ is an auxiliary register initialized to the $\ket{0}$ state.
The action of $\Phi$ is recovered from $U$ by discarding registers $(\sA,\sE)$ so that $\Phi(\rho)=\ptr{\sA\sE}{U\rho U^*}$ for all $\rho$.

Given a circuit $U$ one can efficiently find a circuit for the controlled-$U$ operation, which we denote $\cont{U}$.
In addition to the registers $(\sA,\sB,\sE)$, the circuit for $\cont{U}$ acts upon an additional qubit called the \emph{control qubit} so that for any pure state $\ket{\psi}$ of $(\sA,\sB,\sE)$ we have
\begin{align*}
\cont{U} &: \ket{\psi}\ot\ket{\mathrm{on}} \mapsto U\ket{\psi}\ot\ket{\mathrm{on}} \\
&: \ket{\psi}\ot\ket{\mathrm{off}} \mapsto \ket{\psi}\ot\ket{\mathrm{off}} \enspace .
\end{align*}
In an effort to minimize the number of distinct register names we bundle the control qubit into the ancillary register $\sE$ with the understanding that the initial state of $\sE$ is $\ket{0}$ for $U$ and $\ket{0}\ket{\mathrm{on}/\mathrm{off}}$ for~$\cont{U}$.

In our QOTP for $\Phi$ the sender and receiver shall implement the circuit for $\tildecont{U}$.
In this protocol the sender prepares the authenticated ancilla register $\tilde\sE$ (including the control qubit) and gives it to the receiver
(along with several other quantum registers and a reactive COTP to be specified shortly).
This control qubit is always initialized to the $\ket{\mathrm{on}}$ state.
As such, one might wonder why we bother implementing $\tildecont{U}$ instead of $\tilde U$.
The purpose of the control qubit is to facilitate the forthcoming security proof.

We also have an alternative QOTP in which $\tilde U$ is implemented directly with no need for $\tildecont{U}$.
However, the security proof for this alternative QOTP is more technically cumbersome than our protocol for $\tildecont{U}$ (see Section \ref{sec:sim:check-magic-only}) so we have elected to present only the protocol for $\tildecont{U}$ in this paper.
Whether or not the controlled-$U$ is necessary for our somewhat simpler security proof is an interesting unresolved question. 

\subsubsection{Specification}

Let $r$ be the number of gates in $\tildecont{U}$ that require magic states.
After the parties have received their input registers $\sA,\sB$, a non-interactive protocol for $\tildecont{U}$ consists of a single message from the sender to the receiver.
This message consists of the following objects:
\begin{enumerate}

\item
Quantum registers $\tilde\sA,\sB_\iin,\tilde\sB_\iin,\sB_\out,\tilde\sB_\out,\tilde\sE,\tilde\sM=(\tilde\sM_1,\dots,\tilde\sM_r)$ prepared in specific states described in Protocol~\ref{prot:prep-sender} below.

\item
An $(r+1)$-round reactive classical one-time program (BR-OTP) described in Protocol~\ref{prot:BR-OTP} below.

\end{enumerate}
In order to prepare this message, a code $E\in\mathscr{E}$ and encryption Paulis $P,S$ are chosen uniformly at random.
The Pauli $S$ acts on $\tilde\sB_\out$ and the Pauli $P$ acts on $(\tilde\sA,\tilde\sB_\iin,\tilde\sE,\tilde\sM)$.
(Here and throughout the paper we adopt the convention that the portion of a multi-register Pauli acting on a single register is denoted by the register name appearing in a subscript.
For example, the portion of $P$ acting on $\tilde\sM$ is denoted $P_{\tilde\sM}$ and it holds that
$P=P_{\tilde\sA}\ot P_{\tilde\sB_\iin}\ot P_{\tilde\sE}\ot P_{\tilde\sM}$.)
The registers are prepared as described in Protocol~\ref{prot:prep-sender} and Figure \ref{fig:enc-dec}.

\begin{protocol} \caption{Message preparation for sender  \label{prot:prep-sender} }
\begin{tabularx}{\textwidth}{lX}
  $(\sB_\iin,\tilde\sB_\iin)$: & Teleport-through-authentication state $P_{\tilde\sB_\iin}E\ket{\phi^+}$.  (See Figure \ref{fig:enc-dec}(a).)\\
  $(\tilde\sB_\out,\sB_\out)$: & Teleport-through-de-authentication state obtained by discarding the syndrome registers of $E^*S\ket{\phi^+}$.  (See Figure \ref{fig:enc-dec}(b).) \\
  $\tilde \sA$: & Authenticated input state.  Obtained by applying $P_{\tilde\sA}E$ to the input register $\sA$. \\
  $\tilde \sE$: & Authenticated ancilla $P_{\tilde\sE}E\ket{0}\ket{\mathrm{on}}$.\\
  $\tilde \sM$: & Authenticated magic states $P_{\tilde\sM}E\ket{\mu}$ where $\ket{\mu}=\ket{\mu_1}\cdots\ket{\mu_r}$ and $\ket{\mu_1},\dots,\ket{\mu_r}$ are the $r$ magic states required for $\cont{U}$.
\end{tabularx}
\end{protocol}

\begin{figure}
\begin{center}
\begin{tikzpicture}[xscale=0.8,yscale=0.8]
\node[right] at (-1.25, 1.7) {(a) Teleport-through-authentication};
\node[left] (phi) at (0,0.5) {$\ket{\phi^{+}}$};
\node[right] (Bin) at (5.25,1) {$\sB_\iin$};
\draw[->] (phi.east) .. controls +(0.8,0) and +(-0.8,0) .. (1,1) -- (Bin);
\draw (1,0.5)--(1,-2)--(2.5,-2)--(2.5,0.5)--(1,0.5);
\node at (1.75,-0.75) {$E$};
\node[below left] (D) at (1,0.1) {\scriptsize $\sD$};
\node[below left] (X) at (1,-0.65) {\scriptsize $\sX$};
\node[below left] (Z) at (1,-1.4) {\scriptsize $\sZ$};
\draw (phi.east) .. controls +(0.8,0) and +(-0.8,0) .. (1,0);
\node[left] (zero1) at (0,-0.75) {$\ket{0}$};
\node[left] (zero2) at (0,-1.5) {$\ket{0}$};
\draw (zero1) -- (1,-0.75);
\draw (zero2) -- (1,-1.5);
\draw (3,0.5)--(3,-2)--(4.5,-2)--(4.5,0.5)--(3,0.5);
\node at (3.75,-0.75) {$P_{\tilde\sB_{\iin}}$};
\draw (2.5,0)--(3,0);
\draw (2.5,-0.75)--(3,-0.75);
\draw (2.5,-1.5)--(3,-1.5);
\draw[->] (4.5,0)--(5.25,0);
\draw[->] (4.5,-0.75)--(5.25,-0.75);
\draw[->] (4.5,-1.5)--(5.25,-1.5);
\draw[thick,decorate,decoration={brace,amplitude=4pt}] (5.5,0.25)--(5.5,-1.75) node[midway,right=3pt]{$\tilde\sB_{\iin}$};
\node at (0,-2) {};
\end{tikzpicture}
~~~
\begin{tikzpicture}[xscale=0.8,yscale=0.8]
\node[right] at (-1.25, 1.7) {(b) Teleport-through-de-authentication};
\node[left] (phi) at (0,0.125) {$\ket{\phi^{+}}$};
\node[right] (tBout) at (6,1) {$\tilde\sB_\out$};
\draw[->] (phi.east) .. controls +(0.8,0) and +(-0.8,0) .. (1,1) -- (tBout);
\draw (1,0.5)--(1,-2)--(2.5,-2)--(2.5,0.5)--(1,0.5);
\node at (1.75,-0.75) {$E^{*}$};
\draw (phi.east) .. controls +(0.8,0) and +(-0.8,0) .. (1,-0.75);
\draw (3,0.5)--(3,-2)--(4.5,-2)--(4.5,0.5)--(3,0.5);
\node at (3.75,-0.75) {$S$};
\draw (2.5,0)--(3,0);
\draw (2.5,-0.75)--(3,-0.75);
\draw (2.5,-1.5)--(3,-1.5);
\node[right] (Bout) at (6,0) {$\sB_{\out}$};
\draw[->] (4.5,0)--(Bout);
\draw[->] (4.5,-0.75)--(5.25,-0.75)--(5.25,-2);
\draw[-] (4.5,-1.5)--(5.25,-1.5);
\draw[thick,decorate,decoration={brace,amplitude=4pt}] (5.5,-0.75)--(5.5,-2);
\node[right] at (5.75,-0.8) {syndrome};
\node[right] at (5.75,-1.3) {registers};
\node[right] at (5.75,-1.8) {discarded};
\node at (0,-2) {};
\end{tikzpicture}
\end{center}
\vspace{-1em}
\caption{Circuits for teleporting through authentication and de-authentication}
\label{fig:enc-dec}
\end{figure}
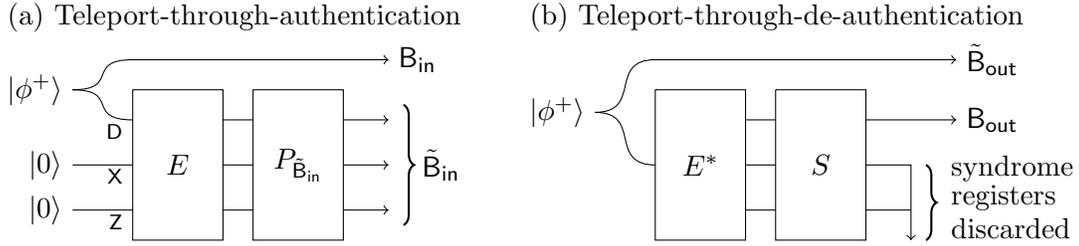

In addition to these registers, the sender prepares an $(r+1)$-round BR-OTP to act as described in Protocol~\ref{prot:BR-OTP}.

\begin{protocol} \caption{Specification of the BR-OTP \label{prot:BR-OTP} }
\begin{enumerate}

\item
Receive (a classical description of) a Pauli $T^\iin$. 

\item
For $i=1,\dots,r$:
  \begin{enumerate}

  \item \label{it:cotp-measure-decode}
  Receive a classical bit string $c_i$. 
  Decode $c_i$ into a classical bit $a_i$ as per the measure-then-decode circuit \eqref{circuit:measure-decode} of Section \ref{sec:auth:measure-decode} with code key $E$ and decryption Pauli $\hat P_{\tilde\sM_i}$ to be specified later.
  Return the decoded bit $a_i$ to the user.

  \item \label{it:cotp-cheating}
  If $c_i$ is inconsistent with $(E,\hat P_{\tilde\sM_i})$---that is, if the measure-then-decode circuit \eqref{circuit:measure-decode} applied to $c_i$ indicates a non-zero error syndrome---then cheating has been detected.
  Remember whether cheating has been detected.

  \end{enumerate}

\item \label{it:cotp-decrypt}
Receive (a classical description of) a Pauli $T^\out$.
If cheating was never detected in step \ref{it:cotp-cheating} then return to the user a (classical description of a) decryption Pauli $\hat S$ to be specified later.
Otherwise return uniformly random bits.
For later convenience we specify that the classical description of $\hat S$ be contained in a register $\sK$.

\end{enumerate}
\end{protocol}

\lightparagraph{This QOTP could be mass-produced}
The state of the authenticated register $\tilde\sA$ depends upon the state of the sender's input register $\sA$.
But the remaining registers could all be prepared (or mass-produced) before $\sA$ is received.
Furthermore, the BR-OTP also does not depend upon $\sA$, but it does depend upon the authentication key for $\tilde\sA$.
This key could be chosen in advance, in which case the BR-OTP could also be mass-produced before $\sA$ is received.

\subsubsection{No need to check integrity of the data registers}
\label{sec:sim:check-magic-only}

Notice that the BR-OTP checks the syndrome registers of \emph{only the magic state register $\tilde\sM$} and ignores any error syndrome present in other registers.
This minimal requirement might at first seem insufficient to establish a secure QOTP as, for example, the receiver is free to tamper with the data registers $(\tilde\sA,\tilde\sB,\tilde\sE)$.
We shall see later that any such attack is equivalent to an attack before and/or after $\Phi$ is applied and hence can be reproduced by a simulator with one-shot access to $\Phi$.

If instead we were to modify our BR-OTP so as to also verify the syndrome registers for the receiver's output register $\sB_\out$ then we could derive a protocol that implements the circuit $U$ directly, as opposed to the controlled-$U$ implemented by the present protocol.
As suggested earlier, however, the formal proof of security for such a protocol is more technically cumbersome than our current proof.

As noted in Section \ref{sec:auth:measure-decode}, the measure-then-decode circuit \eqref{circuit:measure-decode} implemented by the COTP in step \ref{it:cotp-measure-decode} is equivalent to the decode-then-measure circuit \eqref{circuit:decode-measure}.
Thus, we may equivalently assume that in step \ref{it:cotp-measure-decode} the COTP applies the \emph{quantum} circuit $\hat P_{\tilde\sM_i} E^*$ to register $\tilde\sM_i$ followed by measurements of the $X$-syndrome register and the data register.

\subsection{Protocol for an honest receiver, completeness}
\label{sec:qotp-spec-receiver}

The actions for an honest receiver to use the QOTP to obtain $\Phi(\rho)$ are specified in Protocol~\ref{prot:honest-QOTP}.

\begin{protocol}\caption{Honest use of QOTP} \label{prot:honest-QOTP}
\begin{enumerate}

\item
Perform a Bell measurement on $(\sB,\sB_\iin)$ so as to teleport-through-authentication.
This is achieved by first applying the unitary Bell rotation $B$ followed by a measurement $\set{\ket{T^\iin}\bra{T^\iin}}$ where each $\ket{T^\iin}\bra{T^\iin}$ is a projector onto the classical basis state indicating Pauli correction $T^\iin$.
The receiver provides $T^\iin$ as the first input to the BR-OTP.

At this time the contents of $\sB$ have been authenticated and placed in register $\tilde\sB_\iin$.

\item
Run the protocol of Section \ref{sec:universal} with $V=\tildecont{U}$ so as to apply $\tildecont{U}$ to the authenticated registers $(\tilde\sA,\tilde\sB_\iin,\tilde\sE,\tilde\sM)$.
Explicitly,
  \begin{enumerate}

  \item
  Apply $\tilde{\cont{U^{(0)}}}$.

  \item
  For $i=1,\dots,r$:
    \begin{enumerate}

    \item
    Measure the magic state register $\tilde\sM_i$ and provide the result as input to the BR-OTP.

    \item
    The BR-OTP provides as output a single bit indicating whether to apply the associated Clifford correction $\tilde C^{(i)}$.

    \item
    Apply $\tilde{\cont{U^{(i)}}}$.

    \end{enumerate}
  \end{enumerate}
The implementation of $\tildecont{U}$ is now complete.
At this time the register $(\tilde\sA,\tilde\sB_\iin,\tilde\sE)$ holds the authenticated version of $(\sA,\sB,\sE)$ with $\cont{U}$ applied.

\item
Perform a Bell measurement on $(\tilde\sB_\iin,\tilde\sB_\out)$ so as to teleport-through-de-authentication.
As above, the result of this measurement indicates a Pauli correction $T^\out$, which the receiver provides as the final input to the BR-OTP.

At this time the register $\sB_\out$ holds the receiver's output.
This register is encrypted but not authenticated.

\item
For its final output, the BR-OTP provides the Pauli decryption key $\hat S$ to be specified later.
Apply this Pauli to $\sB_\out$ to recover the receiver's output.

\end{enumerate}
\end{protocol}

\subsection{General form of an arbitrary environment}
\label{sec:receiver:arbitrary}

UC security of the protocol of Section \ref{sec:qotp-spec} is proved as follows.
First, consider an arbitrary (possibly cheating) receiver who receives the message from the sender and interacts with the sender's BR-OTP.
Throughout the interaction the receiver also exchanges messages with an \emph{environment}.
UC security is established by exhibiting a simulator that mimics the behaviour of the receiver from the environment's point of view using only calls to the ideal functionality.
Specifically, the environment selects the input registers $(\sA,\sB)$ for both sender and receiver, interacts with either the real receiver or the simulated receiver, then produces a single bit indicating the environment's guess as to whether it interacted with the real or simulated receiver.

By the completeness of dummy adversaries~\cite{Can01,U10}, it suffices to assume that the receiver simply shuttles messages between the environment and the honest sender.
For convenience this dummy receiver can be absorbed into the environment, leaving only an interaction between the environment and the honest sender (or simulated honest sender).
In this section we write down a general form that every such environment must have.

Given the discussion above, we assume without loss of generality that the actions of any environment throughout the interaction 
are as described in Protocol \ref{prot:environment}.

\begin{protocol}\caption{General form of an arbitrary environment} \label{prot:environment}
\begin{enumerate}

\item
Prepare registers $(\sA,\sB,\sW)$ in an arbitrary pure state $\ket{\psi}$ and provide $\sA$ to the sender (or simulated sender) as input.

\item  \label{it:receiver:enc}

Receive from the sender (or simulated sender) quantum registers
$(\tilde\sA,\sB_\iin,\tilde\sB_\iin,\sB_\out,\tilde\sB_\out,\tilde\sE,\tilde\sM)$.

Apply a unitary $K^{(0)}$ to all the registers, then perform a Bell measurement on $(\sB,\sB_\iin)$ so as to teleport-through-authentication.
Provide the resulting Pauli $T^\iin$ as the first input to the BR-OTP, just as an honest receiver would.

\item \label{it:receiver:loop}
For $i=1,\dots,r$:
  \begin{enumerate}

  \item \label{it:receiver:loop:measure}
  Provide the register $\tilde\sM_i$ to the BR-OTP.

  \item \label{it:receiver:loop:decode}
  The BR-OTP returns a single bit $a_i$.

  \item \label{it:receiver:loop:K}
  Apply a unitary $K^{(i)}_{a_i}$ to the remaining registers.
  (That is, every register except $\sB,\sB_\iin,\tilde\sM_1,\dots,\tilde\sM_i$.)
  The subscript $a_i$ indicates that the environment's choice of $K^{(i)}_{a_i}$ could depend upon the bit $a_i$.

  \end{enumerate}

\item \label{it:receiver:dec}
Perform a Bell measurement on $(\tilde\sB_\iin,\tilde\sB_\out)$ so as to teleport-through-de-authentication.
Provide the resulting Pauli $T^\out$ as the final input to the BR-OTP, just as an honest receiver would.

\item \label{it:receiver:dump}
The BR-OTP provides as its final output (a classical description of) a decryption Pauli $\hat S$ to be specified later, stored in a new register $\sK$.

\item \label{it:receiver:final-measurement}
Perform a binary-valued measurement on the remaining registers $\tilde\sA,\sB_\out,\sE,\sW,\sK$.

\end{enumerate}
\end{protocol}

Let us now argue that no generality is lost in assuming an environment described in Protocol \ref{prot:environment}.
\begin{enumerate}

\item

The BR-OTP has by definition classical input/output behaviour that does not preserve coherent superpositions of classical basis states.
As such, any qubits touched by the BR-OTP are assumed to be measured in the computational basis.
Thus, it makes no difference whether data sent to a BR-OTP is measured by the environment or by the BR-OTP, so we are free to assume that any given measurement is performed by whichever party better suits our discussion.

\item

In steps \ref{it:receiver:enc} and \ref{it:receiver:dec} it is assumed that the environment does a proper Bell measurement of $(\sB,\sB_\iin)$ and $(\tilde\sB_\iin,\tilde\sB_\out)$, respectively, and then faithfully reports the results to the BR-OTP just as an honest receiver would.
This assumption is justified because any tampering the environment might wish to do with the measurements or the results thereof can instead be incorporated into the environment's circuits $K^{(i)}_{a_i}$.

For example, an environment who wishes to tamper with the Bell measurement in step \ref{it:receiver:enc} could select the circuit $K^{(0)}$ so as to prepare the registers $(\sB,\sB_\iin)$ in any desired state by swapping out data from the memory workspace $\sW$ (or even other registers from the sender) and then pre-invert the rotation into the Bell basis prior to measurement.
Substituting this choice of $K^{(0)}$ into Protocol~\ref{prot:environment}, we see that such an environment undoes the Bell rotation and thus the resulting measurement of $(\sB,\sB_\iin)$ reproduces exactly the result intended by the environment.

\item

In step \ref{it:receiver:loop:measure} it is assumed that the environment simply transmits the unmeasured register $\tilde\sM_i$ to the BR-OTP.  

As explained above, we are free to make this assumption as the BR-OTP has purely classical input/output behaviour.
This assumption is convenient because it allows us to seamlessly substitute the measure-then-decode procedure with its equivalent decode-then-measure procedure acting on unmeasured quantum data as described in Section \ref{sec:trap:properties}.

\item

Intuitively, the environment's $i$th circuit $K^{(i)}_{a_i}$ could depend upon all prior classical data exchanged with the BR-OTP.
Yet in step \ref{it:receiver:loop:K} we explicitly allow only for dependence upon the most recent bit $a_i$ received from the BR-OTP.
Furthermore, we make the simplifying assumption that each $K^{(i)}_{a_i}$ does not act upon the registers $\sB,\sB_\iin,\tilde\sM_1,\dots,\tilde\sM_{i-1}$ that were measured earlier in the protocol.
No generality is lost, however, because prior circuits are free to copy classical data into fresh space within the workspace register $\sW$ for reference in future rounds.

\end{enumerate}

\subsection{A tabular representation for operators and vectors}
\label{sec:sim:tabular-representation}

Security of our QOTP is established by analysis of the state of all the registers in the environment's posession immediately prior to the final binary-valued measurement in step \ref{it:receiver:final-measurement} of Protocol \ref{prot:environment}.
However, it is difficult to produce a concise description of this state due to the large number of distinct registers and the many rounds of interaction with the BR-OTP.
To combat this difficulty we introduce a tabular representation for operators and vectors.

This new way of describing states is best explained by example.
To this end recall that the initial state of registers $(\sA,\sB,\sW)$ is $\ket{\psi}$ and that register $\sA$ is given to the sender as input.
The sender introduces registers $\sB_\iin,\tilde\sB_\iin,\tilde\sB_\out,\sB_\out,\tilde\sE,\tilde\sM$.
For each choice of code key $E$, encryption keys $P,S$, Bell measurement results $T^\iin,T^\out$, and magic state data- and syndrome-measurement results $a=(a_1,\dots,a_r),s=(s_1,\dots,s_r)$, the unnormalized pure state of the remaining (unmeasured) quantum registers $(\sB_\out,\tilde\sA,\sW,\tilde\sE)$ after step \ref{it:receiver:dec} of Protocol \ref{prot:environment} (that is, immediately before the BR-OTP reveals the decryption key register $\sK$) is given by the table in Figure \ref{fig:tabular-state}.

\begin{figure}
\begin{align*}
  \renewcommand{\arraystretch}{1.3}
  \begin{array}{|c||c|c|c|c|c|c|c|c|}
    \cline{1-1} \cline{6-9}
  \sB & \multicolumn{4}{c|}{} & \multirow{2}{*}{$\bra{T^\iin}B$} & \multirow{12}{*}{$K^{(0)}$} && \ket{\psi}_\sB \\ 
    \cline{1-1} \cline{8-9}
  \sB_\iin & \multicolumn{4}{c|}{} &&&& \multirow{2}{*}{$\ket{\phi^+}$} \\ 
    \cline{1-6} \cline{8-8} 
  \tilde\sB_\iin & \multirow{2}{*}{$\bra{T^\out}B$} & \multirow{6}{*}{$K^{(r)}_{a_r}$} & \multirow{7}{*}{$\dots$} & \multirow{8}{*}{$K^{(2)}_{a_2}$} & \multirow{9}{*}{$K^{(1)}_{a_1}$} && P_{\tilde\sB_\iin} E & \\ 
    \cline{1-1} \cline{8-9}
  \tilde\sB_\out &&&&&&&& \multirow{2}{*}{$\ket{\phi^+}$} \\ 
    \cline{1-2} \cline{8-8} 
  \sB_\out &&&&&&& E^* S & \\ 
    \cline{1-1} \cline{8-9}
  \tilde\sA &&&&&&& P_{\tilde\sA} E & \ket{\psi}_\sA \\ 
    \cline{1-1} \cline{8-9}
  \sW &&&&&&&& \ket{\psi}_\sW \\ 
    \cline{1-1} \cline{8-9}
  \tilde\sE &&&&&&& P_{\tilde\sE} E & \ket{0}\ket{\mathrm{on}} \\ 
    \cline{1-1} \cline{3-3} \cline{8-9}
  \tilde\sM_r && \bra{a_r}\bra{s_r} E^* \hat P_{\tilde\sM_r} &&&&& P_{\tilde\sM_r} E & \ket{\mu_r} \\ 
    \cline{1-1} \cline{3-4} \cline{8-9}
  \vdots & \multicolumn{2}{c|}{} & \ddots &&&& \vdots & \vdots \\ 
    \cline{1-1} \cline{4-5} \cline{8-9}
  \tilde\sM_2 & \multicolumn{3}{c|}{} & \bra{a_2}\bra{s_2} E^* \hat P_{\tilde\sM_2} &&& P_{\tilde\sM_2} E & \ket{\mu_2} \\ 
    \cline{1-1} \cline{5-6} \cline{8-9}
  \tilde\sM_1 & \multicolumn{4}{c|}{} & \bra{a_1}\bra{s_1} E^* \hat P_{\tilde\sM_1} && P_{\tilde\sM_1} E & \ket{\mu_1} \\ 
    \cline{1-1} \cline{6-9}
  \end{array}
\end{align*}
\caption{Tabular representation of the state of the system after step \ref{it:receiver:dec} of Protocol \ref{prot:environment}.}
\label{fig:tabular-state}
\end{figure}

In general, rows in a table correspond to registers in the joint system.
Cells within each row correspond to individual operators or vectors and are ordered from right to left as per the convention for operator composition.
Empty cells implicitly indicate the identity operator.
The table as a whole specifies an operator (or vector) as a composition of product operators on the registers.

In this particular table we have used the notation $\ket{\psi}_\sA,\ket{\psi}_\sB,\ket{\psi}_\sW$ to refer to the portions of $\ket{\psi}$ contained in registers $\sA,\sB,\sW$, respectively.

\subsection{Analysis of the environment's interaction with the real sender}
\label{sec:user}

At the end of its interaction the environment produces a single output bit.
In order to prove UC security we must exhibit a simulator that interacts with the environment such that the distribution on the environment's output bit is nearly identical to that induced by the environment's interaction with the real sender and BR-OTP.

We accomplish this task by deriving an expression for the state $\rho_\mathrm{real}$ of the registers $(\sB_\out,\tilde\sA,\tilde\sE,\sW,\sK)$ in the environment's possession at the end of step \ref{it:receiver:dump} of Protocol \ref{prot:environment}.
In Section \ref{sec:simulator} we will exhibit a simulator, then in Section \ref{sec:sim:analysis} we derive an expression for the associated state $\rho_\mathrm{sim}$ at the end of the simulated interaction and show that the trace distance between $\rho_\mathrm{real}$ and $\rho_\mathrm{sim}$ is negligible, from which the security of our QOTP follows.

Recall that in step \ref{it:receiver:dump} the register $\sK$ is revealed by the BR-OTP.
Thus, if the state of the system at the end of step \ref{it:receiver:dump} is $\rho_\mathrm{real}$ then state of the system at the end of step \ref{it:receiver:dec} must be $\ptr{\sK}{\rho_\mathrm{real}}$.
We  begin our analysis by focussing on the state $\ptr{\sK}{\rho_\mathrm{real}}$.

Let $\abs{\mathscr{P}},\abs{\mathscr{S}}$ denote the number of Pauli operations acting on $(\tilde\sB_\iin,\tilde\sM,\tilde\sP),\tilde\sB_\out$, respectively, so that each encryption Pauli $P,S$ is chosen with probability $1/\abs{\mathscr{P}},1/\abs{\mathscr{S}}$, respectively.
The normalized mixed state $\ptr{\sK}{\rho_\mathrm{real}}$ of $(\sB_\out,\tilde\sA,\sW,\tilde\sE)$ at step \ref{it:receiver:dec} is
\begin{align}
\label{eq:user-channel}
  \ptr{\sK}{\rho_\mathrm{real}} = \frac{1}{\abs{\mathscr{E}}\abs{\mathscr{P}}\abs{\mathscr{S}}}
    \sum_{E,P,S,a,s,T^\iin,T^\out}
    \ket{\eqref{fig:tabular-state}}\bra{\eqref{fig:tabular-state}}
\end{align}
where $\ket{\eqref{fig:tabular-state}}$ denotes the unnormalized pure state depicted in Figure \ref{fig:tabular-state}.

This state can be written much more succinctly.
Let $a=(a_1,\dots,a_r)$ and $s=(s_1,\dots,s_r)$ denote the vectors of magic-state data and syndrome values and write
\begin{align}
  K_a &\defeq K^{(r)}_{a_r} \cdots K^{(1)}_{a_1} K^{(0)} \\
  \bra{a} &\defeq \bra{a_r}\cdots\bra{a_1} \\
  \bra{s} &\defeq \bra{s_r}\cdots\bra{s_1}.
\end{align}
The rows of table \eqref{fig:tabular-state} corresponding to the magic state registers $\tilde\sM_1,\dots,\tilde\sM_r$ can be amalgamated into a single row $\tilde\sM$.
The table \eqref{fig:tabular-state} can be written
\begin{align}
  \label{tab:easyK}
  \ket{\eqref{fig:tabular-state}} =
  \renewcommand{\arraystretch}{1.3}
  \begin{array}{|c||c|c|c|c|}
    \hline
  \sB & \multirow{2}{*}{$\bra{T^\iin}B$} & \multirow{9}{*}{$K_a$} && \ket{\psi}_\sB \\
    \cline{1-1} \cline{4-5}
  \sB_\iin &&&& \multirow{2}{*}{$\ket{\phi^+}$} \\
    \cline{1-2} \cline{4-4}
  \tilde\sB_\iin & \multirow{2}{*}{$\bra{T^\out}B$} && P_{\tilde\sB_\iin} E & \\
    \cline{1-1} \cline{4-5}
  \tilde\sB_\out &&&& \multirow{2}{*}{$\ket{\phi^+}$} \\
    \cline{1-2} \cline{4-4}
  \sB_\out &&& E^* S &  \\
    \cline{1-2} \cline{4-5}
  \tilde\sA &&& P_{\tilde\sA} E & \ket{\psi}_\sA \\
    \cline{1-2} \cline{4-5}
  \sW &&&& \ket{\psi}_\sW \\
    \cline{1-2} \cline{4-5}
  \tilde\sE &&& P_{\tilde\sE} E & \ket{0}\ket{\mathrm{on}} \\
    \cline{1-2} \cline{4-5}
  \tilde\sM & \bra{a}\bra{s} E^* \hat P_{\tilde\sM} && P_{\tilde\sM} E & \ket{\mu} \\
    \hline
  \end{array}
\end{align}
Insert a superfluous $I=(\tildeconta{U})^*(\tildeconta{U})$ acting on registers $(\tilde \sB_\iin,\tilde\sA,\tilde\sE,\tilde\sM)$ into the table \eqref{tab:easyK} immediately prior to $K_a$ and reorder the rows to get
\begin{align}
  \label{tab:superfluous}
  \ket{\eqref{fig:tabular-state}} =
  \renewcommand{\arraystretch}{1.3}
  \begin{array}{|c||c|c|c|c|c|c|}
    \hline
  \sW &&&&&& \ket{\psi}_\sW \\
    \cline{1-2} \cline{4-7}
  \sB & \multirow{2}{*}{$\bra{T^\iin}B$} & \multirow{9}{*}{$K_a$} &&&& \ket{\psi}_\sB \\
    \cline{1-1} \cline{4-7}
  \sB_\iin &&&&&& \multirow{2}{*}{$\ket{\phi^+}$} \\
    \cline{1-2} \cline{4-6}
  \tilde\sB_\iin & \multirow{2}{*}{$\bra{T^\out}B$} && (\tildeconta{U})^*_{\tilde\sB_\iin} & (\tildeconta{U})_{\tilde\sB_\iin} & P_{\tilde\sB_\iin} E & \\
    \cline{1-1} \cline{4-7}
  \tilde\sB_\out &&&&&& \multirow{2}{*}{$\ket{\phi^+}$} \\
    \cline{1-2} \cline{4-6}
  \sB_\out &&&&& E^* S &  \\
    \cline{1-2} \cline{4-7}
  \tilde\sA &&& (\tildeconta{U})^*_{\tilde\sA} & (\tildeconta{U})_{\tilde\sA} & P_{\tilde\sA} E & \ket{\psi}_\sA \\
    \cline{1-2} \cline{4-7}
  \tilde\sE &&& (\tildeconta{U})^*_{\tilde\sE} & (\tildeconta{U})_{\tilde\sE} & P_{\tilde\sE} E & \ket{0}\ket{\mathrm{on}} \\
    \cline{1-2} \cline{4-7}
  \tilde\sM & \bra{a}\bra{s} E^* \hat P_{\tilde\sM} && (\tildeconta{U})^*_{\tilde\sM} & (\tildeconta{U})_{\tilde\sM} & P_{\tilde\sM} E & \ket{\mu} \\
    \hline
  \end{array}
\end{align}
(Here we have used the notation $(\tildeconta{U})^*_{\tilde\sA}$, \emph{etc.}\ to illustrate the fact that the unitary $(\tildeconta{U})^*$ is split over several nonadjacent rows in the table.  If the rows $\tilde\sA,\tilde\sB_\iin,\tilde\sE,\tilde\sM$ were all adjacent then this notation would be unnecessary; all four of those cells could be merged into a single larger cell for the operator $(\tildeconta{U})^*$.
Unfortunately, it is not possible to order the rows so as to eliminate the need to split some operators or states across nonadjacent rows.)

At an intuitive level, if the environment is to avoid being detected as a cheater by the BR-OTP then whatever the environment does in $K_a$ must cancel out $(\tildeconta{U})^*$ so that $\tildeconta{U}$ is implemented on $(\tilde \sB_\iin,\tilde\sA,\tilde\sE,\tilde\sM)$.

Let
\begin{align}
  \label{eq:pauli-decomps}
  K_a (\tildeconta{U})^* &= \sum_{\textrm{Paulis $Q$}} \alpha_Q Q
\end{align}
be a Pauli decomposition of $K_a (\tildeconta{U})^*$.
Substitute this decomposition into \eqref{tab:superfluous} to obtain
\begin{align}
  \ket{\eqref{fig:tabular-state}} = \sum_{\textrm{Paulis $Q$}} \alpha_Q \ket{\eqref{tab:kraus-pauli}}
\end{align}
where the table \eqref{tab:kraus-pauli} is given by
\begin{align}
  \label{tab:kraus-pauli}
  \renewcommand{\arraystretch}{1.3}
  \begin{array}{|c||c|c|c|c|c|c|c|c|c|}
    \hline
  \sW && Q_{\tilde\sW} &&& \ket{\psi}_\sW \\
    \hline
  \sB & \multirow{2}{*}{$\bra{T^\iin}B$} & Q_\sB &&& \ket{\psi}_\sB \\
    \cline{1-1} \cline{3-6}
  \sB_\iin && Q_{\sB_\iin} &&& \multirow{2}{*}{$\ket{\phi^+}$} \\
    \cline{1-2} \cline{3-5}
  \tilde\sB_\iin & \multirow{2}{*}{$\bra{T^\out}B$} & Q_{\tilde\sB_\iin} & (\tildeconta{U})_{\tilde\sB_\iin} & P_{\tilde\sB_\iin} E & \\
    \cline{1-1} \cline{3-6}
  \tilde\sB_\out && Q_{\tilde\sB_\out} &&& \multirow{2}{*}{$\ket{\phi^+}$} \\
    \cline{1-2} \cline{3-5}
  \sB_\out && Q_{\sB_\out} && E^* S &  \\
    \cline{1-2} \cline{3-6}
  \tilde\sA && Q_{\tilde\sA} & (\tildeconta{U})_{\tilde\sA} & P_{\tilde\sA} E & \ket{\psi}_\sA \\
    \cline{1-2} \cline{3-6}
  \tilde\sE && Q_{\tilde\sE} & (\tildeconta{U})_{\tilde\sE} & P_{\tilde\sE} E & \ket{0}\ket{\mathrm{on}} \\
    \cline{1-2} \cline{3-6}
  \tilde\sM & \bra{a}\bra{s} E^* \hat P_{\tilde\sM} & Q_{\tilde\sM} & (\tildeconta{U})_{\tilde\sM} & P_{\tilde\sM} E & \ket{\mu} \\
    \hline
  \end{array}
\end{align}
Let us consider the teleportation operations.
According to Appendix \ref{appendix:teleportation}, the rows for $\sB$ and $\sB_\iin$ can be folded into $\tilde\sB_\iin$ to obtain
\begin{align}
  \ket{\eqref{tab:kraus-pauli}} =
  \abs{\sB}^{-1/2}
  \renewcommand{\arraystretch}{1.3}
  \begin{array}{|c||c|c|c||c|}
    \hline
  \tilde\sB_\iin & \multirow{2}{*}{$\bra{T^\out}B$} & Q_{\tilde\sB_\iin} (\tildeconta{U})_{\tilde\sB_\iin} P_{\tilde\sB_\iin} E Q_{\sB_\iin}^\trans T^\iin Q_\sB & & \ket{\psi}_\sB \\
    \cline{1-1} \cline{3-5}
  \tilde\sB_\out && Q_{\tilde\sB_\out} && \multirow{2}{*}{$\ket{\phi^+}$} \\
    \cline{1-4}
  \sB_\out && Q_{\sB_\out} & E^* S & \\
    \hline
  \multicolumn{5}{|c|}{\textrm{Registers $\tilde\sA,\tilde\sE,\tilde\sM,\sW$ same as \eqref{tab:kraus-pauli}.}} \\ \hline
  \end{array}
\end{align}
(Here $\abs{\sR}$ denotes the dimension of the space associated with register $\sR$.)
We can then fold $\tilde\sB_\iin,\tilde\sB_\out$ into $\sB_\out$ so as to obtain
\begin{align}
  \label{tab:kraus-pauli-teleportation}
  \ket{\eqref{tab:kraus-pauli}} =
  \abs{\sB}^{-1/2}\abs{\tilde\sB_\iin}^{-1/2}
  \renewcommand{\arraystretch}{1.3}
  \begin{array}{|c||c|c|} \hline
  \sB_\out & Q_{\sB_\out} E^* S Q_{\tilde\sB_\out}^\trans T^\out Q_{\tilde\sB_\iin} (\tildeconta{U})_{\tilde\sB_\iin} P_{\tilde\sB_\iin} E Q_{\sB_\iin}^\trans T^\iin Q_\sB & \ket{\psi}_\sB \\ \hline
  \multicolumn{3}{|c|}{\textrm{Registers $\tilde\sA,\tilde\sE,\tilde\sM,\sW$ same as \eqref{tab:kraus-pauli}.}} \\ \hline
  \end{array}
\end{align}
Notice the normalization factors $\abs{\sB}^{-1/2},\abs{\tilde\sB_\iin}^{-1/2}$ picked up in the process---we will include those later.
Let us re-write the entire table, amalgamating $\tildeconta{U}$, $P$, and $\ket{\psi}$.
\begin{align}
  \label{tab:collect-Q}
  \ket{\eqref{tab:kraus-pauli}} =
  \renewcommand{\arraystretch}{1.3}
  \begin{array}{|c||c|c|c|c|c|c|c|}
    \hline
  \sW &&  Q_\sW &&&&& \multirow{3}{*}{$\ket{\psi}$} \\
    \cline{1-7}
  \sB_\out && Q_{\sB_\out} E^* S Q_{\tilde\sB_\out}^\trans T^\out Q_{\tilde\sB_\iin} & \multirow{4}{*}{$\tildeconta{U}$} & \multirow{4}{*}{$P$} & E & Q_{\sB_\iin}^\trans T^\iin Q_\sB & \\
    \cline{1-3} \cline{6-7}
  \tilde \sA && Q_{\tilde\sA} &&& E && \\
    \cline{1-3} \cline{6-8}
  \tilde \sE && Q_{\tilde\sE} &&& E && \ket{0}\ket{\mathrm{on}} \\
    \cline{1-3} \cline{6-8}
  \tilde \sM & \bra{a}\bra{s} E^* \hat P_{\tilde\sM} & Q_{\tilde\sM} &&& E && \ket{\mu} \\
    \hline
  \end{array}
\end{align}
The next step is to use commutation relations so as to massage this table into a nice form for the Pauli sandwich (Appendix \ref{appendix:encode-encrypt}).

To this end let us first dispense with the annoying transposition operations appearing on $Q_{\tilde\sB_\out}^\trans,Q_{\sB_\iin}^\trans$.
Note that for any multi-qubit Pauli operator $P$ the transpose $P^\trans=\pm P$ with the negative phase occurring whenever $P$ is a product of an odd number of $Y$-Paulis.
Let $y(P)\in\set{\pm 1}$ denote this phase so that $P^\trans = y(P)P$ for all Paulis $P$.
We apply this identity to the above table, producing an extra phase factor $y(Q_{\tilde\sB_\out}) y(Q_{\sB_\iin})$.

We require some notation before proceeding further.
For any Paulis $P,Q$ we write
\begin{align}
  c(P,Q) \defeq
    \left\{
    \begin{array}{ll}
      +1 & \textrm{if $PQ=QP$} \\
      -1 & \textrm{if $PQ=-QP$}
    \end{array}
    \right.
\end{align}
for the phase picked up by commuting $P,Q$.
By analogy to the notation introduced in Section \ref{sec:trap:defs}, for any code $E$ and any Pauli $P$ we let $P_E$ denote the Pauli with $P_EE=EP$.
Recall that for any vector $a$ of measurement results the encoded circuit $\tildeconta{U}$ is a Clifford circuit.
As such, for any Pauli $P$ we let $\pi_a(P)$ denote the Pauli with
\begin{align}
  \pi_a(P)(\tildecont{U}) = (\tildeconta{U}) P \enspace .
\end{align}
Finally, recall that $(\tildeconta{U}) E = E (\cont{U_a})$ as noted in Section \ref{sec:encoded-circuits}.

Now, concentrate only on the rightmost four columns of the table \eqref{tab:collect-Q}:
\begin{enumerate}
\item
  $Q_{\sB_\iin}$ commutes with $T^\iin$, picking up a phase of $c(T^\iin,Q_{\sB_\iin})$.
\begin{align}
  \renewcommand{\arraystretch}{1.3}
  \begin{array}{|c|c|c|c|} \hline
   \multirow{4}{*}{$\tildeconta{U}$} & \multirow{4}{*}{$P$} & E & T^\iin Q_{\sB_\iin} Q_{\sB} \\
     \cline{3-4}
   & & E & \\ \cline{3-4}
   & & E & \\ \cline{3-4}
   & & E & \\ \hline
  \end{array}
\end{align}
\item
  Replace $ET^\iin$ with $T^\iin_E E$.
\begin{align}
  \renewcommand{\arraystretch}{1.3}
  \begin{array}{|c|c|c|c|c|} \hline
  \multirow{4}{*}{$\tildeconta{U}$} & \multirow{4}{*}{$P$} &  T^\iin_E & E & Q_{\sB_\iin} Q_{\sB} \\
    \cline{3-5}
  &&& E & \\ \cline{3-5}
  &&& E & \\ \cline{3-5}
  &&& E & \\ \hline
  \end{array}
\end{align}
\item
Since $\tildeconta{U}$ is a Clifford, we can pull $PT^\iin_E$ through $\tildeconta{U}$.
\begin{align}
  \renewcommand{\arraystretch}{1.3}
  \begin{array}{|c|c|c|c|} \hline
  \multirow{4}{*}{$\pi_a(PT^\iin_E)$} & \multirow{4}{*}{$\tildeconta{U}$} & E & Q_{\sB_\iin} Q_{\sB} \\
    \cline{3-4}
  && E & \\ \cline{3-4}
  && E & \\ \cline{3-4}
  && E & \\ \hline
  \end{array}
\end{align}
\item
Then we can pull $E$ through $\tildeconta{U}$.
\begin{align}
  \renewcommand{\arraystretch}{1.3}
  \begin{array}{|c|c|c|c|} \hline
  \multirow{4}{*}{$\pi_a(PT^\iin_E)$} & E & \multirow{4}{*}{$\cont{U_a}$} & Q_{\sB_\iin} Q_{\sB} \\
    \cline{2-2}\cline{4-4}
  & E && \\ \cline{2-2}\cline{4-4}
  & E && \\ \cline{2-2}\cline{4-4}
  & E && \\ \hline
  \end{array}
\end{align}
\end{enumerate}
The entire table is now
\begin{align}
  \label{tab:pre-sandwich}
  \ket{\eqref{tab:kraus-pauli}} =
  \renewcommand{\arraystretch}{1.3}
  \begin{array}{|c||c|c|c|c|c|c|c|}
    \hline
  \sW &&  Q_\sW &&&&& \multirow{3}{*}{$\ket{\psi}$} \\
    \cline{1-7}
  \sB_\out && Q_{\sB_\out} E^* S Q_{\tilde\sB_\out} T^\out Q_{\tilde\sB_\iin} & \multirow{4}{*}{$\pi_a(PT^\iin_E)$} & E & \multirow{4}{*}{$\cont{U_a}$} & Q_{\sB_\iin} Q_\sB & \\
    \cline{1-3} \cline{5-5} \cline{7-7}
  \tilde \sA && Q_{\tilde\sA} && E &&& \\
    \cline{1-3} \cline{5-5} \cline{7-8}
  \tilde \sE && Q_{\tilde\sE} && E &&& \ket{0}\ket{\mathrm{on}} \\
    \cline{1-3} \cline{5-5} \cline{7-8}
  \tilde \sM & \bra{a,s} E^* \hat P_{\tilde\sM} & Q_{\tilde\sM} && E &&& \ket{\mu} \\
    \hline
  \end{array}
\end{align}
We now have an expression for the decryption Pauli $\hat P_{\tilde\sM}$ used by the BR-OTP:
\begin{align}
  \hat P_{\tilde\sM} \defeq \pi_a(PT^\iin_E)_{\tilde\sM}.
\end{align}
In order to clean up the extra phases we introduced by the above commutation relations we write
\begin{align}
  \alpha_{Q,T^\iin} \defeq y(Q_{\tilde\sB_\out}) y(Q_{\sB_\iin}) c(T^\iin,Q_{\sB_\iin}) \alpha_Q
\end{align}
for each choice of Paulis $Q,T^\iin$ so that the original state \eqref{fig:tabular-state} can be written
\begin{align}
  \label{eq:pre-sandwich-state}
  \ket{\eqref{fig:tabular-state}} =  \abs{\sB}^{-1/2}\abs{\tilde\sB_\iin}^{-1/2}
 \sum_{\textrm{Paulis $Q$}} \alpha_{Q,T^\iin} \ket{\eqref{tab:pre-sandwich}}.
\end{align}

Until now we have used only simple commutation relations to derive an alternate expression for the state $\ket{\eqref{tab:kraus-pauli}}$.
Now we wish to employ the Pauli sandwich on register $\tilde\sM$ so that the double sum over $Q_{\tilde\sM}$ becomes a single sum.
To this end consider the table
\begin{align}
  \label{tab:pre-sandwich-M}
  \renewcommand{\arraystretch}{1.3}
  \begin{array}{|c||c|c|c|c|c|c|}
    \hline
  \sW &  Q_\sW &&&&& \multirow{3}{*}{$\ket{\psi}$} \\
    \cline{1-6}
  \sB_\out & Q_{\sB_\out} E^* S Q_{\tilde\sB_\out} T^\out Q_{\tilde\sB_\iin} & \pi_a(PT^\iin_E)_{\sB_\out} & E & \multirow{4}{*}{$\cont{U_a}$} & Q_{\sB_\iin} Q_\sB & \\
    \cline{1-4} \cline{6-6}
  \tilde \sA & Q_{\tilde\sA} & \pi_a(PT^\iin_E)_{\tilde\sA} & E &&& \\
    \cline{1-4} \cline{6-7}
  \tilde \sE & Q_{\tilde\sE} & \pi_a(PT^\iin_E)_{\tilde\sE} & E &&& \ket{0}\ket{\mathrm{on}} \\
    \cline{1-4} \cline{6-7}
  \tilde \sM &&&&&& \ket{\mu} \\
    \hline
  \end{array}
\end{align}
obtained from \eqref{tab:pre-sandwich} by deleting everything in row $\tilde\sM$ occurring after the operator $\cont{U_a}$.
For each Pauli $Q_{\tilde\sM}$ define
\begin{align}
  \label{eq:L-def}
  \ket{Q_{\tilde\sM}} \defeq
  \sum_{\lnot Q_{\tilde\sM}} \alpha_{Q,T_\iin} \ket{\eqref{tab:pre-sandwich-M}}.
\end{align}
Here the summation over $\lnot Q_{\tilde\sM}$ is shorthand for a summation over all Paulis $Q_\sB$, $Q_{\sB_\iin}$, $Q_{\tilde\sB_\iin}$, $Q_{\tilde\sB_\out}$, $Q_{\sB_\out}$, $Q_{\tilde\sA}$, $Q_{\tilde\sE}$, $Q_\sW$---that is, all Paulis comprising $Q$ except the Pauli $Q_{\tilde\sM}$ acting on $\tilde\sM$.
Note that the only dependence of $\ket{Q_{\tilde\sM}}$ on $Q_{\tilde\sM}$ is in the scalar $\alpha_{Q,T^\iin}$.
Keep in mind that $\ket{Q_{\tilde\sM}}$ also depends upon $T^\iin$, $T^\out$, $a$, $s$, $E$, $\pi_a(PT^\iin_E)_{\tilde\sB_\out}$, $\pi_a(PT^\iin_E)_{\tilde\sA}$, $\pi_a(PT^\iin_E)_{\tilde\sE}$, $S$; we have simply omitted these parameters for brevity.

By the Pauli sandwich (Lemma \ref{lm:pauli-sandwich}) we have that the mixed state $\ptr{\sK}{\rho_\mathrm{real}}$ from \eqref{eq:user-channel} equals
\begin{align}
  \ptr{\sK}{\rho_\mathrm{real}} &=
  \frac{1}{\abs{\mathscr{E}}\abs{\mathscr{P}}\abs{\mathscr{S}}\abs{\sB}\abs{\tilde\sB_\iin}}
  \sum_{E,P,S,a,s,T^\iin,T^\out,Q_{\tilde\sM}}
  \bra{a}\bra{s} E^* Q_{\tilde\sM} E \ket{Q_{\tilde\sM}} \bra{Q_{\tilde\sM}} E^* Q_{\tilde\sM} E \ket{a}\ket{s}
  \label{eq:user-channel-post-sandwich}
\end{align}
From here consider two cases corresponding to zero and non-zero syndrome measurements $(s=0,s\neq 0$), indicating acceptance and rejection by the BR-OTP, respectively.
In so doing we decompose $\rho_\mathrm{real}$ into a sum of three positive semidefinite operators
\begin{align}
  \label{eq:real-goal}
  \rho_\mathrm{real} = R_\mathrm{rej} + R_\mathrm{acc} + [\varepsilon_\mathrm{real}]
\end{align}
where $[\varepsilon_\mathrm{real}]$ has trace at most $\varepsilon$.
Then in Section \ref{sec:sim:analysis} we will show that
\begin{align}
  \label{eq:sim-goal}
  \rho_\mathrm{sim} = R_\mathrm{rej} + R_\mathrm{acc} + [\varepsilon_\mathrm{sim}]
\end{align}
for some $[\varepsilon_\mathrm{sim}]$ that has trace at most $\varepsilon$, from which it will follow that
\begin{align}
  \tnorm{\rho_\mathrm{real} - \rho_\mathrm{sim}} = \tnorm{[\varepsilon_\mathrm{real}] - [\varepsilon_\mathrm{sim}]} \leq 2\varepsilon
\end{align}
as desired.

\subsubsection{In the event of acceptance}
\label{sec:user:accept}

Consider the unnormalized mixed state obtained from the expression \eqref{eq:user-channel-post-sandwich} for $\ptr{\sK}{\rho_\mathrm{real}}$ by restricting attention to those terms in the summation with syndrome outcome $s=0$.
(See Appendix \ref{appendix:encode-encrypt} for explanations of the notation for the logical Pauli $Q_{\ell(E)}$ induced by $E$ and for the subset $\mathscr{E}_\sX(Q)\subset\mathscr{E}$ of codes $E$ for which $Q_{\ell(E)}$ is a purely $Z$-Pauli and $Q$ has no $X$-error syndrome.)
It is easy to see (and follows immediately from the work of Appendix \ref{sec:auth:decode-measure}) that this unnormalized mixed state is
\begin{align}
 \label{eq:R-acc}
 \underbrace{
  \frac{1}{\abs{\mathscr{E}}\abs{\mathscr{P}}\abs{\mathscr{S}}\abs{\sB}\abs{\tilde\sB_\iin}}
  \sum_{P,S,a,T^\iin,T^\out,Q_{\tilde\sM}}
    \sum_{E\in\mathscr{E}_\sX(Q_{\tilde\sM})}
    \bra{a}\ket{Q_{\tilde\sM}}\bra{Q_{\tilde\sM}}\ket{a}
   }_{\ptr{\sK}{R_\mathrm{acc}}}
    + \Ptr{\sK}{[\varepsilon_\mathrm{real}]}
\end{align}
for some choice of $[\varepsilon_\mathrm{real}]$.
By the security of the decode-then-measure procedure used by the BR-OTP it must be that $\ptr{\sK}{[\varepsilon_\mathrm{real}]}$ and hence $[\varepsilon_\mathrm{real}]$ have trace at most $\varepsilon$.
Henceforth we concentrate only on the highlighted term of \eqref{eq:R-acc}, which is taken to be $\ptr{\sK}{R_\mathrm{acc}}$ for the operator $R_\mathrm{acc}$ appearing in the desired decomposition \eqref{eq:real-goal} of $\rho_\mathrm{real}$.

Because the syndrome measurement succeeded, the BR-OTP reveals a new register $\sK$ containing a classical description $\ket{\hat S}$ of a decryption key $\hat S$.
(An explicit formula for this key is given in Section \ref{sec:user:key}.)
The unnormalized pure state $\bra{a}\ket{Q_{\tilde\sM}}$ plus register $\sK$ can be written
\begin{align}
  \bra{a}\ket{Q_{\tilde\sM}}\ket{\hat S} = \sum_{\lnot Q_{\tilde\sM}} \alpha_{Q,T^\iin} \ket{\eqref{tab:post-decode}}
\end{align}
where the table \eqref{tab:post-decode} is given by
\begin{align}
  \label{tab:post-decode}
  \renewcommand{\arraystretch}{1.3}
  \begin{array}{|c||c|c|c|c|c|c|}
    \hline
  \sW &  Q_\sW &&&&& \multirow{3}{*}{$\ket{\psi}$} \\
    \cline{1-6}
  \sB_\out & Q_{\sB_\out} E^* S Q_{\tilde\sB_\out} T^\out Q_{\tilde\sB_\iin} & \pi_a(PT^\iin_E)_{\sB_\out} & E & \multirow{4}{*}{$\cont{U_a}$} & Q_{\sB_\iin} Q_\sB & \\
    \cline{1-4} \cline{6-6}
  \tilde \sA & Q_{\tilde\sA} & \pi_a(PT^\iin_E)_{\tilde\sA} & E &&& \\
    \cline{1-4} \cline{6-7}
  \tilde \sE & Q_{\tilde\sE} & \pi_a(PT^\iin_E)_{\tilde\sE} & E &&& \ket{0}\ket{\mathrm{on}} \\
    \cline{1-4} \cline{6-7}
  \tilde \sM &\multicolumn{2}{c|}{}& \bra{a} &&& \ket{\mu} \\
    \hline
  \sK &&&&&& \ket{\hat S} \\
    \hline
  \end{array}
\end{align}
We may now employ the identity \eqref{eq:universal-a} from Section \ref{sec:universal} to get
\begin{align}
  \bra{a}\cont{U_a}\ket{\mu}\ket{\mathrm{on}}=\frac{1}{2^{r/2}}\cont{U}\ket{\mathrm{on}}=\frac{1}{2^{r/2}}U
\end{align}
so the state $\ket{\eqref{tab:post-decode}}$ becomes
\begin{align}
  \label{tab:post-magic}
  \ket{\eqref{tab:post-decode}} =
  \frac{1}{2^{r/2}}
  \renewcommand{\arraystretch}{1.3}
  \begin{array}{|c||c|c|c|c|c|c|}
    \hline
  \sW &  Q_\sW &&&&& \multirow{3}{*}{$\ket{\psi}$} \\
    \cline{1-6}
  \sB_\out & Q_{\sB_\out} E^* S Q_{\tilde\sB_\out} T^\out Q_{\tilde\sB_\iin} & \pi_a(PT^\iin_E)_{\sB_\out} & E & \multirow{3}{*}{$U$} & Q_{\sB_\iin} Q_\sB & \\
    \cline{1-4} \cline{6-6}
  \tilde \sA & Q_{\tilde\sA} & \pi_a(PT^\iin_E)_{\tilde\sA} & E &&& \\
    \cline{1-4} \cline{6-7}
  \tilde \sE & Q_{\tilde\sE} & \pi_a(PT^\iin_E)_{\tilde\sE} & E &&& \ket{0} \\
    \hline
  \sK &&&&&& \ket{\hat S} \\
    \hline
  \end{array}
\end{align}
Write
\begin{align}
  \ket{Q_{\tilde\sM},\hat S} \defeq \sum_{\lnot Q_{\tilde\sM}} \alpha_{Q,T^\iin} \ket{\eqref{tab:post-magic}}
\end{align}
so that
\begin{align}
  \label{eq:R-acc2}
  R_\mathrm{acc} =
  \frac{1}{\abs{\mathscr{E}}\abs{\mathscr{P}}\abs{\mathscr{S}}\abs{\sB}\abs{\tilde\sB_\iin}2^r}
  \sum_{P,S,a,T^\iin,T^\out,Q_{\tilde\sM}}
    \sum_{E\in\mathscr{E}_\sX(Q_{\tilde\sM})}
    \ket{Q_{\tilde\sM},\hat S}\bra{Q_{\tilde\sM},\hat S}
\end{align}
Notice that the receiver/environment learns nothing about the keys $\pi_a(PT^\iin_E)_{\tilde\sA},\pi_a(PT^\iin_E)_{\tilde\sE}$ so the registers $\tilde\sA,\tilde\sE$ remain completely mixed.
Thus, the receiver gets only the $\sB_\out$ portion of the final state as desired.

\subsubsection{In the event of rejection}
\label{sec:user:reject}

Consider the unnormalized mixed state obtained from the expression \eqref{eq:user-channel-post-sandwich} for $\ptr{\sK}{\rho_\mathrm{real}}$ by restricting attention to those terms in the summation with syndrome outcome $s\neq 0$.
(Again, the reader is reffered Appendix \ref{appendix:encode-encrypt} for explanations of the notation $Q_{\ell(E)}$ and $\mathscr{E}_{\sX\emptyset}(Q)$.)
According to the analysis of Appendix \ref{sec:auth:decode-measure} this unnormalized mixed state is
\begin{align}
\label{eq:R-rej}
  \frac{1}{\abs{\mathscr{E}}\abs{\mathscr{P}}\abs{\mathscr{S}}\abs{\sB}\abs{\tilde\sB_\iin}}
  \sum_{P,S,a,s\neq 0,T^\iin,T^\out,Q_{\tilde\sM}}
    \sum_{E\in\mathscr{E}_{\sX\emptyset}(Q_{\tilde\sM})}
    \bra{a}Q_{\tilde\sM,\ell(E)}\ket{Q_{\tilde\sM}}\bra{Q_{\tilde\sM}}Q_{\tilde\sM,\ell(E)}\ket{a}
\end{align}
The operator of \eqref{eq:R-rej} shall be taken to be $\ptr{\sK}{R_\mathrm{rej}}$ for the operator $R_\mathrm{rej}$ appearing in the desired decomposition \eqref{eq:real-goal} of $\rho_\mathrm{real}$.

Because the syndrome measurement failed, the register $\sK$ revealed by the BR-OTP is completely mixed and so we can ignore it for the rest of this analysis.
The unnormalized pure state $\bra{a}Q_{\tilde\sM,\ell(E)}\ket{Q_{\tilde\sM}}$ can be written
\begin{align}
  \bra{a}Q_{\tilde\sM,\ell(E)}\ket{Q_{\tilde\sM}} = \sum_{\lnot Q_{\tilde\sM}} \alpha_{Q,T^\iin} \ket{\eqref{tab:post-decode-reject}}
\end{align}
where the table \eqref{tab:post-decode-reject} is given by
\begin{align}
  \label{tab:post-decode-reject}
  \renewcommand{\arraystretch}{1.3}
  \begin{array}{|c||c|c|c|c|c|c|c|}
    \hline
  \sW && Q_\sW &&&&& \multirow{3}{*}{$\ket{\psi}$} \\
    \cline{1-7}
  \sB_\out && Q_{\sB_\out} E^* S Q_{\tilde\sB_\out} T^\out Q_{\tilde\sB_\iin} & \pi_a(PT^\iin_E)_{\sB_\out} & E & \multirow{4}{*}{$\cont{U_a}$} & Q_{\sB_\iin} Q_\sB & \\
    \cline{1-5} \cline{7-7}
  \tilde \sA && Q_{\tilde\sA} & \pi_a(PT^\iin_E)_{\tilde\sA} & E &&& \\
    \cline{1-5} \cline{7-8}
  \tilde \sE && Q_{\tilde\sE} & \pi_a(PT^\iin_E)_{\tilde\sE} & E &&& \ket{0}\ket{\mathrm{on}} \\
    \cline{1-5} \cline{7-8}
  \tilde \sM & \bra{a} & Q_{\tilde\sM,\ell(E)} &&&&& \ket{\mu} \\
    \hline
  \end{array}
\end{align}
For each $Q_{\tilde\sM,\ell(E)}$ let $\ket{a'}=Q_{\tilde\sM,\ell(E)}\ket{a}$ denote the modified vector of magic state measurement results.
We may now employ the identity \eqref{eq:universal-a} from Section \ref{sec:universal} to get
\begin{align}
  \bra{a'}\cont{U_a}\ket{\mu} = \frac{1}{2^{r/2}}\cont{U_{a-a'}}
\end{align}
so the state $\ket{\eqref{tab:post-decode-reject}}$ becomes
\begin{align}
  \label{tab:post-magic-reject}
  \ket{\eqref{tab:post-decode-reject}} =
  \frac{1}{2^{r/2}}
  \renewcommand{\arraystretch}{1.3}
  \begin{array}{|c||c|c|c|c|c|c|}
    \hline
  \sW &  Q_\sW &&&&& \multirow{3}{*}{$\ket{\psi}$} \\
    \cline{1-6}
  \sB_\out & Q_{\sB_\out} E^* S Q_{\tilde\sB_\out} T^\out Q_{\tilde\sB_\iin} & \pi_a(PT^\iin_E)_{\sB_\out} & E & \multirow{3}{*}{$\cont{U_{a-a'}}$} & Q_{\sB_\iin} Q_\sB & \\
    \cline{1-4} \cline{6-6}
  \tilde \sA & Q_{\tilde\sA} & \pi_a(PT^\iin_E)_{\tilde\sA} & E &&& \\
    \cline{1-4} \cline{6-7}
  \tilde \sE & Q_{\tilde\sE} & \pi_a(PT^\iin_E)_{\tilde\sE} & E &&& \ket{0} \ket{\mathrm{on}}\\
    \hline
  \end{array}
\end{align}
In the case of rejection the sender has no information about the encryption Paulis $\pi_a(PT^\iin_E)_{\sB_\out}$, $\pi_a(PT^\iin_E)_{\tilde\sA}$, and $\pi_a(PT^\iin_E)_{\tilde\sE}$.
Thus, the unitary $E\cont{U_{a-a'}}$ appearing immediately before the encryption Paulis is superfluous---it can be removed without affecting the overall state.
Moreover, the registers $\tilde\sA,\tilde\sE$ remain encrypted and therefore completely mixed from the receiver's point of view.
Thus, one could replace the state $\ket{\eqref{tab:post-decode-reject}}$ with, say, the following
\begin{align}
  \label{tab:post-magic-reject-no-U}
  \ket{\eqref{tab:post-decode-reject}} =
  \frac{1}{2^{r/2}}
  \renewcommand{\arraystretch}{1.3}
  \begin{array}{|c||c|c|c|c|}
    \hline
  \sW &  Q_\sW &&& \ket{\psi}_\sW \\
    \hline
  \sB_\out & Q_{\sB_\out} E^* S Q_{\tilde\sB_\out} T^\out Q_{\tilde\sB_\iin} & \pi_a(PT^\iin_E)_{\sB_\out} & Q_{\sB_\iin} Q_\sB & \ket{\psi}_\sB \\
    \hline
  \tilde \sA & Q_{\tilde\sA} & \pi_a(PT^\iin_E)_{\tilde\sA} && \ket{\mathrm{anything}}\\
    \hline
  \tilde \sE & Q_{\tilde\sE} & \pi_a(PT^\iin_E)_{\tilde\sE} && \ket{\mathrm{anything}}\\
    \hline
  \end{array}
\end{align}
Letting
\begin{align}
  \ket{Q_{\tilde\sM},\mathrm{anything}} \defeq \sum_{\lnot Q_{\tilde\sM}} \alpha_{Q,T^\iin} \ket{\eqref{tab:post-magic-reject-no-U}}
\end{align}
we have
\begin{align}
  \ptr{\sK}{R_\mathrm{rej}} =
  \frac{1}{\abs{\mathscr{E}}\abs{\mathscr{P}}\abs{\mathscr{S}}\abs{\sB}\abs{\tilde\sB_\iin}2^r}
  \sum_{P,S,a,s\neq 0,T^\iin,T^\out,Q_{\tilde\sM}}
    \sum_{E\in\mathscr{E}_{\sX\emptyset}(Q_{\tilde\sM})}
    \ket{Q_{\tilde\sM},\mathrm{anything}}\bra{Q_{\tilde\sM},\mathrm{anything}}
\end{align}

\subsubsection{More analysis to determine the final key}
\label{sec:user:key}

Now that we have analyzed an arbitrary environment/receiver for our QOTP we can easily derive an expression for the decryption key $\hat S$ used by an honest receiver to recover his output register $\sB$ at the end of the protocol.
The derivation in this section is not a prerequisite for the discussion of the simulator in Section \ref{sec:simulator}.
It is included here only for completeness.

We claim that the decryption Pauli $\hat S$ is the logical Pauli induced on the data register by code $E$ and Pauli
\begin{align}
  S T^\out \pi_a(PT^\iin_E)_{\sB_\out} \enspace .
\end{align}
To see this, observe that an honest receiver will place no amplitude on terms for which $Q\neq I$.
Our calculations are therefore simplified by assuming $Q=I$.
Under this assumption the state $\ket{\eqref{tab:post-magic}}$ becomes
\begin{align}
  \ket{\eqref{tab:post-magic}} =
  \renewcommand{\arraystretch}{1.3}
  \begin{array}{|c||c|c|c|c|c|}
    \hline
  \sW &&&&& \multirow{3}{*}{$\ket{\psi}$} \\
    \cline{1-5}
  \sB_\out & E^* S T^\out & \pi_a(PT^\iin_E)_{\sB_\out} & E & \multirow{3}{*}{$U$} & \\
    \cline{1-4}
  \tilde \sA && \pi_a(PT^\iin_E)_{\tilde\sA} & E && \\
    \cline{1-4} \cline{6-6}
  \tilde \sE && \pi_a(PT^\iin_E)_{\tilde\sE} & E && \ket{0} \\
    \hline
  \sK &&&&& \ket{\hat S} \\
    \hline
  \end{array}
\end{align}
and so we see that applying $\hat S$ to register $\sB_\out$ decrypts that register.
Of course, $\tilde\sA,\tilde\sE$ remain encrypted and the purification register $\sW$ is not in the receiver's posession, so the receiver obtains only the $\sB$ portion of $(\Phi\ot\idsup{\sW})(\ket{\psi}\bra{\psi})$ as desired.

\subsection{Specification of the simulator}
\label{sec:simulator}

In Section \ref{sec:user} we derived an expression for the mixed state $\rho_\mathrm{real}$ of the registers $(\sB_\out,\tilde\sA,\tilde\sE,\sW,\sK)$ in the environment's possession at the end of step \ref{it:receiver:dump}.
In this section we specify a simulator, and in the following section we show that the state $\rho_\mathrm{sim}$ of the environment's registers at the end of its interaction with the simulator is close in trace distance to $\rho_\mathrm{real}$, from which security is established.

This simulator must not interact with the sender.
Instead, the simulator is permitted only one-shot, black-box access to the ``ideal functionality'' for $\Phi$.
We represent this ideal functionality by a single call to an oracle for $U$ acting on registers $(\sA,\sB,\sE)$ prepared by the simulator.
The rules for permissible preparation and disposal of these registers are as follows:
\begin{enumerate}

\item
  When given the initial state $\ket{\psi}$ of registers $(\sA,\sB,\sW)$ selected by the environment, the simulator \emph{must} pass the register $\sA$ directly to $U$ without any pre-processing.

\item
  The simulator must prepare the ancillary register $\sE$ in pure state $\ket{0}$.

\item
  Upon receiving the output registers $(\sA,\sB,\sE)$, the simulator must discard registers $\sA,\sE$ without any post-processing.

\end{enumerate}

The simulator constructs registers as listed in Protocol~\ref{prot:sim} and then acts as specified described in the protocol.
The main idea is that our simulator will use the control bit contained in register $\tilde\sE$ to ``switch off'' the application of $U$ that would have been implemented by the real receiver interacting with the sender's BR-OTP.
Instead, the black-box call to the ideal functionality will be embedded at the proper time so as to recover the required action of $U$.
An additional teleportation step is required so that our simulator can embed $U$ at the proper time.

\begin{protocol} \caption{Simulator} \label{prot:sim}

\textbf{Registers prepared by the simulator.} \\
Given the input register $\sA$ our simulator constructs the following registers:
\begin{center}
\begin{tabularx}{\textwidth}{lX}
  $(\sB_\iin,\sS_\iin)$: &
    Simple EPR pairs $\ket{\phi^+}$ for teleportation.\\
  $(\sS_\out,\tilde\sB_\iin)$: &
    Teleport-through-authentication state of Protocol \ref{prot:prep-sender}.\\
  $(\tilde\sB_\out,\sB_\out)$: &
    Teleport-through-de-authentication state of Protocol \ref{prot:prep-sender}.\\
  $\tilde\sA$: &
    Authenticated dummy input state for the sender $P_{\tilde\sA}E\ket{0}$.\\
  $\tilde\sE$: &
    Authenticated dummy ancillary state $P_{\tilde\sE}E\ket{0}\ket{\mathrm{off}}$.\\
  $\tilde \sM$: &
    Authenticated magic states as in Protocol \ref{prot:prep-sender}.\\
  $\sA':$ &
    To be used in the call to the ideal functionality.
    Contains the portion of $\ket{\psi}$ contained in register $\sA$.\\
  $\sE':$ &
    To be used in the call to the ideal functionality.
    Ancillary register in state $\ket{0}$.
\end{tabularx}
\end{center}

\textbf{Execution of the simulator.}

\begin{enumerate}

\item 
Prepare registers $(\sB_\iin,\tilde\sB_\iin,\tilde\sB_\out,\sB_\out,\tilde\sA,\tilde\sE,\tilde\sM,\sW)$ as above and send these registers to the environment.

\item
The environment responds with a Pauli $T^\iin$.
Apply $T^\iin$ to register $\sS_\iin$.
Then use the ideal black-box to apply $U$ to $(\sA',\sS_\iin,\sE')$.

\item
Perform a Bell measurement on $(\sS_\iin,\sS_\out)$ so as to teleport the contents of $\sS_\iin$ through the authentication and place the result in $\tilde\sB_\iin$.
Let $T^\smm$ denote the teleportation Pauli indicated by this Bell measurement.

\item
Execute Protocol \ref{prot:BR-OTP} for the BR-OTP under the assumption that $T^\smm$ was received in the first round.
(This step is not difficult: the responses of the BR-OTP depend only upon the choice of code $E$ and encryption Pauli $P$.)

\end{enumerate}
\end{protocol}

\subsection{Analysis of the environment's interaction with the simulator}
\label{sec:sim:analysis}

Fix a choice of code $E$ and encryption Paulis $P,S$.
We re-use notation from Section \ref{sec:user} as much as possible.
In Section \ref{sec:sim:tabular-representation} we wrote the unnormalized pure state of the entire system as a large table $\ket{\eqref{fig:tabular-state}}$.
We then introduced some shorthand notation in Section \ref{sec:user} that allowed us to write that table more compactly as $\ket{\eqref{tab:easyK}}$.
The equivalent of table \eqref{tab:easyK} for our simulator is as follows.
\begin{align}
  \label{tab:sim-easyK}
  \renewcommand{\arraystretch}{1.3}
  \begin{array}{|c||c|c|c|c|}
    \hline
  \sE' &&& (U)_\sE & \ket{0} \\
    \hline
  \sA' &&& (U)_\sA & \ket{\psi}_\sA \\
    \hline
  \sB & \multirow{2}{*}{$\bra{T^\iin}B$} & (K_a)_\sB && \ket{\psi}_\sB \\
    \cline{1-1} \cline{3-5}
  \sB_\iin && (K_a)_{\sB_\iin} && \multirow{2}{*}{$\ket{\phi^+}$} \\
    \cline{1-4}
  \sS_\iin & \multirow{2}{*}{$\bra{T^\smm}B$} && (U)_\sB T^\iin & \\
    \cline{1-1} \cline{3-5}
  \sS_\out &&&& \multirow{2}{*}{$\ket{\phi^+}$} \\
    \cline{1-4}
  \tilde\sB_\iin & \multirow{2}{*}{$\bra{T^\out}B$} & \multirow{7}{*}{$K_a$} & P_{\tilde\sB_\iin} E & \\
    \cline{1-1} \cline{4-5}
  \tilde\sB_\out &&&& \multirow{2}{*}{$\ket{\phi^+}$} \\
    \cline{1-2} \cline{4-4}
  \sB_\out &&& E^* S &  \\
    \cline{1-2} \cline{4-5}
  \tilde\sA &&& P_{\tilde\sA} E & \ket{0} \\
    \cline{1-2} \cline{4-5}
  \tilde\sE &&& P_{\tilde\sE} E & \ket{0}\ket{\mathrm{off}} \\
    \cline{1-2} \cline{4-5}
  \tilde\sM & \bra{a}\bra{s} E^* \hat P_{\tilde\sM} && P_{\tilde\sM} E & \ket{\mu} \\
    \cline{1-2} \cline{4-5}
  \sW &&&& \ket{\psi}_\sW \\
    \hline
  \end{array}
\end{align}
Then
\begin{align}
  \label{eq:sim-channel}
  \ptr{\sK}{\rho_\mathrm{sim}} =
  \frac{1}{\abs{\mathscr{E}}\abs{\mathscr{P}}\abs{\mathscr{S}}}
  \sum_{E,P,S,a,s,T^\iin,T^\out,T^\smm}
  \Ptr{\sA'\sE'}{\ket{\eqref{tab:sim-easyK}}\bra{\eqref{tab:sim-easyK}} } \enspace .
\end{align}
We then inserted a superfluous $I=(\tildeconta{U})^*(\tildeconta{U})$ into the table \eqref{tab:easyK} and derived the table \eqref{tab:kraus-pauli} by substituting the Pauli decomposition \eqref{eq:pauli-decomps} of $K_a (\tildeconta{U})^*$ and fixing a choice $Q$ of Pauli in that decomposition.
It is a simple matter to repeat those steps in the current setting.
The equivalent of table \eqref{tab:kraus-pauli} for our simulator is
\begin{align}
  \label{tab:sim-Pauli}
  \renewcommand{\arraystretch}{1.3}
  \begin{array}{|c||c|c|c|c|c|c|c|} \hline
  \sE' &&&& (U)_{\sE'} & \ket{0} \\
    \hline
  \sA' &&&& (U)_{\sA'} & \ket{\psi}_{\sA} \\
    \hline
  \sB & \multirow{2}{*}{$\bra{T^\iin}B$} & Q_\sB &&& \ket{\psi}_\sB \\
    \cline{1-1}\cline{3-6}
  \sB_\iin && Q_{\sB_\iin} &&& \multirow{2}{*}{$\ket{\phi^+}$} \\
    \cline{1-5}
  \sS_\iin & \multirow{2}{*}{$\bra{T^\smm}B$} &&& (U)_\sB T^\iin & \\
    \cline{1-1}\cline{3-6}
  \sS_\out &&&&& \multirow{2}{*}{$\ket{\phi^+}$} \\
    \cline{1-5}
  \tilde\sB_\iin & \multirow{2}{*}{$\bra{T^\out}B$} & Q_{\tilde\sB_\iin} & (\tildeconta{U})_{\tilde\sB_\iin} & P_{\tilde\sB_\iin} E & \\
    \cline{1-1}\cline{3-6}
  \tilde\sB_\out && Q_{\tilde\sB_\out} &&& \multirow{2}{*}{$\ket{\phi^+}$} \\
    \cline{1-5}
  \sB_\out && Q_{\sB_\out} && E^* S & \\ \hline
  \tilde \sA && Q_{\tilde\sA} & (\tildeconta{U})_{\tilde\sA} & P_{\tilde\sA} E & \ket{0} \\
    \hline
  \tilde \sE && Q_{\tilde\sE} & (\tildeconta{U})_{\tilde\sE} & P_{\tilde\sE} E & \ket{0}\ket{\mathrm{off}} \\
    \hline
  \tilde \sM & \bra{a}\bra{s} E^* \hat P_{\tilde\sM}' & Q_{\tilde\sM} & (\tildeconta{U})_{\tilde\sM} & P_{\tilde\sM} E & \ket{\mu} \\
    \hline
  \sW && Q_\sW &&& \ket{\psi}_\sW \\
    \hline
  \end{array}
\end{align}
In Section \ref{sec:user} we applied a teleportation identity to derive table \eqref{tab:collect-Q} from table \eqref{tab:kraus-pauli}.
We apply the same teleportation identity here so as to derive the following table from \eqref{tab:sim-Pauli}.
\begin{align} \label{tab:sim-collect-Q}
  \renewcommand{\arraystretch}{1.3}
  \begin{array}{|c||c|c|c|c|c|c|c|c|} \hline
  \sE' &&&&&& \multirow{3}{*}{$U$} && \ket{0} \\
    \cline{1-6} \cline{8-9}
  \sA' &&&&&&&& \ket{\psi}_{\sA} \\
    \cline{1-6} \cline{8-9}
  \sB_\out && Q_{\sB_\out} E^* S Q_{\tilde\sB_\out}^\trans T^\out Q_{\tilde\sB_\iin} & \multirow{4}{*}{$\tildeconta{U}$} & \multirow{4}{*}{$P$} & E T^\smm && T^\iin Q_{\sB_\iin}^\trans T^\iin Q_\sB & \ket{\psi}_\sB \\
    \cline{1-3} \cline{6-9}
  \tilde \sA && Q_{\tilde\sA} &&& E &&& \ket{0} \\
    \cline{1-3} \cline{6-9}
  \tilde \sE && Q_{\tilde\sE} &&& E &&& \ket{0} \ket{\mathrm{off}} \\
    \cline{1-3} \cline{6-9}
  \tilde \sM & \bra{a}\bra{s} E^* \hat P_{\tilde\sM}' & Q_{\tilde\sM} &&& E &&& \ket{\mu} \\
    \hline
  \sW && Q_\sW &&&&&& \ket{\psi}_\sW \\
    \hline
  \end{array}
\end{align}
We then replaced transposed Paulis with their un-transposed equivalents and employed several commutation relations to derive table \eqref{tab:pre-sandwich}.
That derivation can be repeated almost exactly in the present context.
The only significant difference is that, after commuting $Q_{\sB_\iin}$ with $T^\iin$ and picking up a phase of $c(T^\iin,Q_{\sB_\iin})$, the two $T^\iin$ Paulis annihilate each other.
The state $\ket{\eqref{tab:sim-Pauli}}$ becomes
\begin{align}
\label{tab:sim-pre-sandwich}
  \renewcommand{\arraystretch}{1.3}
  \begin{array}{|c||c|c|c|c|c|c|c|c|} \hline
  \sE' &&&&&& \multirow{3}{*}{$U$} && \ket{0} \\
    \cline{1-6} \cline{8-9}
  \sA' &&&&&&&& \ket{\psi}_{\sA} \\
    \cline{1-6} \cline{8-9}
  \sB_\out && Q_{\sB_\out} E^* S Q_{\tilde\sB_\out} T^\out Q_{\tilde\sB_\iin} & \multirow{4}{*}{$\pi_a(PT^\smm_E)$} & E & \multirow{4}{*}{$\cont{U_a}$} && Q_{\sB_\iin} Q_\sB & \ket{\psi}_\sB \\
    \cline{1-3} \cline{5-5} \cline{7-9}
  \tilde \sA && Q_{\tilde\sA} && E &&&& \ket{0} \\
    \cline{1-3} \cline{5-5} \cline{7-9}
  \tilde \sE && Q_{\tilde\sE} && E &&&& \ket{0} \ket{\mathrm{off}} \\
    \cline{1-3} \cline{5-5} \cline{7-9}
  \tilde \sM & \bra{a}\bra{s} E^* \hat P_{\tilde\sM}' & Q_{\tilde\sM} && E &&&& \ket{\mu} \\
    \hline
  \sW && Q_\sW &&&&&& \ket{\psi}_\sW \\
    \hline
  \end{array}
\end{align}
We now see that the decryption Pauli $\hat P_{\tilde\sM}'$ used in the BR-OTP is given by
\begin{align}
  \hat P_{\tilde\sM}' \defeq \pi_a(PT^\smm_E)_{\tilde\sM} \enspace .
\end{align}
By analogy to the formula \eqref{eq:pre-sandwich-state} for the state $\ket{\eqref{fig:tabular-state}}$ in Section \ref{sec:user} we have
\begin{align}
  \label{eq:pre-sandwich-state-reject}
  \ket{\eqref{tab:sim-easyK}} = \Pa{\abs{\sB}\abs{\tilde\sB_\iin}\abs{\sS_\iin}}^{-1/2}
 \sum_{\textrm{Paulis $Q$}} \alpha_{Q,T^\iin} \ket{\eqref{tab:sim-pre-sandwich}} \enspace .
\end{align}
As before, we wish to employ the Pauli sandwich on $\tilde\sM$.
To this end consider the table
\begin{align}
  \label{tab:sim-pre-sandwich-M}
  \renewcommand{\arraystretch}{1.3}
  \begin{array}{|c||c|c|c|c|c|c|c|} \hline
  \sE' &&&&& \multirow{3}{*}{$U$} && \ket{0} \\
    \cline{1-5} \cline{7-8}
  \sA' &&&&&&& \ket{\psi}_{\sA} \\
    \cline{1-5} \cline{7-8}
  \sB_\out & Q_{\sB_\out} E^* S Q_{\tilde\sB_\out} T^\out Q_{\tilde\sB_\iin} & \pi_a(PT^\smm_E)_{\sB_\out} & E & \multirow{4}{*}{$\cont{U_a}$} && Q_{\sB_\iin} Q_\sB & \ket{\psi}_\sB \\
    \cline{1-4} \cline{6-8}
  \tilde \sA & Q_{\tilde\sA} & \pi_a(PT^\smm_E)_{\tilde\sA} & E &&&& \ket{0} \\
    \cline{1-4} \cline{6-8}
  \tilde \sE & Q_{\tilde\sE} & \pi_a(PT^\smm_E)_{\tilde\sE} & E &&&& \ket{0} \ket{\mathrm{off}} \\
    \cline{1-4} \cline{6-8}
  \tilde \sM &&&&&&& \ket{\mu} \\
    \hline
  \sW & Q_\sW &&&&&& \ket{\psi}_\sW \\
    \hline
  \end{array}
\end{align}
obtained from \eqref{tab:sim-pre-sandwich} by deleting everything in row $\tilde\sM$ occurring after the operator $\cont{U_a}$.
By analogy to the state $\ket{Q_{\tilde\sM}}$ defined in \eqref{eq:L-def} in Section \ref{sec:user}, for each Pauli $Q_{\tilde\sM}$ define
\begin{align}
  \ket{Q_{\tilde\sM}'} \defeq \sum_{\lnot Q_{\tilde\sM}} \alpha_{Q,T_\iin} \ket{\eqref{tab:sim-pre-sandwich-M}} \enspace .
\end{align}
By analogy to the expression \eqref{eq:user-channel-post-sandwich} for $\ptr{\sK}{\rho_\mathrm{real}}$ derived in Section \ref{sec:user}, by the Pauli sandwich (Lemma \ref{lm:pauli-sandwich}) we have that the mixed state $\ptr{\sK}{\rho_\mathrm{sim}}$ from \eqref{eq:sim-channel} equals
\begin{align}
  \label{eq:sim-state}
  \ptr{\sK}{\rho_\mathrm{sim}} =
  \frac{1}{\abs{\mathscr{E}}\abs{\mathscr{P}}\abs{\mathscr{S}}\abs{\sB}\abs{\tilde\sB_\iin}\abs{\sS_\iin}}
  \sum_{E,P,S,a,s,T^\iin,T^\out,T^\smm,Q_{\tilde\sM}}
  \bra{a}\bra{s} E^* Q_{\tilde\sM} E \Ptr{\sA'\sE'}{\ket{Q_{\tilde\sM}'} \bra{Q_{\tilde\sM}'}} E^* Q_{\tilde\sM} E \ket{a}\ket{s} \enspace .
\end{align}
As mentioned in Section \ref{sec:user}, we will derive the expression \eqref{eq:sim-goal} for $\rho_\mathrm{sim}$, from which it will follow that the state is close to $\rho_\mathrm{real}$ in trace distance as desired.

\subsubsection{In the event of acceptance}
\label{sec:simulator:accept}

Just as in Section \ref{sec:user:accept}, consider the unnormalized mixed state obtained from the expression \eqref{eq:sim-state} for $\ptr{\sK}{\rho_\mathrm{sim}}$ by restricting attention to those terms in the summation with syndrome outcome $s=0$.
We can follow the argument of Section \ref{sec:user:accept} to see that this unnormalized mixed state is
\begin{align}
\label{eq:sim-R-acc}
  \underbrace{
  \frac{1}{\abs{\mathscr{E}}\abs{\mathscr{P}}\abs{\mathscr{S}}\abs{\sB}\abs{\tilde\sB_\iin}\abs{\sS_\iin}}
  \sum_{P,S,a,T^\iin,T^\out,T^\smm,Q_{\tilde\sM}}
  \sum_{E\in\mathscr{E}_\sX(Q_{\tilde\sM})}
  \bra{a} \Ptr{\sA'\sE'}{\ket{Q_{\tilde\sM}'} \bra{Q_{\tilde\sM}'}} \ket{a}
  }_{\textrm{show equal to $\ptr{\sK}{R_\mathrm{acc}}$}}
  + \Ptr{\sK}{[\varepsilon_\mathrm{sim}]}
\end{align}
for some choice of $[\varepsilon_\mathrm{sim}]$ with trace at most $\varepsilon$.
Henceforth we concentrate only on the indicated term of \eqref{eq:sim-R-acc}.
We will show that this term equals $\ptr{\sK}{R_\mathrm{acc}}$.

Because the syndrome measurement succeeded, our simulator reveals a new register $\sK$ containing a classical description $\ket{\hat S'}$ of a decryption key $\hat S'$.
(More about this key later in this section.)
The unnormalized pure state $\bra{a}\ket{Q_{\tilde\sM}'}$ plus register $\sK$ can be written
\begin{align}
  \bra{a}\ket{Q_{\tilde\sM}'}\ket{\hat S'} = \sum_{\lnot Q_{\tilde\sM}} \alpha_{Q,T^\iin} \ket{\eqref{tab:sim-post-decode}}
\end{align}
where the table \eqref{tab:sim-post-decode} is given by
\begin{align}
  \label{tab:sim-post-decode}
  \renewcommand{\arraystretch}{1.3}
  \begin{array}{|c||c|c|c|c|c|c|c|} \hline
  \sE' &&&&& \multirow{3}{*}{$U$} && \ket{0} \\
    \cline{1-5} \cline{7-8}
  \sA' &&&&&&& \ket{\psi}_{\sA} \\
    \cline{1-5} \cline{7-8}
  \sB_\out & Q_{\sB_\out} E^* S Q_{\tilde\sB_\out} T^\out Q_{\tilde\sB_\iin} & \pi_a(PT^\smm_E)_{\sB_\out} & E & \multirow{4}{*}{$\cont{U_a}$} && Q_{\sB_\iin} Q_\sB & \ket{\psi}_\sB \\
    \cline{1-4} \cline{6-8}
  \tilde \sA & Q_{\tilde\sA} & \pi_a(PT^\smm_E)_{\tilde\sA} & E &&&& \ket{0} \\
    \cline{1-4} \cline{6-8}
  \tilde \sE & Q_{\tilde\sE} & \pi_a(PT^\smm_E)_{\tilde\sE} & E &&&& \ket{0} \ket{\mathrm{off}} \\
    \cline{1-4} \cline{6-8}
  \tilde \sM & \multicolumn{2}{c|}{} & \bra{a} &&&& \ket{\mu} \\
    \hline
  \sW & Q_\sW &&&&&& \ket{\psi}_\sW \\
    \hline
  \sK &&&&&&& \ket{\hat S'} \\
    \hline
  \end{array}
\end{align}
Similar to Section \ref{sec:user}, (except now the control-bit switch is set to $\ket{\mathrm{off}}$) we may now employ the identity \eqref{eq:universal-a} from Section \ref{sec:universal} to get
\begin{align}
  \bra{a}\cont{U_a}\ket{\mu}\ket{\mathrm{off}}=\frac{1}{2^{r/2}}\cont{U}\ket{\mathrm{off}}=\frac{1}{2^{r/2}}I
\end{align}
so the state $\ket{\eqref{tab:sim-post-decode}}$ becomes
\begin{align}
  \label{tab:sim-post-magic}
  \ket{\eqref{tab:sim-post-decode}} =
  \frac{1}{2^{r/2}}
  \renewcommand{\arraystretch}{1.3}
  \begin{array}{|c||c|c|c|c|c|c|}
    \hline
  \sW & Q_\sW &&&&& \ket{\psi}_\sW \\
    \hline
  \sE' &&&& \multirow{3}{*}{$U$} && \ket{0} \\
    \cline{1-4} \cline{6-7}
  \sA' &&&&&& \ket{\psi}_\sA \\
    \cline{1-4} \cline{6-7}
  \sB_\out & Q_{\sB_\out} E^* S Q_{\tilde\sB_\out} T^\out Q_{\tilde\sB_\iin} & \pi_a(PT^\iin_E)_{\sB_\out} & E && Q_{\sB_\iin} Q_\sB & \ket{\psi}_\sB \\
    \hline
  \tilde \sA & Q_{\tilde\sA} & \pi_a(PT^\iin_E)_{\tilde\sA} & E &&& \ket{0} \\
    \hline
  \tilde \sE & Q_{\tilde\sE} & \pi_a(PT^\iin_E)_{\tilde\sE} & E &&& \ket{0} \\
    \hline
  \sK &&&&&& \ket{\hat S'} \\
    \hline
  \end{array}
\end{align}
Write
\begin{align}
  \ket{Q_{\tilde\sM}',\hat S'} \defeq \sum_{\lnot Q_{\tilde\sM}} \alpha_{Q,T^\iin} \ket{\eqref{tab:sim-post-magic}}
\end{align}
so that the highlighted term in \eqref{eq:sim-R-acc} becomes
\begin{align}
\label{eq:sim-N-accept}
  \frac{1}{\abs{\mathscr{E}}\abs{\mathscr{P}}\abs{\mathscr{S}}\abs{\sB}\abs{\tilde\sB_\iin}\abs{\sS_\iin}2^r}
  \sum_{P,S,a,T^\iin,T^\out,T^\smm,Q_{\tilde\sM}}
  \sum_{E\in\mathscr{E}_\sX(Q_{\tilde\sM})}
  \Ptr{\sA'\sE'}{ \ket{Q_{\tilde\sM}',\hat S'}\bra{Q_{\tilde\sM}',\hat S'} } \enspace .
\end{align}
By repeating the analysis of Section \ref{sec:user:key} it is easy to see that the final decryption Pauli $\hat S'$ produced by the simulator is the logical Pauli induced on the data register by code $E$ and Pauli
\begin{align}
  ST^\out\pi_a(PT^\smm_E)_{\tilde\sB_\out} \enspace .
\end{align}
Finally, we claim that the expression \eqref{eq:sim-N-accept} equals $R_\mathrm{acc}$ from \eqref{eq:R-acc2}.
To justify this claim we observe the differences between the tables \eqref{tab:post-magic} and \eqref{tab:sim-post-magic}.
The only differences are
\begin{enumerate}

\item $T^\iin_E$ has been replaced with $T^\smm_E$ in the encryption Paulis for $\sB_\out,\tilde\sA,\tilde\sE$ and in the final decryption key $\hat S'$.

\item $U$ is applied to the auxiliary registers $\sA',\sE'$ that are discarded by the simulator instead of the registers $\tilde\sA,\tilde\sE$ in the environment's possession.
The sender's portion of the input state $\ket{\psi}_\sA$ is contained in register $\sA'$ instead of $\tilde\sA$.

\end{enumerate}
The first difference is eliminated by the uniformly random encryption Pauli $P$; the summation over $T^\smm$ in \eqref{eq:sim-N-accept} can be eliminated by a simple change of variable, summing over $PT^\iin_E T^\smm_E$ instead of $P$.
The second difference is trivial because the registers $\tilde\sA,\tilde\sE$ are encrypted anyway.
In particular, $\ket{\eqref{tab:post-magic}}$ can be obtained from $\ket{\eqref{tab:sim-post-magic}}$ by swapping $(\sA',\sE')$ for $(\tilde\sA,\tilde\sE)$ prior to encryption of those registers.
Since $\tilde\sA,\tilde\sE$ are encrypted, such a swap cannot be detected.

\subsubsection{In the event of rejection}

Just as in Section \ref{sec:user:reject}, consider the unnormalized mixed state obtained from the expression \eqref{eq:sim-state} for $\ptr{\sK}{\rho_\mathrm{sim}}$ by restricting attention to those terms in the summation with syndrome outcome $s\neq 0$.
We can follow the argument of Section \ref{sec:user:reject} to see that this unnormalized mixed state is
\begin{align}
  \label{eq:sim-R-rej}
  \frac{1}{\abs{\mathscr{E}}\abs{\mathscr{P}}\abs{\mathscr{S}}\abs{\sB}\abs{\tilde\sB_\iin}\abs{\sS_\iin}}
  \sum_{P,S,a,s\neq 0,T^\iin,T^\out,T^\smm,Q_{\tilde\sM}}
    \sum_{ E\in\mathscr{E}_{\sX\emptyset}(Q_{\tilde\sM}) }
    \bra{a}Q_{\tilde\sM,\ell(E)}\Ptr{\sA'\sE'}{\ket{Q_{\tilde\sM}'}\bra{Q_{\tilde\sM}'}}Q_{\tilde\sM,\ell(E)}^*\ket{a} \enspace .
\end{align}
We will show that the expression \eqref{eq:sim-R-rej} equals $\ptr{\sK}{R_\mathrm{rej}}$.

Because the syndrome measurement failed, the register $\sK$ revealed by the simulator is completely mixed and so we can ignore it for the rest of this analysis.
The unnormalized pure state $\bra{a}Q_{\tilde\sM,\ell(E)}\ket{Q_{\tilde\sM}'}$ can be written
\begin{align}
  \bra{a}Q_{\tilde\sM,\ell(E)}\ket{Q_{\tilde\sM}'} = \sum_{\lnot Q_{\tilde\sM}} \alpha_{Q,T^\iin} \ket{\eqref{tab:sim-post-decode-reject}}
\end{align}
where the table \eqref{tab:sim-post-decode-reject} is given by
\begin{align}
  \label{tab:sim-post-decode-reject}
  \renewcommand{\arraystretch}{1.3}
  \begin{array}{|c||c|c|c|c|c|c|c|c|} \hline
  \sE' &&&&&& \multirow{3}{*}{$U$} && \ket{0} \\
    \cline{1-6} \cline{8-9}
  \sA' &&&&&&&& \ket{\psi}_{\sA} \\
    \cline{1-6} \cline{8-9}
  \sB_\out && Q_{\sB_\out} E^* S Q_{\tilde\sB_\out} T^\out Q_{\tilde\sB_\iin} & \pi_a(PT^\smm_E)_{\sB_\out} & E & \multirow{4}{*}{$\cont{U_a}$} && Q_{\sB_\iin} Q_\sB & \ket{\psi}_\sB \\
    \cline{1-5} \cline{7-9}
  \tilde \sA && Q_{\tilde\sA} & \pi_a(PT^\smm_E)_{\tilde\sA} & E &&&& \ket{0} \\
    \cline{1-5} \cline{7-9}
  \tilde \sE && Q_{\tilde\sE} & \pi_a(PT^\smm_E)_{\tilde\sE} & E &&&& \ket{0} \ket{\mathrm{off}} \\
    \cline{1-5} \cline{7-9}
  \tilde \sM & \bra{a} & Q_{\tilde\sM} &&&&&& \ket{\mu} \\
    \hline
  \sW && Q_\sW &&&&&& \ket{\psi}_\sW \\
    \hline
  \end{array}
\end{align}
For each $Q_{\tilde\sM,\ell(E)}$ let $\ket{a'}=Q_{\tilde\sM,\ell(E)}\ket{a}$ denote the modified vector of magic state measurement results.
As in Section \ref{sec:user:reject} we may now employ the identity \eqref{eq:universal-a} from Section \ref{sec:universal} to get
\begin{align}
  \bra{a'}\cont{U_a}\ket{\mu} = \frac{1}{2^{r/2}}\cont{U_{a-a'}}
\end{align}
so the state $\ket{\eqref{tab:sim-post-decode-reject}}$ becomes
\begin{align}
  \label{tab:sim-post-magic-reject}
  \ket{\eqref{tab:sim-post-decode-reject}} =
  \frac{1}{2^{r/2}}
  \renewcommand{\arraystretch}{1.3}
  \begin{array}{|c||c|c|c|c|c|c|c|c|} \hline
  \sW & Q_\sW &&&&&& \ket{\psi}_\sW \\
    \hline
  \sE' &&&&& \multirow{3}{*}{$U$} && \ket{0} \\
    \cline{1-5} \cline{7-8}
  \sA' &&&&&&& \ket{\psi}_{\sA} \\
    \cline{1-5} \cline{7-8}
  \sB_\out & Q_{\sB_\out} E^* S Q_{\tilde\sB_\out} T^\out Q_{\tilde\sB_\iin} & \pi_a(PT^\smm_E)_{\sB_\out} & E & \multirow{3}{*}{$\cont{U_{a-a'}}$} && Q_{\sB_\iin} Q_\sB & \ket{\psi}_\sB \\
    \cline{1-4} \cline{6-8}
  \tilde \sA & Q_{\tilde\sA} & \pi_a(PT^\smm_E)_{\tilde\sA} & E &&&& \ket{0} \\
    \cline{1-4} \cline{6-8}
  \tilde \sE & Q_{\tilde\sE} & \pi_a(PT^\smm_E)_{\tilde\sE} & E &&&& \ket{0} \ket{\mathrm{off}} \\
    \hline
  \end{array}
\end{align}
In the case of rejection the sender has no information about the encryption Paulis $\pi_a(PT^\iin_E)_{\sB_\out}$, $\pi_a(PT^\iin_E)_{\tilde\sA}$, and $\pi_a(PT^\iin_E)_{\tilde\sE}$.
Thus, any unitaries that appear immediately before the encryption Paulis are superfluous---they can be removed without affecting the overall state.
In particular, we get the same state even after we replace $E\cont{U_{a-a'}}$ with the identity in the table \eqref{tab:sim-post-magic-reject}.
Furthermore, since $\sA',\sE'$ are discarded by the simulator, the same logic allows us to replace $U$ with the identity.
Finally, since registers $(\tilde\sA,\tilde\sE)$ remain encrypted, one could replace the state $\ket{\eqref{tab:sim-post-magic-reject}}$ with
\begin{align}
  \label{tab:sim-post-magic-reject-no-U}
  \ket{\eqref{tab:sim-post-decode-reject}} =
  \frac{1}{2^{r/2}}
  \renewcommand{\arraystretch}{1.3}
  \begin{array}{|c||c|c|c|c|c|} \hline
  \sE' &&&& \ket{0} \\
    \hline
  \sA' &&&& \ket{\psi}_{\sA} \\
    \hline
  \sB_\out & Q_{\sB_\out} E^* S Q_{\tilde\sB_\out} T^\out Q_{\tilde\sB_\iin} & \pi_a(PT^\smm_E)_{\sB_\out} & Q_{\sB_\iin} Q_\sB & \ket{\psi}_\sB \\
    \hline
  \tilde \sA & Q_{\tilde\sA} & \pi_a(PT^\smm_E)_{\tilde\sA} && \ket{\mathrm{anything}} \\
    \hline
  \tilde \sE & Q_{\tilde\sE} & \pi_a(PT^\smm_E)_{\tilde\sE} && \ket{\mathrm{anything}} \\
    \hline
  \sW & Q_\sW &&& \ket{\psi}_\sW \\
    \hline
  \end{array}
\end{align}
Summing over $PT^\iin_E T^\smm_E$ instead of $P$ and discarding registers $(\sA',\sE')$ we see that the above state can be interchanged with $\ket{\eqref{tab:post-magic-reject-no-U}}$ from Section \ref{sec:user:reject}.
The desired expression for $R_\mathrm{rej}$ follows.

\subsection{Result}

We have thus shown that, no environment of the given in Section~\ref{sec:receiver:arbitrary} can distinguish $\rho_{\mathrm{real}}$, as calculated in Section~\ref{sec:user}, from the output $\rho_{\mathrm{sim}}$, as calculated in Section~\ref{sec:sim:analysis}.  Moreover, we argued in Section~\ref{sec:receiver:arbitrary} that any arbitrary environment is equivalent to an environment of the stated form.

This yields the proof of Theorem~\ref{thm:main-quantum}, that the protocol described in Sections~\ref{sec:qotp-spec} and \ref{sec:qotp-spec-receiver} is statistically quantum-UC-secure realization of $\cFunc{\Phi}{OTP}$ in the case of a corrupt user, in the $\cFunc{}{BR-OTP}$-hybrid model.

By combining this with the result that $\cFunc{}{BR-OTP}$ can be realized statistically UC-secure in the $\cFunc{}{OTM}$-hybrid model (Corollary~\ref{cor:reactive-COTP}), the quantum lifting theorem~\cite{U10}, and the quantum UC transitivity lemma (Lemma~\ref{lem:R-transitive}), we achieve the central result of the paper, Theorem~\ref{thm:main}: a protocol for non-interactive statistically quantum-UC-secure (in the case of an honest sender and potentially corrupt receiver) one-time programs, assuming secure OTM tokens (i.e., in the $\cFunc{}{OTM}$-hybrid model).
\qed

}

\ifthenelse{\equal{\compileACM}{0}}{
\section{UC security of delegating quantum computations}
}
{ 
\section{UC security of delegating \\ quantum computations}
}
\label{sec:UC-sec-of-ABE10}


We show in this section that our main proof technique can also be used to establish the statistical quantum-UC-security of a family of protocols for delegating quantum computations, closely related to the protocol of Aharonov~\etal~\cite{AharonovBE10}.
Originally studied in the context of \emph{quantum interactive proof systems}, the protocol of Aharonov~\etal~was not originally shown to be secure according to any rigorous cryptographic security definition. \looseness=-1

We generalize the protocol of Aharonov~\etal~to support delegated quantum computation (in contrast to only deciding membership in a language) by making two minor modifications.  
First we instantiate the protocol using any encode-encrypt quantum authentication scheme that admits computing on authenticated data (such as the trap authentication scheme or the signed polynomial scheme as used by Aharonov~\etal).
Analogously to our main protocol, we also introduce as an aid in the proof a control-bit so that the circuit being implemented is a controlled-unitary. 

The ideal functionality we achieve is described in Functionality~\ref{ideal-funct:QOTP-delegated}. Following~\cite{AharonovBE10}, we describe the functionality in terms of a \emph{prover} and \emph{verifier}.

\begin{functionality}
\caption{Ideal functionality $\cFunc{\Phi}{delegated}$ for a quantum channel  $\Phi : \sA \rightarrow \sC$.
\label{ideal-funct:QOTP-delegated}}
\begin{enumerate}
\item  \textbf{Create:} Upon input register $\sA$ from the
verifier, send \verb"create" to the prover and store the contents of register~$\sA$.
\item \textbf{Execute:} Upon input \verb"execute" from the prover, evaluate $\Phi$ on register $\sA$, and send the contents of the output register $\sC$ to the verifier.
\end{enumerate}
\end{functionality}

\begin{theorem}\label{thm:UC-ABOE10}Let $\Phi$ be polynomial-time quantum computable
functionality. Then there exists an efficient, quantum
interactive protocol which  statistically
quantum-UC-emulates  $\cFunc{\Phi}{delegated}$ in the case of a corrupt prover, in
the plain model, and where the only quantum power required of the verifier is to encode  the input and auxiliary quantum registers,  and to decode the output. In particular, all the interaction is classical except for the first and last messages.
\end{theorem}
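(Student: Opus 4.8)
The plan is to instantiate the protocol of Aharonov~\etal~\cite{AharonovBE10} with the encode-encrypt authentication scheme $\bQ$ and the controlled-unitary gadget used for our main result, and then to reuse almost verbatim the simulator and analysis of Sections~\ref{sec:user} and~\ref{sec:sim:analysis}. First I would specify the protocol. The verifier selects a code $E\in\mathscr{E}$ and encryption Paulis, encodes its input register $\sA$, the authenticated ancilla $\tilde\sE$ (carrying a control qubit initialised to $\ket{\mathrm{on}}$), and the authenticated magic states $\tilde\sM$ under $\bQ$, and sends these registers to the prover as the single first quantum message. The prover implements $\tildecont{U}$ on the authenticated data by the procedure of Section~\ref{sec:universal}, exchanging classical messages with the verifier for the magic-state corrections; here the verifier plays \emph{live} exactly the role played by the BR-OTP in Protocol~\ref{prot:BR-OTP}, decoding each measurement string with its secret key, returning the correction bit, and recording whether a non-zero syndrome reveals cheating. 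The prover returns the encrypted output register as the last quantum message, and the verifier decodes it, aborting if any recorded syndrome was non-zero. Because the verifier holds the keys and performs all the classical bookkeeping itself, no teleport-through-(de)authentication gadgets are needed and no one-time memory is invoked: the protocol lives in the plain model, and the only quantum operations asked of the verifier are the initial encoding and the final decoding, as claimed.

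Completeness is immediate from the machinery of Section~\ref{sec:universal}: for an honest prover the identity~\eqref{eq:universal-a} guarantees that the decoded output equals $\Phi$ applied to the verifier's input, exactly as in the honest-receiver analysis of Protocol~\ref{prot:honest-QOTP}.

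For UC-security against a corrupt prover I would construct a simulator directly analogous to Protocol~\ref{prot:sim}, but strictly simpler. Since $\Phi:\sA\to\sC$ has no prover input, the teleportation gadgets that the QOTP simulator used to \emph{extract} a receiver's input are unnecessary. Instead, since the simulator must route the genuine input straight to its single one-shot call to the ideal functionality without preprocessing, it is forced to send a \emph{dummy} authenticated input to the prover, with the control qubit set to $\ket{\mathrm{off}}$; it answers the classical queries exactly as the verifier would (these depend only on the chosen $(E,P)$), and the functionality delivers the real output straight to the honest verifier. The control-bit-off trick is what makes this sound: with the switch off, the encoded $\tildecont{U}$ acts as the identity on the dummy data, so any tampering by the prover is equivalent to an attack applied before and/or after $\Phi$ and is therefore reproducible with one-shot access to $\Phi$, while the authenticated input and control qubit hide the substitution from the prover.

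The security bound then follows by replaying Sections~\ref{sec:user} and~\ref{sec:sim:analysis} with this smaller register set: write the joint post-interaction state in the tabular representation, insert the superfluous $(\tildecont{U})^*\tildecont{U}$, expand $K_a(\tildecont{U})^*$ in the Pauli basis, apply the Pauli sandwich (Lemma~\ref{lm:pauli-sandwich}) on $\tilde\sM$, and split into the accept ($s=0$) and reject ($s\neq 0$) cases to obtain $\rho_\mathrm{real}=R_\mathrm{acc}+R_\mathrm{rej}+[\varepsilon_\mathrm{real}]$ and $\rho_\mathrm{sim}=R_\mathrm{acc}+R_\mathrm{rej}+[\varepsilon_\mathrm{sim}]$ with both error terms of trace at most $\varepsilon$, whence $\tnorm{\rho_\mathrm{real}-\rho_\mathrm{sim}}\le 2\varepsilon$. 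The main obstacle, and the one point where the delegated setting genuinely differs from the non-interactive QOTP, is bookkeeping the live multi-round classical interaction together with the verifier's accept/abort decision so that it is consistent across the two worlds: one must check that the verifier's syndrome test (equivalently, the decode-then-measure step) produces the same accept/reject statistics on the dummy computation as on the real one, and that on abort the classical register revealed to the prover is completely mixed — both of which hold by the same Pauli-twirl reasoning used in Section~\ref{sec:user:reject}. Everything else transfers directly, which is why the theorem follows from the machinery already developed for Theorem~\ref{thm:main-quantum}.
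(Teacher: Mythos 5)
Your proposal is correct and takes essentially the same route as the paper: Theorem~\ref{thm:UC-ABOE10} is proved there as a direct specialization of Theorem~\ref{thm:main-quantum}, viewing the verifier as the QOTP sender who, since she holds the input and receives the output, encodes and decodes these herself (eliminating the teleport-through-(de)authentication gadgets) and replacing the BR-OTP with live classical interaction, with the simulator and security analysis carried over unchanged. The additional detail you supply---the dummy input with control bit off and the Pauli-sandwich accept/reject bookkeeping---is precisely the Section~\ref{sec:user}--\ref{sec:sim:analysis} machinery that the paper's proof sketch invokes wholesale.
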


The proof of Theorem~\ref{thm:UC-ABOE10} follows as a special case of our main possibility result.
In the case of a general~$\Phi$, the registers that the verifier prepares in Theorem~\ref{thm:UC-ABOE10} are polynomial-size in the security parameter.
In the interactive proof scenario of Aharonov~\etal, the input to $\Phi$ is the all-$\ket{0}$ product state, the output is a single classical bit, and it suffices to implement $\cFunc{\Phi}{delegated}$ with only constant security.
Given these assumptions, the only quantum power required of the verifier is the ability to prepare constant-sized quantum registers in the first round.

\begin{proof} [Proof sketch.]
We view the verifier as the sender in the QOTP, but since the verifier has the input, and receives the output, she can encode and decode these herself, so we do away with the necessity of the encoding and decoding gadgets. Also, classical interaction is permitted, so we replace the BR-OTP with interaction. The simulator and proof are the same.
\end{proof}

\appendix
\appendixpage
\addappheadtotoc


\section{One-time programs for classical, bounded reactive functionalities ($\cFunc{}{BR-OTP}$)}

\label{sec:Appendix-BRCOTPs}

\begin{functionality}
\caption{\label{ideal-funct:b-COTP} Ideal functionality
$\cFunc{g_1,\ldots ,g_\ell}{BR-OTP}$ for a bounded,
sender-oblivious reactive functionality. Here, $g_1(a, b_1)\mapsto
(m_1, s_1)$ and $g_i(b_i, s_{i-1}) \mapsto (m_i, s_i)$ ($i=2,\ldots,
\ell$) are classical functions; $a$~represents the sender's input,
$b_i$ represents the receiver's input for round $i$ (which can
depend on the input-output behaviour of previous rounds), $s_i$
represents the internal state after round $i$, and~$m_i$ is the
message returned to the receiver after round~$i$. We assume
$s_\ell=\bot$.}
\begin{enumerate}
\item \textbf{Create:} Upon input  $a$ from the sender,
send \verb"create" to the recipient and store $a$.
\item \textbf{Execute:} Upon input $(i, b_i)$ where $i \in \{1,\ldots ,\ell\}$
from the recipient, do the following:
\begin{enumerate}
\item For $j=1,\ldots ,i-1$, if $g_j$ has not been evaluated, abort. If
$g_i$ has already been evaluated, abort.
\item Compute $g_i(b_i, s_{i-1})= (m_i, s_i)$ (if $i=1$, compute $g_1(a,b_1)= (m_1,
s_1))$.
\item Output $m_i$ to the receiver. Store $s_i$ and note that $g_i$ has been
evaluated.
\item If $m_\ell$ has just been output, delete any trace of this instance.
\end{enumerate}
\end{enumerate}
\end{functionality}

In this section, we use standard techniques to   extend Theorem~\ref{thm:COTP}  to sender-oblivious, polynomial-time
computable, \emph{bounded reactive} classical two-party functionalities (given
as Functionality~\ref{ideal-funct:b-COTP}). We achieve this result in a
straightforward way. In fact, Goyal~\etal~\cite[p.~40]{GISVW10} suggest
that the result below would follow from their work. This appendix provides all
the details, using some of the techniques of Goyal~\etal~\cite{GISVW10}.

A \emph{message authentication code (MAC)} is a pair of algorithms $(MAC, VF)$, where $MAC_{k}(m)$ constructs a tag $\sigma$ for a message $m$ under a key $k$, and $VF_{k}(m, \sigma)$ returns 1 if $\sigma$ is a valid tag for $m$ under key $k$.  A MAC is an \emph{unconditional one-time secure MAC} if
\begin{equation}
 \Pr \left[ VF_{k}(m, \sigma) = 1 : k \getsr \{0,1\}^{\kappa}, (m, \sigma) \gets \cA() \right]
\end{equation}
 is negligible in a security parameter $\kappa$ for all probabilistic algorithms $\cA$.  An example of such a MAC is as follows.  The key $k$ is a pair of $\kappa$-length binary strings $(a, b)$.  The tag for a message $m \in \{0,1\}^{\kappa}$ is $MAC_{(a,b)}(m) = a \cdot m + b$, where all operations are in $GF(2^{\kappa})$.

The basic idea of the construction (see Protocol~\ref{prot:R-OTP}) is
to create a one-time program for each next-message function~$g_i$ in
the ideal functionality $\cFunc{g_1,\ldots
g_\ell}{BR-OTP}$. For this, we need a way to guarantee that
the receiver executes the COTPs in the correct order, and also let
the receiver pass the functionality's state information from one
COTP to another without revealing this information to the receiver.
Both of these goals can be achieved via a standard authentication
and encryption mechanism: we define a family of functions $f_i$ that
are based on $g_i$ and output an unconditionally secure
authenticated encryption of the sender's internal state at the end
of round~$i$; this information should be supplied by the receiver as
an additional input for $f_{i+1}$. If the authentication fails, then
$f_i$ outputs~$\bot$.

\begin{protocol} \caption{Protocol for a bounded,
sender-oblivious reactive functionality, $\cFunc{g_1,\ldots
g_\ell}{BR-OTP}$ in the $\cFunc{f}{COTP}$-hybrid
model.} \label{prot:R-OTP}
\begin{enumerate}

\item \textbf{Key generation.}
For $1 \leq i \leq \ell-1$, the $\text{sender}$ randomly chooses keys $k_1^i$ and $k_0^i$.

\item \textbf{Definition of functions.} For each $1 \leq i \leq \ell$,
given $g_{i}$, define $f_i$ as follows.
\begin{enumerate}

\item \textbf{Inputs of $f_{i}$.} 
	\begin{itemize} 
	\item For $i = 1$, function $f_1$ takes as input from the sender a string~$a$, a key  $k_1^1$ for message authentication scheme $MAC(\cdot)$, and  key $k_0^1$  which has the same length as $s_1$. Function $f_{1}$ takes  as input from the receiver a bit-string $b_1$.
	\item For $i > 1$, function $f_i$ takes as input from the sender  keys $k_1^{i-1}$ and $k_1^i$ for message authentication scheme  $MAC(\cdot)$ and  keys $k_0^{i-1}$ and  $k_0^i$  which have the same length as $s_{i-1}$ and $s_i$, respectively. Function $f_{i}$ takes  as input from the receiver bit-strings $(b_i, c_0^{i-1}, c_1^{i-1})$.
	\end{itemize}

\item \textbf{Computation of $f_{i}$.}
	\begin{itemize}
	\item If $i=1$, compute $g_1(a, b_1)$ to obtain the message $m_1$ as well as internal state $s_1$.
	\item If $i >1$, check that $VF_{{k_1}^{i-1}}(c_0^{i-1}, c_1^{i-1})=1$. If not, output~$\bot$. Else, set $s_{i-1} = c_0^{i-1} \oplus k_0^{i-1}$. Compute $g_i(b_i, s_{i-1})$ to obtain the $i^\text{th}$ message $m_i$ as well as internal state $s_i$.
	\end{itemize}

\item \textbf{Outputs of $f_{i}$.}
	\begin{itemize}
	\item If $i < \ell$, output $(m_i, s_i \oplus k_0^i, MAC_{k_1^i}(s_i \oplus k_0^i))$.
	\item If $i=\ell$, output $(m_i, \bot)$.
	\end{itemize}
	
\end{enumerate}

\item \textbf{COTP construction.}
The sender uses $\cFunc{f_i}{COTP}$ to create one-time programs for $f_1, \ldots, f_\ell$.

\item \textbf{Evaluation.} The receiver evaluates $g_1, \ldots, g_{\ell}$ by doing the following:
	\begin{itemize}
	\item If $i=1$, run $\cFunc{f_{1}}{COTP}$ on input $b_{1}$ and obtain $(m_{1}, c_{0}^{1},c_{1}^{1})$.
	\item If $i>1$, run $\cFunc{f_{i}}{COTP}$ on input $(b_{i},c_{0}^{i-1},c_{1}^{i-1})$ and obtain $(m_{i}, c_{0}^{i},c_{1}^{i})$.
	\end{itemize}
\end{enumerate}
\end{protocol}

\begin{theorem}
\label{thm:r-COTP-COTP-hybrid} Protocol~\ref{prot:R-OTP} statistically classical-UC emulates $\cFunc{g_1,\ldots ,g_\ell}{BR-OTP}$ in the $\cFunc{f}{COTP}$-hybrid model, assuming $MAC$ is an unconditional one-time secure MAC.
\end{theorem}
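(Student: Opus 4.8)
The plan is to establish statistical classical-UC emulation by exhibiting an ideal-world simulator $\Sim$; following the paper's convention I treat the substantive case of a corrupted receiver (the sender being honest), the uncorrupted case being immediate from the correctness of the chaining in Protocol~\ref{prot:R-OTP}. By completeness of the dummy adversary it suffices to reproduce, up to negligible statistical distance, the environment's view of an execution in the $\cFunc{f}{COTP}$-hybrid model using only one-shot access to $\cFunc{g_1,\ldots,g_\ell}{BR-OTP}$.

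First I would specify $\Sim$. It internally performs the sender's key generation, sampling fresh $k_0^i,k_1^i$, and on the ideal functionality's \texttt{create} it emits one \texttt{create} message for each hybrid $\cFunc{f_i}{COTP}$ (these leak nothing beyond the fact of creation, so $\Sim$ needs no knowledge of the sender's input $a$). When the receiver queries the hybrid copy of $f_i$, $\Sim$ intercepts the input. For $i=1$ on input $b_1$, and for $i>1$ on input $(b_i,c_0^{i-1},c_1^{i-1})$ such that $VF_{k_1^{i-1}}(c_0^{i-1},c_1^{i-1})=1$ \emph{and} $(c_0^{i-1},c_1^{i-1})$ equals the pair $\Sim$ previously returned as the $f_{i-1}$-output, $\Sim$ forwards $(i,b_i)$ to the ideal functionality, obtains $m_i$, samples a uniform string $\tilde c_0^i$ of length $\abs{s_i}$, sets $\tilde c_1^i \gets MAC_{k_1^i}(\tilde c_0^i)$, and returns $(m_i,\tilde c_0^i,\tilde c_1^i)$ (or $(m_\ell,\bot)$ when $i=\ell$). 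In all other cases $\Sim$ returns $\bot$, mirroring the honest $f_i$.

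The core of the argument is a round-by-round coupling. Define the forgery event $\sF$ to be the event that at some round the receiver submits a pair $(c_0^{i-1},c_1^{i-1})$ with $VF_{k_1^{i-1}}(c_0^{i-1},c_1^{i-1})=1$ that differs from the unique pair output earlier by $f_{i-1}$. I would argue by induction on the round number that, conditioned on $\lnot\sF$, the real and simulated views are \emph{identically} distributed and the receiver submits the same $b_i$ in both worlds: the one-time-pad ciphertext $s_i\oplus k_0^i$ of the real $f_i$ is, for uniform independent $k_0^i$, uniform and independent of $s_i$, matching $\Sim$'s $\tilde c_0^i$; the tags are computed the same way; the abort behaviour on failed verification agrees; and when verification succeeds on the genuine chained ciphertext the state $s_{i-1}=c_0^{i-1}\oplus k_0^{i-1}$ recovered by the real protocol equals the internal state maintained by $\cFunc{g_1,\ldots,g_\ell}{BR-OTP}$, so the $m_i$ agree and the in-order, one-query-per-round discipline forced by the chaining matches that of the ideal functionality. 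Hence the statistical distance between the two executions is at most $\Pr[\sF]$.

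Bounding $\Pr[\sF]$ is the step I expect to be the main obstacle, as it is where the MAC assumption must be invoked with care. The key point is that each $k_1^i$ authenticates \emph{exactly one} message in an honest run, namely $s_i\oplus k_0^i$, so a round-$i$ forgery is precisely a one-time MAC forgery. I would reduce to one-time MAC security: from an environment with $\Pr[\sF]=p$, construct an adversary $\cA$ that guesses the forgery round $j\in\{1,\ldots,\ell\}$, simulates all keys except $k_1^j$, requests the single tag on $s_j\oplus k_0^j$ from its one-time oracle, and outputs the first verifying pair submitted to $f_{j+1}$ that differs from the one handed out. A correct guess (probability $1/\ell$) converts a forgery into a MAC break, so $\cA$ wins with probability at least $p/\ell$; one-time security forces $p/\ell$, hence $p$, to be negligible in $\kappa$. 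As $\ell$ is a fixed bound, the overall statistical distance is negligible, which yields the claim. The remaining bookkeeping---that $\Sim$'s \texttt{create} interface is perfectly faithful and that deletion of spent COTP instances matches the final-round behaviour of Functionality~\ref{ideal-funct:b-COTP}---is routine.
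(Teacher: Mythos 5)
Your corrupted-receiver analysis is sound and is essentially the paper's own argument: the same simulator (sample the keys yourself, intercept each call to $\cFunc{f_i}{COTP}$, verify the submitted tag, forward $b_i$ to $\cFunc{g_1,\ldots,g_\ell}{BR-OTP}$, and answer with the real $m_i$ together with a one-time-pad encryption of a dummy state and its tag), and the same conclusion that the statistical distance is bounded by the probability of a MAC forgery. Your explicit round-by-round coupling and the reduction with the $1/\ell$ round-guessing loss are in fact more careful than the paper's one-line appeal to MAC security. (One small mismatch: the paper's literal definition of a one-time secure MAC gives the forger no tag at all, whereas your reduction uses an oracle supplying one legitimate tag; the latter is the standard notion, is what the construction $MAC_{(a,b)}(m)=a\cdot m+b$ satisfies, and is clearly what the paper intends, so this is cosmetic.)

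The genuine gap is that you prove only half of the theorem. Theorem~\ref{thm:r-COTP-COTP-hybrid} is stated with no restriction on which party is corrupted, and it cannot be so restricted: Corollary~\ref{cor:reactive-COTP} and the composition step in Corollary~\ref{cor:composition} require the classical subprotocol to statistically classical-UC-emulate $\cFunc{g_1,\ldots,g_\ell}{BR-OTP}$ \emph{unqualified}; only the quantum results (Theorems~\ref{thm:main} and~\ref{thm:main-quantum}) are scoped to a corrupted receiver, so the ``paper's convention'' you invoke does not apply here. The corrupted-\emph{sender} case is not immediate from correctness of the chaining: a malicious sender chooses the inputs to each $\cFunc{f_i}{COTP}$, and in particular may supply verification/decryption keys to $f_{i+1}$ that are inconsistent with the keys given to $f_i$, so that an honest real-world receiver obtains $\bot$ partway through the execution even though the ideal functionality, left alone, would deliver correct outputs. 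A simulator for this case must extract the adversary's effective input $a$ from its COTP inputs, detect key inconsistency, and force a corresponding abort in the ideal world, and one must argue that the resulting abort behaviour is independent of the receiver's inputs; this is precisely the content of the paper's ``security against a malicious sender'' argument (cf.\ the simulator preceding Simulator~\ref{sim:BR-OTP}). Without this case, neither the theorem as stated nor the corollaries built on it are established.
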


By Theorem~\ref{thm:COTP}, there exists a non-interactive protocol
$\rho$ that classical-UC-emulates $\cFunc{}{COTP}$ in the
$\cFunc{}{OTM}$-hybrid model. Thus we have Corollary~\ref{cor:reactive-COTP} as given in the main text.

We finish this section with a proof of Theorem~\ref{thm:r-COTP-COTP-hybrid}.

\begin{proof}[Proof of Theorem~\ref{thm:r-COTP-COTP-hybrid}]
The proof proceeds by considering security against either a malicious sender or a malicious receiver.

\noindent\textbf{Security against a malicious sender.}
 Let $\cA$ be an adversary corrupting the sender in
 Protocol~\ref{prot:R-OTP}. We construct a simulator
 $\text{Sim}_{\cA}$ for $\cA$.

 The simulator is defined as follows: execute $\cA$ on input
 given by $\cZ$. This execution determines inputs for $\cF_{f_i}^{\text{COTP}}$, and in
 particular, it determines the sender's input~$a$.

We say that the senders's keys are  \emph{consistent} if, for each
$i$, the keys $k_1^{i}$ and $k_0^{i-1}$ are self-consistent between
instantiations of $\cFunc{f_i}{COTP}$ and
$\cFunc{f_{i+1}}{COTP}$ (i.e., the ideal
functionalities are called with the corresponding same inputs). If the
sender's keys are consistent, then an execution of
 Protocol~\ref{prot:R-OTP} with an honest receiver will not abort.

Thus, if  $\text{Sim}_{\cA}$ detects that the sender's keys
are \emph{not} consistent, $\text{Sim}_{\cA}$ inputs
$\mathsf{ABRT}$ into the ideal functionality
$\cF_{g_1,\ldots g_\ell}^{\text{COTP}}$. Otherwise,
$\text{Sim}_{\cA}$ inputs~$a$ for the sender.

We thus have the following:
\begin{enumerate}
\item  protocol~\ref{prot:R-OTP} is non-interactive;
\item the probability of $\mathsf{ABRT}$ in the ideal model is independent of the receiver's input;
\item an honest sender will not cause the protocol to abort.
\end{enumerate}

Together, these imply that the real and ideal networks are perfectly indistinguishable.

\noindent\textbf{Security against a malicious receiver.}

Let $\cA$ be an adversary corrupting the receiver in
Protocol~\ref{prot:R-OTP}. We construct a simulator
$\text{Sim}_{\cA}$, given as Simulator~\ref{sim:BR-OTP}.

\begin{simulator} \caption{Simulator $\text{Sim}_{\cA}$, for the proof of Protocol~\ref{prot:R-OTP}
against a malicious receiver. \label{sim:BR-OTP}}
\begin{enumerate}
\item For $i=1, \ldots, \ell$, choose keys $k_i^0$ and $k_i^0$ as in
Protocol~\ref{prot:R-OTP}.
\item Start the execution of  $\cA$ on the input given by $\cZ$; set $i = 1$.
\item \label{step-call-i}
Execute  $\cA$ until a call to the ideal functionality
$\cF_{f_{i}}^{\text{COTP}}$ occurs. Do the following in
order to simulate interaction with the ideal functionality.
\begin{enumerate}
\item For $j=1,\ldots ,i-1$, if $\cF_{f_{j}}^{\text{COTP}}$ has not already been evaluated, abort.
\item Let $\sigma_i$ be $\cA$'s input into
$\cF_{f_{i}}^{\text{COTP}}$.
\begin{enumerate}
\item if $i=1$, forward $\sigma_1$ to $\cF_{g_1,\ldots, g_\ell}^{\text{COTP}}$ and receive as response $m_1$. Choose a random
state~$w$, and return $(m_1, w \oplus k_0^1, MAC_{k_1^1}(w \oplus
k_0^1))$ to $\cA$.
\item if $i > 1$, interpret $\sigma_i$ as $(b_i, c_0^{i-1}, c_1^{i-1})$. Check that
$VF_{{k_1}^{i-1}}(c_0^{i-1}, c_1^{i-1})=1$. If not, output~$\bot$.
Forward $b_i$ to $\cF_{g_1,\ldots, g_\ell}^{\text{COTP}}$ and
receive as response $m_i$. Choose a random state $w_i$, and return
$(m_i, w_i \oplus k_0^i, MAC_{k_1^i}(w_i \oplus k_0^i))$ to
$\cA$.
\end{enumerate}
\item If $i < \ell$, return to
step~\ref{step-call-i}. Otherwise, output $\cA$'s output.
\end{enumerate}
\end{enumerate}
\end{simulator}

The real and ideal networks are indistinguishable: in the real world, the
adversary can succeed in out-of-order querying with probability at
most the probability of generating a forged MAC, which is
negligible since the MAC is secure and $\cA$ can succeed in making an OTP accept an incorrect
internal state also with negligible probability.
\end{proof}


\section{Properties of encode-encrypt authentication schemes}
\label{appendix:encode-encrypt}

\subsection{Security against Pauli attacks implies security against general attacks}
\label{sec:encode-encrypt:general}

Parts of this section are reproduced from Aharonov, Ben-Or, and Eban \cite{AharonovBE10}.

\subsubsection{How Pauli attacks affect the state of the system}
\label{sec:auth:families}

A Pauli attack on a family $\mathscr{E}$ of codes has the following form.
\begin{enumerate}

\item
The data register $\sD$ is encoded under a uniformly random choice of code $E\in\mathscr{E}$ by preparing $(\sX,\sZ)$ in the $\ket{0}$ state and applying $E$ to $(\sD,\sX,\sZ)$.

\item
A malicious attacker applies a Pauli operator $Q$ to $(\sD,\sX,\sZ)$.

\item \label{it:auth-decode}
Data is decoded by applying the inverse operator $E^*$ to $(\sD,\sX,\sZ)$.

\item \label{it:auth-syndrome}
The syndrome registers $(\sX,\sZ)$ are measured in the computational basis.
A non-zero measurement result indicates cheating.

\end{enumerate}
The first three steps of this protocol induce a channel $\Psi_Q$ of the form
\[
  \Psi_Q : \sD \to (\sD,\sX,\sZ) : \rho \mapsto \frac{1}{\abs{\mathscr{E}}} \sum_{E\in\mathscr{E}} E^*QE(\rho\ot\ket{0}\bra{0})E^*Q^*E \enspace .
\]
Let $[X],[Z]$ denote the projectors onto the state $\ket{0}$ for syndrome registers $\sX,\sZ$, respectively.
The state of the data register $\sD$ after the syndrome measurement of step \ref{it:auth-syndrome} is
\begin{align}
  \Ptr{\sX\sZ}{[X][Z]\Psi_Q(\rho)} + \Ptr{\sX\sZ}{(I-[X][Z])\Psi_Q(\rho)} \enspace .
\end{align}
Let us examine the terms associated with each measurement outcome. 
For each Pauli $Q$ we define the following partition of $\mathscr{E}$:
\begin{center}
\begin{tabularx}{\textwidth}{lX}
  $\mathscr{E}_{\sX\sZ}(Q)$: & The set of codes $E\in\mathscr{E}$ for which $Q$ acts trivially on logical data and has no error syndrome.\\
  $\mathscr{E}_{\sX\sZ!}(Q)$: & The set of codes $E\in\mathscr{E}$ for which $Q$ acts \emph{non}-trivially on logical data but has no error syndrome.\\
  $\mathscr{E}_{\sX\sZ\emptyset}(Q)$: & The set of codes $E\in\mathscr{E}$ for which $Q$ has non-zero error syndrome.
\end{tabularx}
\end{center}
For each code $E\in\mathscr{E}$ let $Q_{\ell(E)}$ denote the logical Pauli induced by $Q,E$ and observe that if $E\in\mathscr{E}_{\sX\sZ}(Q)$ then $Q_{\ell(E)}=I$.
Then
\begin{align}
  \label{eq:Pauli-attack}
  \Ptr{\sX\sZ}{[X][Z]\Psi_Q(\rho)} &=
  \underbrace{
    \frac{1}{\abs{\mathscr{E}}}\sum_{E\in\mathscr{E}_{\sX\sZ }(Q)} Q_{\ell(E)}\rho Q_{\ell(E)}^*
  }_{ \frac{\abs{\mathscr{E}_{\sX\sZ}(Q)}}{{\abs{\mathscr{E}}}} \rho }
  +
  \underbrace{
    \frac{1}{\abs{\mathscr{E}}}\sum_{E\in\mathscr{E}_{\sX\sZ!}(Q)} Q_{\ell(E)}\rho Q_{\ell(E)}^*
  }_{[\varepsilon(Q)]}\\
  \label{eq:bad-Pauli-attack}
  \Ptr{\sX\sZ}{(I-[X][Z])\Psi_Q(\rho)} &=
  \frac{1}{\abs{\mathscr{E}}}\sum_{E\in\mathscr{E}_{\sX\sZ\emptyset}(Q)} Q_{\ell(E)}\rho Q_{\ell(E)}^* \enspace .
\end{align}
If $\mathscr{E}$ is $\epsilon$-secure against Pauli attacks (according to Definition \ref{def:secure-against-Pauli-attacks}) then by definition the term marked $[\varepsilon(Q)]$ in \eqref{eq:Pauli-attack} has trace at most $\epsilon$ and so it must be that $\abs{\mathscr{E}_{\sX\sZ!}(Q)}/\abs{\mathscr{E}}\leq\epsilon$ for all $Q$.

If $Q=I$ then $Q_{\ell(E)}=I$ for all $E\in\mathscr{E}$ and $\mathscr{E}_{\sX\sZ}(Q)=\mathscr{E}$.
In this case the terms $[\varepsilon(Q)]$ and \eqref{eq:bad-Pauli-attack} both vanish and the state of the system is simply $\rho$ as required for any authentication scheme.

\subsubsection{The Pauli sandwich}

As mentioned in Section \ref{sec:techniques-Q-auth}, encode-encrypt authentication schemes have the desirable property that the Pauli encryption breaks up an arbitrary attack into a probabilistic mixture of Pauli attacks.
This property hinges upon a key lemma that we call the \emph{Pauli sandwich};
it is a direct consequence of the \emph{Pauli twirl}~\cite{dankert2009exact}.
A succinct proof can be found in Aharonov, Ben-Or, and Eban~\cite{AharonovBE10}.

\begin{lemma}[Pauli sandwich (see \cite{AharonovBE10})]
\label{lm:pauli-sandwich}

  Let $W$ be an arbitrary operator acting on $n$-qubits and let $Q,Q'$ be Pauli operators acting on $n$-qubits.
  It holds that
  \begin{align}
    \frac{1}{4^n} \sum_{\textrm{Paulis $P$}} P^*QP W P^*Q'^*P =
    \left\{
    \begin{array}{ll}
      Q W Q^* & \textrm{if $Q=Q'$}\\
      0 & \textrm{otherwise}
    \end{array}
    \right.
  \end{align}

\end{lemma}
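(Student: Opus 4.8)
The plan is to prove Lemma~\ref{lm:pauli-sandwich} by reducing the conjugations to scalars and then evaluating a character sum over the Pauli group. The key elementary fact is that conjugating one Pauli by another returns the same Pauli up to a sign: for any Paulis $P,Q$ we have $P^*QP = c(P,Q)\,Q$, where $c(P,Q)\in\set{\pm 1}$ is the commutation phase introduced in Section~\ref{sec:user} (it is $+1$ when $P,Q$ commute and $-1$ when they anticommute, and it is well defined because $P$ is unitary with $P^*=P^{-1}$ up to phase). Applying this identity to both $Q$ and $Q'^*$, and noting that $Q'^*$ differs from $Q'$ only by a global scalar so that $c(P,Q'^*)=c(P,Q')$, the summand factors as $c(P,Q)c(P,Q')\,QWQ'^*$. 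Thus the left-hand side equals $\Pa{\frac{1}{4^n}\sum_P c(P,Q)c(P,Q')}QWQ'^*$, and it remains only to evaluate the scalar sum.

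Next I would evaluate $\sum_P c(P,Q)c(P,Q')$ using the symplectic representation of the Pauli group. Writing each Pauli as a vector in $\bbF_2^{2n}$ (tracking only its $X$- and $Z$-exponents), the commutation phase is $c(P,Q)=(-1)^{\omega(P,Q)}$ for the nondegenerate symplectic form $\omega$ over $\bbF_2^{2n}$, which is bilinear. Hence $c(P,Q)c(P,Q')=(-1)^{\omega(P,Q)+\omega(P,Q')}=c(P,QQ')$, reducing the sum to $\sum_P c(P,QQ')$. By nondegeneracy, if $QQ'$ is not proportional to the identity then some Pauli $P_0$ anticommutes with it, and the involution $P\mapsto PP_0$ pairs each term with one of opposite sign, so the sum vanishes; if $QQ'\propto I$ then every term is $+1$ and the sum equals $4^n$.

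Finally I would combine the two steps and attend to the phase bookkeeping. The condition $QQ'\propto I$ is exactly the condition that $Q$ and $Q'$ agree as symplectic vectors, i.e.\ $Q=Q'$ up to phase; adopting the standard convention that the Paulis being summed are the $4^n$ Hermitian tensor products of $\set{I,X,Y,Z}$, this forces $Q=Q'$ as operators and $Q'^*=Q^*$, so the right-hand side collapses to $QWQ^*$, while in all other cases the scalar prefactor is $0$. The only genuinely delicate points are verifying $c(P,Q'^*)=c(P,Q')$ and the nondegeneracy/pairing argument for the character sum; everything else is routine bookkeeping. Since this identity is precisely the Pauli twirl of~\cite{dankert2009exact} rewritten in sandwich form, an alternative and shorter route would be to invoke the twirl directly and specialize it to the operator at hand, exactly as done in Aharonov, Ben-Or, and Eban~\cite{AharonovBE10}.
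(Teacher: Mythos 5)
Your proof is correct, but it takes a genuinely different route from the paper, because the paper gives no proof of Lemma~\ref{lm:pauli-sandwich} at all: it states the lemma with a pointer to the Pauli twirl of~\cite{dankert2009exact} and to the succinct proof in Aharonov, Ben-Or, and Eban~\cite{AharonovBE10}. What you have done is reprove that twirl from first principles: the identity $P^*QP=c(P,Q)\,Q$ reduces each summand to $c(P,Q)c(P,Q')\,QWQ'^*$, bilinearity of the symplectic form over $\bbF_2^{2n}$ collapses the phases to $c(P,QQ')$, and the character sum $\sum_P c(P,QQ')$ is killed by pairing $P$ with $PP_0$ for some $P_0$ anticommuting with $QQ'$ (such a $P_0$ exists by nondegeneracy) unless $QQ'\propto I$, in which case the sum is $4^n$. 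Beyond self-containedness, your route buys two things: it runs entirely in notation the paper already introduces (the commutation phase $c(\cdot,\cdot)$ of Section~\ref{sec:user}), and it exposes a convention the bare statement hides---the lemma is \emph{false} if Paulis are allowed to carry phases (take $Q'=-Q$, so $Q\neq Q'$, yet the left-hand side equals $-QWQ^*\neq 0$ while the right-hand side is $0$), so $Q,Q'$ must range over the $4^n$ Hermitian tensor products of $\set{I,X,Y,Z}$, exactly as you stipulate. The paper's citation buys only brevity and consistency with the literature. One small reading note: the pairing $P\mapsto PP_0$ must be understood at the level of symplectic vectors (as you set it up), since as an operator $PP_0$ can acquire a phase and leave the chosen set of $4^n$ representatives; this is harmless because $c(\cdot,QQ')$ depends only on the vector.
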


To see how the Pauli sandwich breaks up an arbitrary attack into a mixture of Pauli attacks let $U$ be any operator acting on a register of $n$-qubits, let $U=\sum_{\textrm{Paulis $Q$}} \alpha_Q Q$ be a decomposition of $U$ as a linear combination of Pauli operators, and consider the channel
\begin{align}
  \Psi : \rho \mapsto
  \frac{1}{4^n} \sum_{\textrm{Paulis $P$}} P^*UP \rho P^*U^*P
  = \frac{1}{4^n} \sum_{\textrm{Paulis $P$}} \sum_{\textrm{Paulis $Q,Q'$}} \alpha_Q\ol{\alpha_{Q'}} P^*QP \rho P^*Q'^*P
\end{align}
It follows immediately from the Pauli sandwich (Lemma \ref{lm:pauli-sandwich}) that
\begin{align}
  \Psi : \rho \mapsto \sum_{\textrm{Paulis $Q$}} \abs{\alpha_Q}^2 Q \rho Q^*
\end{align}
That is, if an adversary applies a unitary attack $U$ to a register encrypted with a uniformly random Pauli $P$ then the resulting state after decryption is indistinguishable from a simplified attack where the adversary instead applies a Pauli attack $Q$ chosen at random according to the distribution $\set{\abs{\alpha_Q}^2}$.

This observation holds for arbitrary operators $U$ (not just unitary operators) so it can easy be extended to arbitrary non-unitary channel attacks $\Phi$ by applying a similar analysis for each operator in a Kraus decomposition of $\Phi$.

\subsubsection{How general attacks affect the state of the system}
\label{sec:auth:security-arbitrary}

An arbitrary attack on an encode-encrypt authentication scheme based on a family $\mathscr{E}$ of codes has the following form.
\begin{enumerate}
\item
  A data register $\sD$ is authenticated by encoding it with a random code $E\in\mathscr{E}$ (implicitly introducing syndrome registers $\sX,\sZ$) and then encrypting the encoded $n$-qubit system $(\sD,\sX,\sZ)$ with a random Pauli $P$.
\item
  A malicious attacker applies a channel $\Phi$ to the $n$ qubits comprising the registers $(\sD,\sX,\sZ)$.
\item
  The system is decrypted by applying the Pauli $P^*$ and decoded by applying the inverse circuit $E^*$.
\item \label{it:attack:decode}
  The syndrome registers $(\sX,\sZ)$ are measured in the computational basis.
  A non-zero measurement result indicates cheating.
\end{enumerate}
As mentioned previously,
it suffices to restrict attention only to unitary attacks  $\Phi:X\mapsto UXU^*$;
security against arbitrary channels is recovered by applying the same analysis to each operator in a Kraus decomposition of $\Phi$.
The first three steps of this protocol induce a channel $\Psi_U$ of the form
\begin{align}
  \Psi_U : \sD &\to (\sD,\sX,\sZ) \\
  \rho &\mapsto \frac{1}{4^n\abs{\mathscr{E}}} \sum_{E\in\mathscr{E}} \sum_{\textrm{Paulis $P$}} E^* P^* U P E \Pa{\rho\ot\ket{0}\bra{0}} E^* P^* U^* P E
\end{align}
Let $U=\sum_{\textrm{Paulis $Q$}} \alpha_Q Q$ be a decomposition of $U$ into a linear combination of Paulis so that the channel $\Psi_U$ could equivalently be written
\begin{align}
  \Psi_U : \rho &\mapsto \frac{1}{4^n\abs{\mathscr{E}}} \sum_{E\in\mathscr{E}} \sum_{\textrm{Paulis $P$}} \sum_{\textrm{Paulis $Q,Q'$}} \alpha_Q\ol{\alpha_{Q'}} E^* P^* Q P E \Pa{\rho\ot\ket{0}\bra{0}} E^* P^* Q'^* P E \\
  &= \frac{1}{\abs{\mathscr{E}}} \sum_{E\in\mathscr{E}} \sum_{\textrm{Paulis $Q$}} \abs{\alpha_Q}^2 E^* Q E \Pa{\rho\ot\ket{0}\bra{0}} E^* Q^* E \\
  &= \sum_{\textrm{Paulis $Q$}} \abs{\alpha_Q}^2 \Psi_Q(\rho)
  \label{eq:authentication-double-sum}
\end{align}
with the first equality following form the Pauli sandwich (Lemma \ref{lm:pauli-sandwich}).

The state of the register $\sD$ after the syndrome measurement of step \ref{it:attack:decode} is
\begin{align}
  \Ptr{\sX\sZ}{[X][Z]\Psi_U(\rho)} + \Ptr{\sX\sZ}{(I-[X][Z])\Psi_U(\rho)}.
\end{align}
As before, let us examine the terms associated with each measurement outcome. 
We have
\begin{align}
  \Ptr{\sX\sZ}{[X][Z]\Psi_U(\rho)}
  &= \sum_{\textrm{Paulis $Q$}} \abs{\alpha_Q}^2 \Ptr{\sX\sZ}{[X][Z]\Psi_Q(\rho)} \\
  &= \sum_{\textrm{Paulis $Q$}} \abs{\alpha_Q}^2 
  \Pa{ \frac{\abs{\mathscr{E}_{\sX\sZ}(Q)}}{{\abs{\mathscr{E}}}} \rho + [\varepsilon(Q)] }
  \label{eq:auth-final-state}
\end{align}
where the second equality is from \eqref{eq:Pauli-attack}.
From the expression \eqref{eq:auth-final-state} one can see that the probability of acceptance is a nondecreasing function of the modulus squared of $\alpha_I$.
In particular, the more weight $U$ places on the identity in its Pauli decomposition, the better the chance that the attack $U$ is successful (and the least likely it is to have any effect on the state).

For completeness we explicitly write the term associated with outcome $I-[X][Z]$:
\begin{align}
  \Ptr{\sX\sZ}{(I-[X][Z])\Psi_U(\rho)} &= \sum_{\textrm{Paulis $Q$}} \abs{\alpha_Q}^2 \Ptr{\sX\sZ}{(I-[X][Z])\Psi_Q(\rho)} \\
  &= \sum_{\textrm{Paulis $Q$}} \abs{\alpha_Q}^2
  \Pa{
  \frac{1}{\abs{\mathscr{E}}}\sum_{E\in\mathscr{E}_{\sX\sZ\emptyset}(Q)} Q_{\ell(E)}\rho Q_{\ell(E)}^* }
\end{align}

\subsection{Measure-then-decode equals decode-then-measure for CSS codes}
\label{sec:auth:measure-decode}

CSS codes have the property that measurement of logical data in the computational basis can be implemented by bitwise (transversal) measurement of the physical qubits in the encoding, followed by a purely classical decoding process.

More concretely, suppose $E$ is a CSS code that encodes one logical qubit into $n$ physical qubits.
For each value of the bit $a\in\set{0,1}$ there exists a set $D_E(a)$ of $n$-bit strings such that the encoded logical basis state $E\ket{a}$ is an equal superposition of strings in $D_E(a)$:
\begin{align}
  E\ket{a} = \frac{1}{\sqrt{\abs{D_E(a)}}} \sum_{b\in D_E(a)} \ket{b}
\end{align}
Bitwise measurement of $E\ket{a}$ in the computational basis yields a string $b\in D_E(a)$ selected uniformly at random.
The sets $D_E(0),D_E(1)$ are disjoint so the logical measurement result $a$ can be deduced by identifying the set $D_E(a)$ from which $b$ was drawn.
Moreover, given an arbitrary $n$-bit string $c$ there is an efficient classical algorithm that computes the function
\begin{align}
  \operatorname{Decode}_E : c \mapsto (a,s)
\end{align}
where $s$ is an $n$-bit syndrome string with the property that $c\oplus s\in D_E(a)$.

Suppose instead that our logical qubit is authenticated under an encode-encrypt scheme based on a family $\mathscr{E}$ of CSS codes and fix a choice of key $(E,P)$ indicating the CSS code $E$ and Pauli encryption $P$.
In this case measurement of the logical qubit can still be implemented by bitwise measurement of the authenticated qubit.
The only difference is that the $X$-portion $P|_X$ of the Pauli key $P$ indicates a mask to be applied before the classical decoding process.
This simple process could be drawn as follows.
\begin{gather}
\begin{aligned}
\label{circuit:measure-decode}
\Qcircuit @C=1em @R=0.7em {
  \lstick{\textrm{authenticated quantum data}} & \meter & \cgate{P|_X} & \ustick{c} \cw & \multicgate{1}{\textrm{Decode}_E} & \rstick{\textrm{decoded bit $a$}} \cw \\
  & & & & \pureghost{\textrm{Decode}_E} & \rstick{\textrm{syndrome $s$ must be $0$}} \cw
}
\end{aligned}
\end{gather}
The analysis later in this paper is better facilitated by writing this process in a different form.
We claim that the above process \eqref{circuit:measure-decode} of measure-then-decode on \emph{classical} data is equivalent to the following process of decode-then-measure on \emph{quantum} data:
\begin{gather}
\begin{aligned}
\label{circuit:decode-measure}
\Qcircuit @C=1em @R=0.7em {
  \lstick{\textrm{authenticated quantum data}} & \gate{P^*} & \multigate{2}{E^*} & \meter & \rstick{\textrm{decoded bit $\ket{a}$}} \cw \\
  & & \pureghost{E^*} & \meter & \rstick{\textrm{$X$-syndrome must be $\ket{0}$}} \cw \\
  & & \pureghost{E^*} & \qw & \rstick{\textrm{$Z$-syndrome traced out}} \qw
}
\end{aligned}
\end{gather}

To prove the claim it suffices to show that the decode-then-measure circuit \eqref{circuit:decode-measure} accepts an encrypted $n$-qubit computational basis state $P\ket{c}$ if and only if $c\in D_E(a)$ for some bit $a$.
It is clear that if $\ket{c}$ does not meet this condition then it will be rejected by \eqref{circuit:decode-measure}, so suppose that it does.
It is a property of every CSS code that $\ket{c}$ can be written as a superposition of states of the form $QE\ket{a}$ where $Q$ is a purely $Z$-Pauli.
That is,
\begin{align}
  \ket{c} = \sum_\textrm{$Z$-Paulis $Q$} \alpha_{c,Q} QE\ket{a}
\end{align}
for some complex amplitudes $\set{\alpha_{c,Q}}$.
Thus, the encrypted basis state $P\ket{c}$ is accepted by \eqref{circuit:decode-measure} and decodes to $\ket{a}$ as desired.

\subsubsection{How general attacks affect the decode-then-measure process}
\label{sec:auth:decode-measure}

In Section \ref{sec:auth:security-arbitrary} we analyzed the effect of an arbitrary attack on the state of a system protected by an encode-encrypt authentication scheme.
In this section we are interested in the effect of such an attack on the decode-then-measure process when the scheme is based on a family $\mathscr{E}$ of CSS codes.
The protocol for such an attack is identical to the protocol of Section \ref{sec:auth:security-arbitrary} except that step \ref{it:attack:decode} is replaced with the following.
\begin{enumerate}
\setcounter{enumi}{3}
\item \label{it:decode-then-measure-measurement}
  The data register $\sD$ is measured in the computational basis.
  The syndrome register $\sX$ is measured in the computational basis;
  a non-zero measurement result indicates cheating.
  The syndrome register $\sZ$ is discarded.
\end{enumerate}
As usual, it suffices to restrict attention only to unitary channels $\Phi:X\mapsto UXU^*$ for some unitary $U$ with Pauli decomposition $U=\sum_{\textrm{Paulis $Q$}} \alpha_Q Q$.
The state of the register $\sD$ after the measurements step \ref{it:decode-then-measure-measurement} is
\begin{align}
  \label{eq:decode-measure-attack}
  \sum_{\textrm{Paulis $Q$}} \abs{\alpha_Q}^2
  \sum_{a\in\set{0,1}} \ket{a}\bra{a} \Pa{
    \sum_{M\in\set{[X],I-[X]}}
    \Ptr{\sX\sZ}{M\Psi_Q(\rho)}
  } \ket{a}\bra{a}.
\end{align}
As before, let us examine the terms associated with each measurement outcome $\set{[X],I-[X]}$.
The analysis here is slightly more complicated than in previous sections, owing to the fact that only the $X$-syndrome is verified in the decode-then-measure protocol.
For each Pauli $Q$ we define the following partition of $\mathscr{E}$:
\begin{center}
\begin{tabularx}{\textwidth}{lX}
  $\mathscr{E}_{\sX}(Q)$: & The set of codes $E\in\mathscr{E}$ for which $Q_{\ell(E)}$ is a purely $Z$-Pauli and $Q$ has no $X$-error syndrome.\\
  $\mathscr{E}_{\sX!}(Q)$: & The set of codes $E\in\mathscr{E}$ for which $Q_{\ell(E)}$ has nontrivial $X$-component but $Q$ has no $X$-error syndrome.\\
  $\mathscr{E}_{\sX\emptyset}(Q)$: & The set of codes $E\in\mathscr{E}$ for which $Q_{\ell(E)}$ has non-zero $X$-error syndrome.
\end{tabularx}
\end{center}
Then for each Pauli $Q$ we have
\begin{align}
  \label{eq:Pauli-attack-measure}
  \Ptr{\sX\sZ}{[X]\Psi_Q(\rho)} &=
    \frac{1}{\abs{\mathscr{E}}}\sum_{E\in\mathscr{E}_{\sX }(Q)} Q_{\ell(E)}\rho Q_{\ell(E)}^*
  +
  \underbrace{
    \frac{1}{\abs{\mathscr{E}}}\sum_{E\in\mathscr{E}_{\sX!}(Q)} Q_{\ell(E)}\rho Q_{\ell(E)}^*
  }_{[\varepsilon_\sX(Q)]}\\
  \label{eq:-bad-Pauli-attack-measure}
  \Ptr{\sX\sZ}{(I-[X])\Psi_Q(\rho)} &=
  \frac{1}{\abs{\mathscr{E}}}\sum_{E\in\mathscr{E}_{\sX\emptyset}(Q)} Q_{\ell(E)}\rho Q_{\ell(E)}^* \enspace .
\end{align}
It is a property of every CSS code $E$ that if $Q$ is a purely $X$- or $Z$-Pauli then the logical Pauli $Q_{\ell(E)}$ induced by $E$ is also a purely $X$- or $Z$-Pauli.
Thus, if $\mathscr{E}$ is $\epsilon$-secure against Pauli attacks and if $Q_{\ell(E)}$ has nontrivial $X$-portion then $Q$ must have nontrivial $X$-syndrome with probability at most $\varepsilon$ taken over the choice of code $E\in\mathscr{E}$.
In particular, the term marked $[\varepsilon_\sX(Q)]$ in \eqref{eq:Pauli-attack-measure} has trace at most $\epsilon$ and it must be that $\abs{\mathscr{E}_{\sX!}(Q)}/\abs{\mathscr{E}}\leq\epsilon$.

On the other hand, if $Q_{\ell(E)}$ is a purely $Z$-Pauli (which is always the case when $E\in\mathscr{E}_\sX(Q)$) then
\begin{align}
  \ket{a}\bra{a}Q_{\ell(E)}\rho Q_{\ell(E)}^*\ket{a}\bra{a} = \ket{a}\bra{a}\rho\ket{a}\bra{a}
\end{align}
for each $a\in\set{0,1}$.
It follows that the term 
in \eqref{eq:decode-measure-attack} associated with outcome $[X]$ is equal to
\begin{align}
  \label{eq:measure-accept}
  \sum_{\textrm{Paulis $Q$}} \abs{\alpha_Q}^2
  \sum_{a\in\set{0,1}} \ket{a}\bra{a} \Pa{\frac{\abs{\mathscr{E}_{\sX}(Q)}}{\abs{\mathscr{E}}}\rho + [\varepsilon_\sX(Q)]} \ket{a}\bra{a}.
\end{align}
Moreover, if $Q$ is a purely $Z$-Pauli then for every $E\in\mathscr{E}$ it holds that $Q_{\ell(E)}$ is also a purely $Z$-Pauli and that $Q$ has no $X$-syndrome under $E$.
That is, $\mathscr{E}_\sX(Q)=\mathscr{E}$ and the terms $[\varepsilon_\sX(Q)]$ and \eqref{eq:-bad-Pauli-attack-measure} both vanish.
Thus, one can see from the expression \eqref{eq:measure-accept} that the probability of acceptance
is a nondecreasing function of $\sum_\textrm{$Z$-Paulis $Q$} \abs{\alpha_Q}^2$.
In particular, the more weight $U$ places on purely $Z$-Paulis in its Pauli decomposition, the better the chance that the attack $U$ is successful (and the least likely it is to have any effect on the measured state).

For completeness we explicitly write the term in \eqref{eq:decode-measure-attack} associated with outcome $I - [X]$:
\begin{align}
  \sum_{\textrm{Paulis $Q$}} \abs{\alpha_Q}^2
  \sum_{a\in\set{0,1}} \ket{a}\bra{a} \Pa{
    \frac{1}{\abs{\mathscr{E}}}\sum_{E\in\mathscr{E}_{\sX\emptyset}(Q)} Q_{\ell(E)}\rho Q_{\ell(E)}^* 
  } \ket{a}\bra{a}.
\end{align}


\section{Analysis of teleportation}
\label{appendix:teleportation}

Suppose that a pair of $n$-qubit registers $(\mathsf{In},\mathsf{Out})$ is prepared in a ``teleport-through-$C$'' state for some $n$-qubit unitary $C$.
Such a state is constructed by preparing $n$ copies of the entangled state $(\ket{00}+\ket{11})/\sqrt{2}$ in the registers $(\mathsf{In},\mathsf{Out})$ and then applying $C$ to register $\mathsf{Out}$.
Suppose further that an $n$-qubit register $\sD$ is prepared in an arbitrary pure state $\ket{\psi}$ with the intention that this state be teleported through $C$ to qubit $\mathsf{Out}$ by way of a standard Bell measurement on $(\sD,\mathsf{In})$.

The Bell measurement is implemented in the usual way by applying an $n$-fold Bell rotation $B$ (each composed of $\cnot$ and $H$ gates) to the registers $(\sD,\mathsf{In})$ followed by a measurement of those two registers in the computational basis.
After this measurement the registers $(\sD,\mathsf{In})$ are in the classical basis state $\ket{T}$ indicating a uniformly random Pauli correction $T$.
Conditioned on this measurement outcome, it is easily seen that the pure state of register $\mathsf{Out}$ is $CT\ket{\psi}$.

In the basic teleportation protocol $C$ is the identity and so the state $\ket{\psi}$ can be recovered by applying the appropriate Pauli correction $T$, thus ``teleporting'' the state $\ket{\psi}$ from register $\sD$ to register $\mathsf{Out}$.
If $C$ is a Clifford circuit then the state $C\ket{\psi}$ can be recovered by applying an appropriate Pauli correction $T_C$, thus ``teleporting'' the state $\ket{\psi}$ from register $\sD$ ``through $C$'' and into register $\mathsf{Out}$.

\subsection{Teleportation under attack}
\label{sec:teleportation:attack}

Now suppose that all three registers $(\sD,\mathsf{In},\mathsf{Out})$ are passed through a channel $\Phi$ prior to the Bell measurement.
One could think of $\Phi$ as noise, or perhaps a malicious attack on the registers.
How does $\Phi$ affect the teleportation?
In particular, what is the state of the system after the Bell measurement is complete?

Let's start with the special case where $\Phi$ is a product unitary $U=U_\sD\ot U_\mathsf{In}\ot U_\mathsf{Out}$.
In this case the pure state of the entire system after the Bell rotation is easily seen to be
\begin{align}
  \label{eq:teleport-product}
  \frac{1}{2^{n/2}}\sum_{\textrm{Paulis $T$}} \ket{T} \ot U_\mathsf{Out} C U_\mathsf{In}^\trans T U_\sD \ket{\psi}.
\end{align}
Measurement of $(\sD,\mathsf{In})$ then yields a uniformly random outcome $T$ leaving register $\mathsf{Out}$ in the pure state
\begin{align}
  U_\mathsf{Out} C U_\mathsf{In}^\trans T U_\sD \ket{\psi}.
\end{align}
The mapping \eqref{eq:teleport-product} is linear in each of $U_\sD,U_\mathsf{In},U_\mathsf{Out}$ so it can be used to deduce the effect of arbitrary, possibly non-product unitaries $U$ on the state of the registers.
This is accomplished by decomposing $U$ into a linear combination of product unitaries and then applying the above identity to each term in that decomposition.
For example, let
\begin{align}
  \label{eq:U-Paulis}
  U = \sum_{\textrm{Paulis $P$}} \alpha_P P
\end{align}
be a decomposition of $U$ into a linear combination of Pauli operators of the form $P=P_\sD\ot P_\mathsf{In}\ot P_\mathsf{Out}$.
By the above analysis it holds that the pure state of the entire system after the Bell rotation is given by
\begin{align}
  \label{eq:teleport-unitary}
  \frac{1}{2^{n/2}}\sum_{\textrm{Paulis $T$}} \ket{T} \ot \Pa{\sum_{\textrm{Paulis $P$}} \alpha_P P_\mathsf{Out} C P_\mathsf{In}^\trans T P_\sD}\ket{\psi}.
\end{align}
Measurement of $(\sD,\mathsf{In})$ yields an outcome $T$ leaving register $\mathsf{Out}$ in the unnormalized pure state
\begin{align}
  \Pa{\sum_{\textrm{Paulis $P$}} \alpha_P P_\mathsf{Out} C P_\mathsf{In}^\trans T P_\sD}\ket{\psi}.
\end{align}
The distribution of measurement outcomes $T$ obtained by measurement of $(\sD,\mathsf{In})$ need not be uniform, owing to the potential for interference in the sum over amplitudes $\alpha_P$.

This analysis applies even to non-unitary operators $U$.
Thus, a similar expression can be derived for arbitrary, possibly non-unitary channels $\Phi$ by applying the above analysis to each individual Kraus operator in a Kraus decomposition of $\Phi$.

\subsection{Teleportation under attack, tabular analysis}
\label{sec:tabular-representation}

It is a useful exercise to repeat the analysis of the previous section in light of the tabular notation introduced in Section \ref{sec:sim:tabular-representation}.
Let us recall the specification of the attack on teleportation:
\begin{enumerate}

\item
Registers $(\mathsf{In},\mathsf{Out})$ are prepared in the pure state $C\ket{\phi^+}$ where $\ket{\phi^+}$ is shorthand for $n$ EPR pairs.

\item
An arbitrary attack unitary $U$ is applied to $(\sD,\mathsf{In},\mathsf{Out})$.

\item
A Bell measurement is applied to $(\sD,\mathsf{In})$ by via Bell rotation $B$ followed by a measurement in the computational basis, which is denoted $\set{\ket{T}\bra{T}}$.

\end{enumerate}
The entire procedure may be viewed as a channel
\begin{align}
  \Phi:\sD &\to (\sD,\mathsf{In},\mathsf{Out})\\
  \rho &\mapsto \sum_{\textrm{Paulis $T$}} \ket{T}\bra{T} B U C \Pa{ \rho\ot\ket{\phi^+}\bra{\phi^+} } C^* U^* B^* \ket{T}\bra{T} \enspace .
\end{align}
For each fixed Pauli $T$ consider the Kraus operator
\begin{align}
  \label{eq:teleportation-Kraus}
  \ket{T}\bra{T} B U C\ket{\phi^+}
\end{align}
belonging to the channel $\Phi$.
This Kraus operator is represented by the following table.
\begin{align} \label{tab:teleportation-Kraus}
  \renewcommand{\arraystretch}{1.2}
  \begin{array}{|c||c|c|c|c|c|} \hline
  \sD & \multirow{2}{*}{$\ket{T}\bra{T}$} & \multirow{2}{*}{$B$} & \multirow{3}{*}{$U$} & & \\
  \cline{1-1}\cline{5-6}
  \mathsf{In} & & & & & \multirow{2}{*}{$\ket{\phi^+}$} \\
  \cline{1-3}\cline{5-5}
  \mathsf{Out} & & & & C & \\ \hline
  \end{array}
\end{align}
Substituting the Pauli-decomposition \eqref{eq:U-Paulis} of $U$ into the Kraus operator \eqref{eq:teleportation-Kraus} yields
\begin{align}
  \eqref{eq:teleportation-Kraus} = \sum_{\textrm{Paulis $P$}} \alpha_P \ket{T}\bra{T} B P C\ket{\phi^+} \enspace .
\end{align}
Fix a choice of Pauli $P=P_\sD\ot P_\mathsf{In}\ot P_\mathsf{Out}$ and consider the operator $\ket{T}\bra{T} B P C\ket{\phi^+}$.
This operator can be written in tabular form
\begin{align} \label{tab:teleportation-Kraus-Paulis}
  \renewcommand{\arraystretch}{1.2}
  \begin{array}{|c||c|c|c|c|c|} \hline
  \sD & \multirow{2}{*}{$\ket{T}\bra{T}$} & \multirow{2}{*}{$B$} & P_\sD & & \\
  \cline{1-1}\cline{4-6}
  \mathsf{In} & & & P_\mathsf{In} & & \multirow{2}{*}{$\ket{\phi^+}$} \\
  \cline{1-5}
  \mathsf{Out} & & & P_\mathsf{Out} & C & \\ \hline
  \end{array}
\end{align}
so that
\begin{align}
  \boxed{\eqref{tab:teleportation-Kraus}} = \sum_{\textrm{Paulis $P$}} \alpha_P \boxed{\eqref{tab:teleportation-Kraus-Paulis}}
\end{align}
The observations of the previous sections can be phrased as an identity between tables:
\begin{align} \label{tab:teleportation}
  \boxed{\eqref{tab:teleportation-Kraus-Paulis}} =
  \renewcommand{\arraystretch}{1.2}
  \begin{array}{|c||c|c|c|c|c|c|} \hline
  \sD & & & & & & \multirow{2}{*}{$\frac{1}{2^{n/2}}\ket{T}$} \\
  \cline{1-6}
  \mathsf{In} & & & & & & \\
  \hline
  \mathsf{Out} & P_\mathsf{Out} & C & P_\mathsf{In}^\trans & T & P_\sD & \\ \hline
  \end{array}
\end{align}
Often it is more convenient to simply remove the rows $(\sD,\mathsf{In})$ and write
\begin{align} \label{tab:teleportation-two}
  \renewcommand{\arraystretch}{1.2}
  \begin{array}{|c||c|c|c|c|c|} \hline
  \mathsf{Out} & P_\mathsf{Out} & C & P_\mathsf{In}^\trans & T & P_\sD \\ \hline
  \end{array}
\end{align}
so that 
\begin{align}
  \boxed{\eqref{tab:teleportation-Kraus}} = \frac{1}{2^{n/2}}\ket{T} \ot \sum_{\textrm{Paulis $P$}} \alpha_P \boxed{\eqref{tab:teleportation-Kraus-Paulis}}
\end{align}

\section*{Acknowledgements}
\addcontentsline{toc}{section}{Acknowledgements}

We gratefully acknowledge helpful discussions with Daniel Gottesman, Harry Buhrman, Christian Schaffner, Bruce Richmond and Dominique Unruh. 
A.B.\ acknowledges support from the Canadian Institute for Advanced Research (CIFAR), Canada's NSERC and Industry Canada.
G.G.\ acknowledges support from  Industry Canada, Ontario's Ministry of Research and Innovation, NSERC, DTO-ARO, CIFAR, and QuantumWorks.
Part of this research conducted while D.S.\ was a visitor at the University of Waterloo's IQC.

 \small
\bibliographystyle{halphads}

\begin{thebibliography}{BOCG{\etalchar{+}}06}
 \providecommand{\doi}[1]{{\sc doi}:\href{http://dx.doi.org/#1}{#1}}
 \providecommand{\urlprefix}{{\sc url} }
 \providecommand{\eprintprefix}{{\sc eprint} }

\bibitem[Aar09]{Aaronson09}
Scott Aaronson.
\newblock Quantum copy-protection and quantum money.
\newblock In {\em Proc. 24th IEEE Conference on Computational Complexity (CCC)
  2009}, pp. 229--242, 2009.
\newblock \doi{10.1109/CCC.2009.42}.

\bibitem[ABOE10]{AharonovBE10}
Dorit Aharonov, Michael Ben-Or, and Elad Eban.
\newblock Interactive proofs for quantum computations.
\newblock In {\em Proc. Innovations in Computer Science (ICS) 2010}, pp.
  453--469, 2010.
\newblock
  \eprintprefix\href{http://arxiv.org/abs/0810.5375v2}{arXiv:0810.5375v2},
  \urlprefix\url{http://conference.itcs.tsinghua.edu.cn/ICS2010/content/papers/35.html}.

\bibitem[AC12]{AC12}
Scott Aaronson and Paul Christiano.
\newblock Quantum money from hidden subspaces.
\newblock In {\em Proc. 44th Symposium on Theory of Computing (STOC) 2012}, pp.
  41--60, 2012.
\newblock \doi{10.1145/2213977.2213983}.
\newblock Full version availables as
  \href{http://arxiv.org/abs/1203.4740}{arXiv:1203.4740}.

\bibitem[BB84]{bennett1984quantum}
Charles~H. Bennett and Gilles Brassard.
\newblock Quantum cryptography: Public key distribution and coin tossing.
\newblock In {\em Proc. International Conf. on Computers, Systems and Signal
  Processing}, pp. 175--179. Bangalore, India, 1984.

\bibitem[BCG{\etalchar{+}}02]{barnum2002authentication}
Howard Barnum, Claude Cr\'{e}peau, Daniel Gottesman, Adam Smith, and Alain
  Tapp.
\newblock Authentication of quantum messages.
\newblock In {\em Proc. 43rd IEEE Symposium on Foundations of Computer Science
  (FOCS) 2002}, pp. 449--458, 2002.
\newblock \doi{10.1109/SFCS.2002.1181969}.
\newblock Full version available as
  \href{http://arxiv.org/abs/quant-ph/0205128}{arXiv:quant-ph/0205128}.

\bibitem[BCS12]{BCS12}
Harry Buhrman, Matthias Christandl, and Christian Schaffner.
\newblock Complete insecurity of quantum protocols for classical two-party
  computation.
\newblock {\em Physical Review Letters}, {\bf 109}:160501, Oct 2012.
\newblock \doi{10.1103/PhysRevLett.109.160501}.

\bibitem[BFGH10]{BFGH10}
Debajyoti Bera, Stephen Fenner, Frederic Green, and Steve Homer.
\newblock Efficient universal quantum circuits.
\newblock {\em Quantum Information and Computation}, {\bf 10}(1):16--28,
  January 2010.
\newblock \eprintprefix\href{http://arxiv.org/abs/0804.2429}{arXiv:0804.2429},
  \urlprefix\url{http://www.rintonpress.com/journals/qiconline.html#v10n12}.

\bibitem[BFK09]{broadbent2009universal}
Anne Broadbent, Joseph Fitzsimons, and Elham Kashefi.
\newblock Universal blind quantum computation.
\newblock In {\em Proc. 50th IEEE Symposium on Foundations of Computer Science
  (FOCS) 2009}, pp. 517--526. IEEE, 2009.
\newblock \doi{10.1109/FOCS.2009.36}.
\newblock \eprintprefix\href{http://arxiv.org/abs/0807.4154}{arXiv:0807.4154}.

\bibitem[BGI{\etalchar{+}}01]{BGIRSVY01}
Boaz Barak, Oded Goldreich, Russell Impagliazzo, Steven Rudich, Amit Sahai,
  Salil Vadhan, and Ke~Yang.
\newblock On the (im)possibility of obfuscating programs.
\newblock In {\em Advances in Cryptology -- Proc. {CRYPTO} 2001}, {\em LNCS},
  volume 2139, pp. 1--18. Springer, 2001.
\newblock \doi{10.1007/3-540-44647-8\_1}.
\newblock Full version available as
  \url{http://www.wisdom.weizmann.ac.il/~oded/p_obfuscate.html}.

\bibitem[BMP{\etalchar{+}}00]{BoykinM+99}
P.~Oscar Boykin, Tal Mor, Matthew Pulver, Vwani Roychowdhury, and Farrokh
  Vatan.
\newblock On universal and fault-tolerant quantum computing.
\newblock {\em Information Processing Letters}, {\bf 75}:101--107, 2000.
\newblock
  \eprintprefix\href{http://arxiv.org/abs/quant-ph/9906054}{arXiv:quant-ph/9906054v1}.

\bibitem[BOCG{\etalchar{+}}06]{Ben-OrC+06}
Michael Ben-Or, Claude Cr{\'e}peau, Daniel Gottesman, Avinatan Hassidim, and
  Adam Smith.
\newblock Secure multiparty quantum computation with (only) a strict honest
  majority.
\newblock In {\em Proc. 47th IEEE Symposium on Foundations of Computer Science
  (FOCS) 2006}, pp. 249--260, 2006.
\newblock \doi{10.1109/FOCS.2006.68}.
\newblock
  \eprintprefix\href{http://arxiv.org/abs/0801.1544v1}{arXiv:0801.1544v1}.

\bibitem[BOM04]{BOM04}
Michael {B}en Or and Dominic Mayers.
\newblock General security definition and composability for quantum \&
  classical protocols.
\newblock \href{http://arxiv.org/abs/quant-ph/0409062}{arXiv:quant-ph/0409062},
  2004.

\bibitem[Can01]{Can01}
Ran Canetti.
\newblock Universally composable security: a new paradigm for cryptographic
  protocols.
\newblock In {\em Proc. 42nd IEEE Symposium on Foundations of Computer Science
  (FOCS) 2001}, pp. 136--145, 2001.
\newblock \doi{10.1109/SFCS.2001.959888}.
\newblock Updated version available at \url{http://eprint.iacr.org/2000/067}.

\bibitem[DCEL09]{dankert2009exact}
Christoph Dankert, Richard Cleve, Joseph Emerson, and Etera Livine.
\newblock Exact and approximate unitary 2-designs and their application to
  fidelity estimation.
\newblock {\em Physical Review A}, {\bf 80}(1):012304, 2009.
\newblock \doi{10.1103/PhysRevA.80.012304}.

\bibitem[DN06]{DawsonN06}
Christopher Dawson and Michael Nielsen.
\newblock The {S}olovay--{K}itaev algorithm.
\newblock {\em Quantum Information and Computation}, {\bf 6}(1):81--95, 2006.
\newblock
  \eprintprefix\href{http://arxiv.org/abs/quant-ph/0505030}{arXiv:quant-ph/0505030},
  \urlprefix\url{http://www.rintonpress.com/journals/qiconline.html#v6n1}.

\bibitem[DNS12]{DNS12}
Fr\'ed\'eric Dupuis, Jesper~Buus Nielsen, and Louis Salvail.
\newblock Actively secure two-party evaluation of any quantum operation.
\newblock In {\em Advances in Cryptology -- Proc. CRYPTO 2012}, {\em LNCS},
  volume 7417, pp. 794--811. Springer, 2012.
\newblock \doi{10.1007/978-3-642-32009-5\_46}.

\bibitem[dSR07]{SR07}
Paulo~Ben\'{i}cio de~Sousa and Rubens~Viana Ramos.
\newblock Universal quantum circuit for {$N$}-qubit quantum gate: a
  programmable quantum gate.
\newblock {\em Quantum Information and Computation}, {\bf 7}(3):228--242, March
  2007.
\newblock
  \eprintprefix\href{http://arxiv.org/abs/quant-ph/0602174}{arXiv:quant-ph/0602174},
  \urlprefix\url{http://www.rintonpress.com/journals/qiconline.html#v7n3}.

\bibitem[GC99]{GottesmanC99}
Daniel Gottesman and Issac Chuang.
\newblock Demonstrating the viability of universal quantum computation using
  teleportation and single-qubit operations.
\newblock {\em Nature}, {\bf 402}:390--393, 1999.
\newblock \doi{10.1038/46503}.
\newblock
  \href{http://arxiv.org/abs/quant-ph/9908010v1}{arXiv:quant-ph/9908010v1}.

\bibitem[GIS{\etalchar{+}}10]{GISVW10}
Vipul Goyal, Yuval Ishai, Amit Sahai, Ramarathnam Venkatesan, and Akshay Wadia.
\newblock Founding cryptography on tamper-proof hardware tokens.
\newblock In {\em Proc. Theory of Cryptography Conference (TCC) 2010}, {\em
  LNCS}, volume 5978, pp. 308--326. Springer, 2010.
\newblock \doi{10.1007/978-3-642-11799-2\_19}.
\newblock Full version available at \url{http://eprint.iacr.org/2010/153}.

\bibitem[GKR08]{GKR08}
Shafi Goldwasser, Yael Kalai, and Guy Rothblum.
\newblock One-time programs.
\newblock In {\em Advances in Cryptology -- Proc. {CRYPTO} 2008}, {\em LNCS},
  volume 5157, pp. 39--56. Springer, 2008.
\newblock \doi{10.1007/978-3-540-85174-5\_3}.

\bibitem[MS10]{MS10}
Michele Mosca and Douglas Stebila.
\newblock Quantum coins.
\newblock In {\em Error-Correcting Codes, Finite Geometries and Cryptography},
  {\em Contemporary Mathematics}, volume 523, pp. 35--47. American Mathematical
  Society, 2010.
\newblock \eprintprefix\href{http://arxiv.org/abs/0911.1295}{arXiv:0911.1295}.

\bibitem[NC97]{NC97}
Michael~A. Nielsen and Isaac~L. Chuang.
\newblock Programmable quantum gate arrays.
\newblock {\em Physical Review Letters}, {\bf 79}:321--324, 1997.
\newblock \doi{10.1103/PhysRevLett.79.321}.

\bibitem[PW01]{PW01}
Birgit Pfitzmann and Michael Waidner.
\newblock A model for asynchronous reactive systems and its application to
  secure message transmission.
\newblock In {\em Proc. 22nd IEEE Symposium on Security \& Privacy (S\&P)
  2001}, pp. 184--200. IEEE, 2001.
\newblock \doi{10.1109/SECPRI.2001.924298}.
\newblock Full version available at \url{http://eprint.iacr.org/2000/066}.

\bibitem[Smi81]{Smid81}
Miles~E. Smid.
\newblock Integrating the {D}ata {E}ncryption {S}tandard into computer
  networks.
\newblock {\em IEEE Transactions on Communications}, {\bf 29}(6):762--772, June
  1981.
\newblock \doi{10.1109/TCOM.1981.1095071}.

\bibitem[SP00]{SP00}
Peter Shor and John Preskill.
\newblock Simple proof of security of the {BB84} quantum key distribution
  protocol.
\newblock {\em Physical Review Letters}, {\bf 85}:441--444, 2000.
\newblock \doi{10.1103/PhysRevLett.85.441}.

\bibitem[Unr04]{U04}
Dominique Unruh.
\newblock Simulatable security for quantum protocols.
\newblock \url{arXiv:quant-ph/0409125}, 2004.

\bibitem[Unr10]{U10}
Dominique Unruh.
\newblock Universally composable quantum multi-party computation.
\newblock In {\em Advances in Cryptology -- Proc. EUROCRYPT 2010}, {\em LNCS},
  volume 6110, pp. 486--505. Springer, 2010.
\newblock \doi{10.1007/978-3-642-13190-5\_25}.
\newblock Full version availables as
  \href{http://arxiv.org/abs/0910.2912}{arXiv:0910.2912}.

\bibitem[Wie83]{wiesner1983conjugate}
Stephen Wiesner.
\newblock Conjugate coding.
\newblock {\em ACM SIGACT News}, {\bf 15}(1):78--88, 1983.
\newblock \doi{10.1145/1008908.1008920}.
\newblock Original article written circa 1970.

\bibitem[WZ82]{wootters1982single}
William~K. Wootters and Wojciech~H. Zurek.
\newblock A single quantum cannot be cloned.
\newblock {\em Nature}, {\bf 299}(5886):802--803, 1982.
\newblock \doi{10.1038/299802a0}.

\end{thebibliography}
\addcontentsline{toc}{section}{References}
\newcommand{\etalchar}[1]{$^{#1}$}

\end{document}